\DeclareMathOperator*{\esssup}{ess\,sup}
\newcommand*{\longhookrightarrow}{\ensuremath{\lhook\joinrel\relbar\joinrel\rightarrow}}
\title{{\scshape\textbf{Existence of a Unique Local Solution to the Many-body Maxwell-Schrödinger Initial Value Problem}}}
\author{Kim Petersen}
\date{\small Department of Mathematical Sciences, University of Copenhagen,\\
Universitetsparken 5, DK-2100 Copenhagen, Denmark, Email: kp@math.ku.dk\\
\vspace{0.2cm} \copyright\ 2013 by the author}
\theoremstyle{plain}
\newtheorem{saetning}{Theorem}
\newtheorem{korollar}[saetning]{Corollary}
\newtheorem{lemma}[saetning]{Lemma}
\theoremstyle{plain}
\newtheorem{bemaerkning}[saetning]{Remark}
\theoremstyle{nonumberplain}
\newtheorem{proof}{Proof}
\begin{document}
\maketitle
\begin{abstract}
We study the many-body problem of charged particles interacting with their self-generated electromagnetic field. We model the dynamics of the particles by the many-body Maxwell-Schrödinger system, where the particles are treated quantum mechanically and the electromagnetic field is a classical quantity. We prove the existence of a unique local in time solution to this nonlinear initial value problem using a contraction mapping argument.
\\ \\
Mathematics Subject Classification 2010: 81V70, 35Q41, 35Q61
\end{abstract}

\chapter{Introduction}
The three-dimensional many-body Maxwell-Schrödinger system in Coulomb gauge is a system of partial differential equations that models the dynamics of several charged point particles interacting via their self-generated electromagnetic fields -- in Gaussian units it reads
\begin{align}
\begin{split}\label{MSP}
\Box\bm{A}&=\frac{4\pi}{c}\sum_{j=1}^N P\bm{J}_j[\psi,\bm{A}],\\
i\hbar \partial_t\psi&=\Bigl(\sum_{j=1}^N \frac{1}{2m_j}\nabla_{j,\bm{A}}^2+\sum_{1\leq j<k\leq N} \frac{Q_jQ_k}{|\bm{x}_j-\bm{x}_k|}+\mathscr{E}_{\mathrm{EM}}[\bm{A},\partial_t\bm{A}]\Bigr)\psi,
\end{split}
\end{align}
where $\hbar>0$ is the reduced Planck constant, $c>0$ is the speed of light, $N\in\mathbb{N}$ is the number of particles, $m_1,\ldots,m_N>0$ are the particles' respective masses, $Q_1,\ldots,Q_N\in\mathbb{R}$ are their charges, $\psi(t):\mathbb{R}^{3N}\to\mathbb{C}$ is the wave function, $\bm{A}(t):\mathbb{R}^3\to\mathbb{R}^3$ is the vector potential, $\nabla_{j,\bm{A}}=i\hbar \nabla_{\bm{x}_j}+\frac{Q_j}{c} \bm{A}(\bm{x}_j)$ is the covariant derivative with respect to $\bm{A}$ acting on the $j$'th particle, $\Box=\frac{1}{c^2}\partial_t^2-\Delta$ denotes the d'Alembertian, $P=1-\nabla\mathrm{div}\Delta^{-1}$ is the Helmholtz projection, $\mathscr{E}_{\mathrm{EM}}[\bm{A},\partial_t\bm{A}](t)$ is the field energy
\begin{align*}
\mathscr{E}_{\mathrm{EM}}[\bm{A},\partial_t\bm{A}](t)=\frac{1}{8\pi}\int_{\mathbb{R}^3}\Bigl(|\nabla\times\bm{A}(t)(\bm{y})|^2+\Bigl|\frac{1}{c}\partial_t\bm{A}(t)(\bm{y})\Bigr|^2\Bigr)\,\mathrm{d}\bm{y}
\end{align*}
and $\bm{J}_j[\psi,\bm{A}](t)$ denotes the $j$'th particle's probability current density
\begin{align*}
\bm{J}_j[\psi,\bm{A}](t):\mathbb{R}^3\ni \bm{x}_j\mapsto\Bigl(-\frac{Q_j}{m_j}\mathrm{Re}\int_{\mathbb{R}^{3(N-1)}}\hspace{-0.5cm}\overline{\psi}(t)(\bm{x})\nabla_{j,\bm{A}}\psi(t)(\bm{x})\,\mathrm{d}\bm{x}_j'\Bigr)\in\mathbb{R}^3
\end{align*}
where $\bm{x}=(\bm{x}_1,\ldots,\bm{x}_N)$ and $\bm{x}_j'=(\bm{x}_1,\ldots,\bm{x}_{j-1},\bm{x}_{j+1},\ldots,\bm{x}_N)$. That we have chosen Coulomb gauge means that the magnetic vector potential $\bm{A}$ should satisfy
\begin{align}
\mathrm{div}\bm{A}(t)=0.\label{Coulombgauge}
\end{align}
As we will see later we might just as well study the system
\begin{align}
\begin{split}\label{MSPmaerke}
\Box\bm{A}&=\frac{4\pi}{c}\sum_{j=1}^N P\bm{J}_j[\psi,\bm{A}],\\
i\hbar \partial_t\psi&=\Bigl(\sum_{j=1}^N \frac{1}{2m_j}\nabla_{j,\bm{A}}^2+\sum_{1\leq j<k\leq N} \frac{Q_jQ_k}{|\bm{x}_j-\bm{x}_k|}\Bigr)\psi,
\end{split}
\end{align}
where the field energy-term $\mathscr{E}_{\mathrm{EM}}[\bm{A},\partial_t\bm{A}]$ is no longer present in the Schrödinger equation. In the literature the $d$-dimensional Maxwell-Schrödinger system often refers to the coupled equations
\begin{align}
\begin{split}\label{MSCN0}
-\Delta_d\varphi-\frac{1}{c}\partial_t\mathrm{div}_d\bm{A}&=4\pi Q|\psi|^2,\\
\Box_d\bm{A}+\nabla_d \Bigl(\frac{1}{c}\partial_t\varphi+\mathrm{div}_d\bm{A}\Bigr)&=-\frac{4\pi}{c}\frac{Q}{m}\mathrm{Re}\Bigl(\overline{\psi}\Bigl(i\hbar\nabla_d+\frac{Q}{c}\bm{A}\Bigr)\psi\Bigr),\\
i\hbar\partial_t\psi&=\Bigl(\frac{1}{2m}\Bigl(i\hbar\nabla_d+\frac{Q}{c}\bm{A}\Bigr)^2+Q\varphi\Bigr)\psi,
\end{split}
\end{align}
with unknowns $\psi(t):\mathbb{R}^d\to\mathbb{C}$, $\bm{A}(t):\mathbb{R}^d\to\mathbb{R}^d$, $\varphi(t):\mathbb{R}^d\to\mathbb{R}$ and a hopefully obvious notation. If $d=3$ and $\bm{A}$ satisfies the Coulomb gauge-condition \eqref{Coulombgauge} the system \eqref{MSCN0} reads
\begin{align}
\begin{split}\label{MSCN}
\Box\bm{A}&=P\Bigl(-\frac{4\pi}{c}\frac{Q}{m}\mathrm{Re}\Bigl(\overline{\psi}\Bigl(i\hbar\nabla+\frac{Q}{c}\bm{A}\Bigr)\psi\Bigr)\Bigr),\\
i\hbar\partial_t\psi&=\Bigl(\frac{1}{2m}\Bigl(i\hbar \nabla+\frac{Q}{c}\bm{A}\Bigr)^2+Q^2\bigl(|\bm{x}|^{-1}*|\psi|^2\bigr)\Bigr)\psi,
\end{split}
\end{align}
which only differs from the $(N=1)$-case of \eqref{MSPmaerke} by the presence of the nonlinear term $Q^2(|\bm{x}|^{-1}*|\psi|^2)\psi$ in the Schrödinger equation. This term comes from the particles' Coulomb self-interactions. From a physical point of view it is wrong to include self-interactions in this context. In fact, the system \eqref{MSCN} may be considered as a mean field approximation to the many-body description \eqref{MSPmaerke}. In the $(c\to\infty)$-limit the second equation in \eqref{MSPmaerke} reduces to the standard many-body Coulomb problem
\begin{align*}
i\hbar\partial_t\psi=\Bigl(-\sum_{j=1}^N \frac{\hbar^2}{2m_j}\Delta_{\bm{x}_j}+\sum_{1\leq j<k\leq N}\frac{Q_jQ_k}{|\bm{x}_j-\bm{x}_k|}\Bigr)\psi
\end{align*}
which is the basis of almost all work done in quantum chemistry. On the other hand, the system \eqref{MSCN} reduces in the $(c\to\infty)$-limit to a mean field approximation, which at best would be good in the large $N$ limit. The system \eqref{MSCN0} has been studied (up to different choices of units) by several authors, both when expressed in the Coulomb gauge \cite{Ta,Gin1,Gin4,Gin2,Gin3,GNS,NW,NWG,Shi,Tsu1}, the Lorenz gauge \cite{GNS,NT,NW,NWG,Wada} and the temporal gauge \cite{GNS,NW,NWG}. In \cite{NT}, Nakamitsu and Tsutsumi prove local well-posedness in sufficiently regular Sobolev spaces of the $d$-dimensional Maxwell-Schrödinger initial value problem -- for $d\in\{1,2\}$ they also show global existence of the solution. Tsutsumi shows in \cite{Tsu1} that for $d=3$ the problem has a global solution for a certain set of final states (i.e. data given at $t=+\infty$) and studies the asymptotic behavior of such a solution. In \cite{GNS}, Guo, Nakamitsu and Strauss prove global solvability of the three-dimensional system in Coulomb gauge (but not uniqueness of the solution) for initial data $(\psi(0),\bm{A}(0),\partial_t\bm{A}(0))$ in the space of $H^1\times H^1\times L^2$-functions satisfying $\mathrm{div}\bm{A}(0)=\mathrm{div}\partial_t\bm{A}(0)=0$. Using techniques on which the arguments in the present paper are based, Nakamura and Wada \cite{NW} prove local well-posedness of the three-dimensional problem in Sobolev spaces of sufficient regularity, expanding significantly on the previously known results -- in \cite{NWG} they even prove global existence of unique solutions. Bejenaru and Tataru \cite{Ta} prove global well-posedness in the energy-space of the three-dimensional initial value problem and in the recent paper \cite{Wada}, Wada proves unique solvability in the energy space of the two-dimensional analogue. The scattering theory for \eqref{MSCN} has also been studied by several authors -- see the papers by Tsutsumi \cite{Tsu1}, Shimomura \cite{Shi} as well as Ginibre and Velo \cite{Gin1,Gin4,Gin2,Gin3}. It seems that the solvability of the system \eqref{MSP} has not yet been studied and the known results concerning \eqref{MSCN0} are not directly applicable to this system due to the presence of the Coulomb singularities $\frac{1}{|\bm{x}_j-\bm{x}_k|}$ in \eqref{MSP}. The aim of this paper is to prove the unique existence of a local solution to \eqref{MSP} as expressed in the following main theorem.
\begin{saetning}\label{main}
For all $(\psi_0,\bm{A}_0,\bm{A}_1)\in H^2(\mathbb{R}^{3N})\times H^{\frac{3}{2}}(\mathbb{R}^3;\mathbb{R}^3)\times H^{\frac{1}{2}}(\mathbb{R}^3;\mathbb{R}^3)$ with $\mathrm{div}\bm{A}_0=\mathrm{div}\bm{A}_1=0$ there exist a number $T>0$ and a unique solution $(\psi,\bm{A})\in C([0,T];H^2(\mathbb{R}^{3N}))\times \bigl(C([0,T];H^{\frac{3}{2}}(\mathbb{R}^3;\mathbb{R}^3))\cap C^1([0,T];H^{\frac{1}{2}}(\mathbb{R}^3;\mathbb{R}^3))\bigr)$ to \eqref{MSP} such that $\mathrm{div}\bm{A}(t)=0$ for all $t\in[0,T]$ and the initial conditions
\begin{align}
\psi(0)=\psi_0, \bm{A}(0)=\bm{A}_0\textrm{ and }\partial_t\bm{A}(0)=\bm{A}_1\label{initialcondi}
\end{align}
are satisfied.
\end{saetning}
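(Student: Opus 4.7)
My plan is a Banach fixed-point argument at the regularity specified in the theorem. Since the field-energy term $\mathscr{E}_{\mathrm{EM}}[\bm{A},\partial_t\bm{A}](t)$ depends only on $t$ (being an integral over $\bm{y}\in\mathbb{R}^3$), I would first absorb it into an appropriate time-dependent phase $e^{i\Theta(t)}$: given any solution $(\tilde\psi,\bm{A})$ of \eqref{MSPmaerke}, the pair $(e^{i\Theta(t)}\tilde\psi,\bm{A})$ with $\hbar\Theta'(t)=-\mathscr{E}_{\mathrm{EM}}[\bm{A},\partial_t\bm{A}](t)$ and $\Theta(0)=0$ solves \eqref{MSP} with the same initial data, and the probability currents $\bm{J}_j$ are unchanged since the phase cancels in $\overline{\psi}\nabla_{j,\bm{A}}\psi$. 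Hence it suffices to construct a unique solution of \eqref{MSPmaerke}; the Coulomb gauge $\mathrm{div}\,\bm{A}(t)=0$ is automatically preserved along any iteration with divergence-free initial data because the source of the wave equation carries the Helmholtz projection $P$.

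To set up the iteration, fix $R$ larger than $\|\psi_0\|_{H^2}+\|\bm{A}_0\|_{H^{3/2}}+\|\bm{A}_1\|_{H^{1/2}}$ and, for $T>0$ to be chosen, let $X_{T,R}$ be the closed ball of radius $R$ in
\[
\mathcal{Y}_T:=C([0,T];H^2(\mathbb{R}^{3N}))\times\bigl(C([0,T];H^{3/2}(\mathbb{R}^3;\mathbb{R}^3))\cap C^1([0,T];H^{1/2}(\mathbb{R}^3;\mathbb{R}^3))\bigr),
\]
intersected with the affine subspace of pairs obeying $\mathrm{div}\,\bm{A}(t)=0$ and the initial conditions \eqref{initialcondi}. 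Define $\Phi:X_{T,R}\to\mathcal{Y}_T$ by $\Phi(\tilde\psi,\tilde{\bm{A}})=(\psi,\bm{A})$, where $\bm{A}$ solves the linear wave equation $\Box\bm{A}=\frac{4\pi}{c}\sum_j P\bm{J}_j[\tilde\psi,\tilde{\bm{A}}]$ with data $(\bm{A}_0,\bm{A}_1)$, and $\psi$ solves the linear magnetic Schr\"odinger equation with Hamiltonian $\sum_j\frac{1}{2m_j}\nabla_{j,\tilde{\bm{A}}}^2+\sum_{j<k}\frac{Q_jQ_k}{|\bm{x}_j-\bm{x}_k|}$ and $\psi(0)=\psi_0$. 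Existence of $\bm{A}$ and $\psi$ is standard linear theory, since the Coulomb potential is a Kato-bounded perturbation of $-\Delta$ on $H^2(\mathbb{R}^{3N})$.

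The bulk of the work lies in the a priori estimates. The wave part is controlled by the energy identity $\|\bm{A}\|_{L^\infty_T H^{3/2}}+\|\partial_t\bm{A}\|_{L^\infty_T H^{1/2}}\lesssim R+\int_0^T\sum_j\|P\bm{J}_j[\tilde\psi,\tilde{\bm{A}}](s)\|_{H^{1/2}}\,\mathrm{d}s$, whose integrand is bounded by fractional Leibniz rules and Sobolev embeddings to give a prefactor of order $T\cdot C(R)$. The Schr\"odinger part is the main obstacle: $\tilde{\bm{A}}\in H^{3/2}(\mathbb{R}^3)$ sits exactly at the critical Sobolev threshold for the $L^\infty$-embedding, so the terms $\tilde{\bm{A}}\cdot\nabla_j\psi$ and $|\tilde{\bm{A}}|^2\psi$ cannot be bounded in $H^2(\mathbb{R}^{3N})$ from the $\mathcal{Y}_T$-norms alone. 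I would resolve this by enlarging $X_{T,R}$ with Strichartz-type space-time norms for $\tilde{\bm{A}}$ that the wave equation produces for free, and combining these with commutator estimates and Gr\"onwall's inequality to close the $H^2$-bound on $\psi$. Running the analogous estimate on $\Phi(\tilde\psi_1,\tilde{\bm{A}}_1)-\Phi(\tilde\psi_2,\tilde{\bm{A}}_2)$ in the weaker norm $C([0,T];H^1)\times(C([0,T];H^{1/2})\cap C^1([0,T];H^{-1/2}))$ (to avoid derivative loss) yields a prefactor $T^\alpha C(R)$ with $\alpha>0$; hence for $T$ sufficiently small $\Phi$ is a contraction on $X_{T,R}$, and its fixed point, after undoing the phase of the first step, is the desired solution of \eqref{MSP}. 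Uniqueness in the full class then follows from the same difference estimate applied to any two candidate solutions.
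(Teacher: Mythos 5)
Your overall architecture coincides with the paper's: reduce \eqref{MSP} to \eqref{MSPmaerke} by a time-dependent phase, linearize, note that $\bm{A}\in H^{\frac{3}{2}}$ is too rough for naive $H^2$-estimates on the Schr\"odinger side, compensate with Strichartz norms supplied by the wave/Klein--Gordon equation, and contract in a metric of lower regularity. However, the step you describe as ``combining these with commutator estimates and Gr\"onwall's inequality to close the $H^2$-bound on $\psi$'' is precisely the crux, and the mechanism you name would fail at this regularity. Commuting $1-\Delta$ (or $\langle\nabla\rangle^2$) with $\bm{A}(\bm{x}_j)\cdot\nabla_{\bm{x}_j}$ produces the terms $(\Delta\bm{A})(\bm{x}_j)\cdot\nabla_{\bm{x}_j}\psi$ and $(\partial_k\bm{A})(\bm{x}_j)\cdot\partial_k\nabla_{\bm{x}_j}\psi$; the first requires $\Delta\bm{A}$ to be at least a locally square-integrable function, i.e.\ $\bm{A}\in H^2$, while the data only give $\bm{A}\in H^{\frac{3}{2}}$ so that $\Delta\bm{A}\in H^{-\frac{1}{2}}$, and even the second term cannot be closed against $\|\psi\|_{H^2}$ using the Strichartz norms actually available (e.g.\ $\nabla\bm{A}\in L^4_tL^4_x$ forces more than two derivatives on $\psi$ after H\"older and Minkowski). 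The paper's resolution is different in kind: no spatial commutators appear at all. It invokes Kato's theorem (Theorem \ref{Katoth}) with the similarity transformation $\mathbb{S}(t)=M+1-i\mathbb{A}(t)$, which commutes exactly with the generator, so the entire burden shifts to $\dot{\mathbb{S}}\in L^1(\mathcal{I}_T;\mathcal{L}(H^2,L^2))$, i.e.\ to $\partial_t\bm{A}\in L^1(\mathcal{I}_T;L^4)$. Equivalently (Lemma \ref{H2wp}), one differentiates the Schr\"odinger equation \emph{in time}: $\partial_t\xi$ solves an inhomogeneous equation whose source is controlled by $\|\partial_t\bm{A}(t)\|_{L^4}\|\xi(t)\|_{H^2}$, the graph norm of the Hamiltonian is equivalent to the $H^2$-norm by the Kato--Rellich estimates of Lemma \ref{Estlemma} (which need only $\bm{A}\in L^4$, via $H^{\frac34}\hookrightarrow L^4$), and Gr\"onwall closes. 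This trading of spatial for temporal derivatives of $\bm{A}$ -- with $\partial_t\bm{A}\in L^4_tL^4_x$ supplied by the Strichartz estimate for the Klein--Gordon equation -- is the idea your sketch is missing.

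Two smaller points. First, a closed ball of $C(\mathcal{I}_T;H^2)$ is not complete in your weaker $C(\mathcal{I}_T;H^1)$-type metric (limits are only weakly continuous in $H^2$); the paper therefore works with $L^\infty$-in-time balls and proves completeness by Banach--Alaoglu. Second, the contraction gives uniqueness only on some $[0,T_*]$ and only among pairs possessing the auxiliary $W^{1,4}(\mathcal{I}_T;L^4)$ regularity; to conclude uniqueness in the full class on all of $[0,T]$ one must first upgrade an arbitrary solution's regularity (via uniqueness for the linear Klein--Gordon problem plus Strichartz) and then run a continuation argument as in Lemma \ref{uniq2}. Neither is fatal, but both need to be said.
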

\begin{bemaerkning}
In this paper, we consider all of the charged particles as being spinless. Let us just mention that by thinking of the particles as having spin and by including the interaction between this spin and the electromagnetic field in the kinetic energy operator we are led to another interesting system of partial differential equations: The many-body Maxwell-Pauli system. For now, let us just write up the one-body Maxwell-Pauli system -- in Coulomb gauge it reads
\begin{align}
\begin{split}\label{MP1}
\Box\bm{A}&=\frac{4\pi}{c}P\bm{\mathcal{J}}[\psi,\bm{A}],\\
i\hbar\partial_t\psi&=\left(\frac{1}{2m}\Bigl(i\hbar\nabla+\frac{Q}{c}\bm{A}\Bigr)^2-\frac{\hbar Q}{2mc}\bm{\sigma}\cdot\nabla\times\bm{A}+\mathscr{E}_{\mathrm{EM}}[\bm{A},\partial_t\bm{A}]\right)\psi,
\end{split}
\end{align}
where the probability current density $\bm{\mathcal{J}}[\psi,\bm{A}](t):\mathbb{R}^3\to\mathbb{R}^3$ is given by
\begin{align*}
\bm{\mathcal{J}}[\psi,\bm{A}](t)(\bm{x})&=-\frac{Q}{m}\mathrm{Re}\Bigl\langle\psi(t)(\bm{x}),\Bigl(i\hbar\nabla+\frac{Q}{c}\bm{A}\Bigr)\psi(t)(\bm{x})\Bigr\rangle_{\mathbb{C}^2}\\
&\hspace{1.7cm}+\frac{\hbar Q}{2m}\nabla\times\bigl\langle\psi(t)(\bm{x}),\bm{\sigma}\psi(t)(\bm{x})\bigr\rangle_{\mathbb{C}^2}
\end{align*}
and $\bm{\sigma}$ is the vector with the Pauli matrices
\begin{align*}
\sigma^1=\begin{pmatrix}0&1\\1&0\end{pmatrix},\quad \sigma^2=\begin{pmatrix}0&-i\\i&0\end{pmatrix}\quad\textrm{and}\quad \sigma^3=\begin{pmatrix}1&0\\0&-1\end{pmatrix}
\end{align*}
as components. The techniques used in this paper to treat the many-body Maxwell-Schrödinger system do not seem to be immediately adaptable to the Maxwell-Pauli system and so the existence of a local solution to the initial value problem corresponding to \eqref{MP1} is an open problem.
\end{bemaerkning}
\begin{bemaerkning}
Suppose that $m_1=\cdots=m_N$ and $Q_1=\cdots=Q_N$ so that the $N$ particles are indistinguishable and consider an initial state $\psi_0$ where either all of the particles are bosonic ($s=0$) or all of the particles are fermionic ($s=1$). If $e_{\ell n}:\mathbb{R}^{3N}\to \mathbb{R}^{3N}$ is the coordinate exchange map given by
\begin{align*}
e_{\ell n}(\bm{x}_1,\ldots,\bm{x}_\ell,\ldots,\bm{x}_n,\ldots,\bm{x}_N)=(\bm{x}_1,\ldots,\bm{x}_n,\ldots,\bm{x}_\ell,\ldots,\bm{x}_N)
\end{align*} 
this means that $\psi_0=(-1)^s\psi_0\circ e_{\ell n}$ for all $\ell,n\in\{1,\ldots,N\}$ with $\ell<n$. With $(\psi,\bm{A})$ denoting the solution to \eqref{MSP}+\eqref{initialcondi} whose existence is established in Theorem \ref{main} one can easily verify that $t\mapsto\bigl((-1)^s\psi(t)\circ e_{\ell n},\bm{A}(t)\bigr)$ solves \eqref{MSP}+\eqref{initialcondi} too. But then the uniqueness result of Theorem \ref{main} implies that the identity $\psi(t)=(-1)^s\psi(t)\circ e_{\ell n}$ holds at all times $t$ of existence so in other words the particles will continue to obey the same particle statistics as they did in the initial state.
\end{bemaerkning}
The paper is organized as follows. We will end this introduction by establishing some notation and in Section \ref{Motivation} we (formally) motivate the model \eqref{MSP}. In Section \ref{proof} we take the first steps towards proving Theorem \ref{main} -- the basic strategy for obtaining the existence part of the theorem will be to find a fixed point for the solution mapping associated with a certain linearization of the many-body Maxwell-Schrödinger system. The linear equations constituting this linearization are studied in Sections \ref{SchEq} and \ref{KleinGo} -- more specifically, the many-body Schrödinger equation is studied in Section \ref{SchEq} by means of a result by Kato \cite{Kato1,Kato} and in Section \ref{KleinGo} we recall a result developed by Brenner \cite{Brenner}, Strichartz \cite{Strichartz}, Ginibre and Velo \cite{Gin-1,Gin0} concerning the Klein-Gordon equation. Finally, we prove existence of the desired solution in Section \ref{Contraction} and the uniqueness part is proven in Section \ref{Uniqueness}.

As can be seen from the statement of Theorem \ref{main} the values of the time variable will vary in some closed interval $\mathcal{I}_T=[0,T]$ where $T>0$. For some given reflexive Banach space $(\mathcal{X},\|\cdot\|_{\mathcal{X}})$ we will let $C(\mathcal{I}_T;\mathcal{X})$ denote the space of continuous mappings $\mathcal{I}_T\to \mathcal{X}$ and $C^1(\mathcal{I}_T;\mathcal{X})$ will denote the subspace of maps $\psi\in C(\mathcal{I}_T;\mathcal{X})$ whose strong derivative
\begin{align*}
\partial_t\psi(t)=\begin{dcases}\lim_{h\to 0^+}\frac{\psi(t+h)-\psi(t)}{h}&\textrm{for }t=0\\\lim_{h\to 0\phantom{^+}}\frac{\psi(t+h)-\psi(t)}{h}&\textrm{for }t\in(0,T)\\\lim_{h\to 0^-}\frac{\psi(t+h)-\psi(t)}{h}&\textrm{for }t=T\end{dcases}
\end{align*}
is well defined and continuous everywhere in $\mathcal{I}_T$. For $p\in[1,\infty]$ we let $L^p(\mathcal{I}_{T};\mathcal{X})$ denote the space of (equivalence classes of) strongly Lebesgue-measurable functions $\psi:\mathcal{I}_{T}\to \mathcal{X}$ with the property that
\begin{align*}
\|\psi\|_{L_T^p\mathcal{X}}=\begin{dcases}\Bigl(\int_{\mathcal{I}_{T}}\|\psi(t)\|_{\mathcal{X}}^p\,\mathrm{d}t\Bigr)^{\frac{1}{p}}&\textrm{if } 1\leq p<\infty\\\esssup_{t\in\mathcal{I}_{T}}\|\psi(t)\|_{\mathcal{X}}&\textrm{if }p=\infty\end{dcases}
\end{align*}
is finite. Equipping $L^p(\mathcal{I}_{T};\mathcal{X})$ with the norm $\|\cdot\|_{L_T^p\mathcal{X}}$ results in a Banach space. Just as in the case where $\mathcal{X}=\mathbb{C}$ any given $\psi\in L^p(\mathcal{I}_{T};\mathcal{X})$ can be identified with the $\mathcal{X}$-valued distribution that sends $f\in C_0^\infty(\mathcal{I}_{T}^{\circ})$ into the Bochner integral $\int_{\mathcal{I}_{T}} \psi(t)f(t)\,\mathrm{d}t\in \mathcal{X}$; thus, it makes sense to consider the space $W^{1,p}(\mathcal{I}_{T};\mathcal{X})$ of $L^p(\mathcal{I}_{T};\mathcal{X})$-functions with distributional derivative $\partial_t\psi$ in $L^p(\mathcal{I}_{T};\mathcal{X})$, which is a Banach space when endowed with the norm
\begin{align*}
\|\psi\|_{W_T^{1,p}\mathcal{X}}=\bigl(\|\psi\|_{L_T^p\mathcal{X}}^2+\|\partial_t\psi\|_{L_T^p\mathcal{X}}^2\bigr)^{\frac{1}{2}}.
\end{align*}
For a nice introduction to the spaces $W^{1,p}(\mathcal{I}_T,\mathcal{X})$ we refer to Section 1.4 in \cite{Barbu}. Let us just mention one result that we will often use: For $\psi\in L^p(\mathcal{I}_T;\mathcal{X})$ the condition that $\psi\in W^{1,p}(\mathcal{I}_T;\mathcal{X})$ is equivalent to the existence of an absolutely continuous $\psi_0:\mathcal{I}_T\to\mathcal{X}$ with strong derivative $\partial_t\psi_0: t\mapsto\lim_{h\to 0}\frac{\psi_0(t+h)-\psi_0(t)}{h}$ in $L^p(\mathcal{I}_T;\mathcal{X})$ such that $\psi(t)=\psi_0(t)$ for almost all $t\in\mathcal{I}_T$. Moreover, the Sobolev embedding $W^{1,p}(\mathcal{I}_T;\mathcal{X})\hookrightarrow_{p,T} L^\infty(\mathcal{I}_T;\mathcal{X})$ holds true. If $(\mathcal{Y},\|\cdot\|_{\mathcal{Y}})$ is another Banach space we let $(\mathcal{L}(\mathcal{X},\mathcal{Y}),\|\cdot\|_{\mathcal{L}(\mathcal{X},\mathcal{Y})})$ denote the Banach space of bounded linear operators $\mathcal{X}\to\mathcal{Y}$ and set $\mathcal{L}(\mathcal{X})=\mathcal{L}(\mathcal{X},\mathcal{X})$. By $A\lesssim B$ we mean that there exists a universal constant $c>0$ such that $A\leq cB$. Finally, we let $p'=\frac{p}{p-1}$ denote the Hölder conjugate to a given $p\in[1,\infty]$ and set $\langle s\rangle = \sqrt{1+s^2}$ for $s\in \mathbb{R}$.

\section*{Acknowledgements}
I would like to thank my advisor Professor Jan Philip Solovej for many helpful discussions.

\chapter{Motivation for the Model}\label{Motivation}
As our starting point we use the Abraham model of charged particles. So for some arbitrary $R>0$ and some positive $C_0^\infty$-function $\chi$ with $\int_{\mathbb{R}^3}\chi(\bm{x})\,\mathrm{d}\bm{x}=1$ we set $\chi_R:\bm{x}\mapsto \frac{1}{R^3}\chi\bigl(\frac{\bm{x}}{R}\bigr)$ and associate the smeared out charge distribution $\rho_{R,j}:\bm{x}\mapsto Q_j\chi_R(\bm{x}_j-\bm{x})$ to the $j$'th particle -- the corresponding Maxwell equations can be written as
\begin{align}
\mathrm{div}\bm{B}(t)&=0,\label{Maxwell1}\\
\nabla\times \bm{E}(t)&=-\frac{1}{c}\partial_t\bm{B}(t),\label{Maxwell2}\\
\mathrm{div}\bm{E}(t)&=4\pi\sum_{j=1}^N \rho_{R,j}(t),\label{Maxwell3}\\
\nabla\times \bm{B}(t)&=\frac{1}{c}\Bigl(\partial_t\bm{E}(t)+4\pi\sum_{j=1}^N \frac{\mathrm{d}\bm{x}_j}{\mathrm{d}t}(t)\rho_{R,j}(t)\Bigr),\label{Maxwell4}
\end{align}
and the Lorentz force law states that
\begin{align}
m_j\frac{\mathrm{d}^2\bm{x}_j}{\mathrm{d}t^2}(t)=Q_j\Bigl(\frac{1}{c}\frac{\mathrm{d}\bm{x}_j}{\mathrm{d}t}(t)\times \bm{B}(t)+\bm{E}(t)\Bigr)*\chi_R(\bm{x}_j(t))\quad\textrm{for }j\in\{1,\ldots,N\},\label{Lorentz}
\end{align}
where we interpret the coordinates of $\bm{x}=(\bm{x}_1,\ldots,\bm{x}_N)\in\mathbb{R}^{3N}$ as the positions of the $N$ particles, $\bm{B}$ is the magnetic field and $\bm{E}$ denotes the electric field. The reason for smearing out the charges is that the coupled Maxwell-Lorentz system does not make sense in the point particle case as explained in \cite{Spohn}. 

Now, \eqref{Maxwell1} ensures that $\bm{B}(t):\mathbb{R}^3\to\mathbb{R}^3$ can be written as the curl of some magnetic vector potential $\bm{A}(t):\mathbb{R}^3\to\mathbb{R}^3$, whereby \eqref{Maxwell2} allows us to write $-\bm{E}(t)-\frac{1}{c}\partial_t\bm{A}(t):\mathbb{R}^3\to\mathbb{R}^3$ as the gradient of some electric scalar potential $V:\mathbb{R}^3\to\mathbb{R}$. In other words,
\begin{align}
\bm{B}(t)=\nabla\times \bm{A}(t)\textrm{ and }\bm{E}(t)=-\frac{1}{c}\partial_t \bm{A}(t)-\nabla V(t).\label{potentials}
\end{align}
The choice of potentials is not unique -- if $(V,\bm{A})$ is an electromagnetic potential corresponding to the fields $\bm{E}$ and $\bm{B}$ then for any $\eta(t):\mathbb{R}^3\to\mathbb{R}$ the pair $\bigl(V-\frac{1}{c}\partial_t\eta,\bm{A}+\nabla \eta\bigr)$ will also serve as such a potential. This freedom of choice allows us to demand that $\bm{A}$ satisfies the Coulomb gauge condition \eqref{Coulombgauge}.

To formulate the problem in the Lagrangian formalism we choose the Hilbert manifold $\mathcal{Q}^0=\mathbb{R}^{3N} \times D^1 \times PL^2$ as configuration space, where $D^1$ is the space of locally integrable mappings that vanish at infinity and have square integrable first derivatives. Then the formulas \eqref{Maxwell3}--\eqref{Lorentz} are the Euler-Lagrange equations associated with the Lagrangian 
\begin{align*}
\mathscr{L}_R\bigl(\bm{x},V,\bm{A},\dot{\bm{x}},\dot{V},\dot{\bm{A}}\bigr)&=\sum_{j=1}^N \Bigl(\frac{1}{2}m_j\dot{\bm{x}}_j^2+\frac{Q_j}{c}\dot{\bm{x}}_j\cdot\bm{A}*\chi_R(\bm{x}_j)-Q_j V*\chi_R(\bm{x}_j)\Bigr)\\
&+\frac{1}{8\pi}\int_{\mathbb{R}^3}\Bigl(\Bigl|\frac{1}{c}\dot{\bm{A}}(\bm{y})+\nabla V(\bm{y})\Bigr|^2-|\nabla\times\bm{A}(\bm{y})|^2\Bigr)\,\mathrm{d}\bm{y}.
\end{align*}
defined on the restricted tangent bundle $T\mathcal{Q}^0|\mathcal{Q}^1\cong \mathcal{Q}^1\times \mathcal{Q}^0$, where $\mathcal{Q}^1$ denotes the manifold domain $\mathbb{R}^{3N}\times D^1\times PH^1$ of $\mathcal{Q}^0$. The associated energy function is $\mathscr{E}_R:T\mathcal{Q}^0|\mathcal{Q}^1\ni v\mapsto \bigl(\mathbb{F}\mathscr{L}_R(v)(v)-\mathscr{L}_R(v)\bigr)\in\mathbb{R}$, where the fiber derivative $\mathbb{F}\mathscr{L}_R: T\mathcal{Q}^0|\mathcal{Q}^1\to T^*\mathcal{Q}^0|\mathcal{Q}^1$ is given by
\begin{align*}
\mathbb{F}\mathscr{L}_R(v)(w)=\frac{\mathrm{d}}{\mathrm{d}t}\mathscr{L}_R(v+tw)\Big|_{t=0}\textrm{ for }q\in \mathcal{Q}^1\textrm{ and }v,w\in T_q\mathcal{Q}^0.
\end{align*}
With the intention of later passing to a quantum mechanical description of the charged particles we would like to define a Hamiltonian corresponding to $\mathscr{L}_R$ -- such a Hamiltonian expresses the energy in terms of coordinates and momenta, in the sense that the identity
\begin{align}
\mathscr{H}_R\circ \mathbb{F}\mathscr{L}_R=\mathscr{E}_R\label{Hamiltondef}
\end{align}
holds on some appropriate subset of $T\mathcal{Q}^0|\mathcal{Q}^1$ as we shall explain. The Lagrangian $\mathscr{L}_R$ is degenerate since it does not at all depend on $\dot{V}$ and so $\mathbb{F}\mathscr{L}_R$ is not even locally invertible, but as can easily be verified \eqref{Hamiltondef} does define a mapping $\mathscr{H}_R$ on all of the image $\mathcal{M}_1=\mathbb{F}\mathscr{L}_R(T\mathcal{Q}^0|\mathcal{Q}^1)\subset T^*\mathcal{Q}^0$. The pull-back $\omega_1=j_1^*\Omega$ to $\mathcal{M}_1$ of the canonical $2$-form $\Omega$ on $T^*\mathcal{Q}^0$ via the inclusion $\mathcal{M}_1\stackrel{j_1}{\longhookrightarrow} T^*\mathcal{Q}^0$ is degenerate and so $(\mathcal{M}_1,\omega_1)$ is not a symplectic manifold. To remedy this problem we can restrict $\mathbb{F}\mathscr{L}_R$ to the subset of elements $(\bm{x},V,\bm{A},\dot{\bm{x}},\dot{V},\dot{\bm{A}})\in T\mathcal{Q}^0|\mathcal{Q}^1$ satisfying Gauss' law
\begin{align*}
-\Delta V(\bm{z})=4\pi\sum_{j=1}^N Q_j\chi_R(\bm{x}_j-\bm{z}),
\end{align*}
meaning that $V$ is the function $\bm{z}\mapsto\sum_{j=1}^N Q_j\int_{\mathbb{R}^3}\frac{\chi_R(\bm{x}_j-\bm{y})}{|\bm{y}-\bm{z}|}\,\mathrm{d}\bm{y}$. The image $\mathcal{M}_2$ of this set under the map $\mathbb{F}\mathscr{L}_R$ becomes a weak symplectic manifold in the sense of \cite{Marsden} and this procedure is completely natural in the framework devised by Gotay, Nester and Hinds \cite{GotayNesterHinds} as a further development of Anderson, Bergmann and Dirac's constraint theory \cite{Bergmann, Dirac1, Dirac2} -- see also \cite{KP3}. Identifying $\mathcal{M}_2$ with $\mathbb{R}^{3N}\times PH^1\times \mathbb{R}^{3N}\times PL^2$ we can write the Hamiltonian $\mathscr{H}_R$ as
\begin{align}
&\mathscr{H}_R\bigl(\bm{x},\bm{A},\bm{p},-\tfrac{P\bm{E}}{4\pi}\bigr)\nonumber\\
&=\sum_{j=1}^N \frac{1}{2m_j}\Bigl(\bm{p}_j-\frac{Q_j}{c} \bm{A}*\chi_R(\bm{x}_j)\Bigr)^2\!+\frac{1}{8\pi}\!\int_{\mathbb{R}^{3N}}\!\bigl(c^2|P\bm{E}(\bm{y})|^2+|\nabla\times \bm{A}(\bm{y})|^2\bigr)\,\mathrm{d}\bm{y}\nonumber\\
&+\frac{1}{2}\sum_{j=1}^N\sum_{k=1}^N Q_jQ_k\int_{\mathbb{R}^3}\int_{\mathbb{R}^3}\frac{\chi_R(\bm{x}_j-\bm{y})\chi_R(\bm{x}_k-\bm{z})}{|\bm{y}-\bm{z}|}\,\mathrm{d}\bm{y}\,\mathrm{d}\bm{z}.\label{Hamilton}
\end{align}
Now take the point particle-limit $R\to 0^+$ in the following (formal) sense: Consider the mapping $\mathscr{H}_R$ acting as prescribed in \eqref{Hamilton} on the $R$-independent space $\mathbb{R}^{3N}\times PH^1\times \mathbb{R}^{3N}\times PL^2$. The first term on the right hand side of \eqref{Hamilton} represents the kinetic energy of the $N$ particles, the second term is the energy stored in the electromagnetic field and the double sum is the potential energy induced by the Coulomb interactions between the $N$ particles. In particular, the double sum's diagonal term $\frac{Q_j^2}{2R}\int_{\mathbb{R}^3}\int_{\mathbb{R}^3}\frac{\chi(\bm{y})\chi(\bm{z})}{|\bm{y}-\bm{z}|}\,\mathrm{d}\bm{y}\,\mathrm{d}\bm{z}$ is the energy coming from the $j$'th particle's interaction with itself. We subtract this self-energy from $\mathscr{H}_R$ and note that as $R\to 0^+$ the result converges pointwise to the mapping
\begin{align*}
\mathscr{H}_0\bigl(\bm{x},\bm{A},\bm{p},-\tfrac{P\bm{E}}{4\pi}\bigr)&=\sum_{j=1}^N \frac{1}{2m_j}\Bigl(\bm{p}_j-\frac{Q_j}{c} \bm{A}(\bm{x}_j)\Bigr)^2+\sum_{1\leq j<k\leq N} \frac{Q_jQ_k}{|\bm{x}_j-\bm{x}_k|}\nonumber\\
&+\frac{1}{8\pi}\int_{\mathbb{R}^{3N}}\bigl(c^2|P\bm{E}(\bm{y})|^2+|\nabla\times \bm{A}(\bm{y})|^2\bigr)\,\mathrm{d}\bm{y},
\end{align*}
provided $\bm{A}$ is continuous at the points $\bm{x}_1,\ldots,\bm{x}_N$. We now quantize the charged particles in our model and obtain the Hamilton operator
\begin{align*}
\mathscr{H}\bigl(\bm{A},-\tfrac{P\bm{E}}{4\pi}\bigr)&=\sum_{j=1}^N \frac{1}{2m_j}\Bigl(i\hbar \nabla_{\bm{x}_j}+\frac{Q_j}{c} \bm{A}(\bm{x}_j)\Bigr)^2+\sum_{1\leq j<k\leq N} \frac{Q_jQ_k}{|\bm{x}_j-\bm{x}_k|}\nonumber\\
&+\frac{1}{8\pi}\int_{\mathbb{R}^{3N}}\bigl(c^2|P\bm{E}(\bm{y})|^2+|\nabla\times \bm{A}(\bm{y})|^2\bigr)\,\mathrm{d}\bm{y},
\end{align*}
acting on a certain dense subspace of the Hilbert space $L^2(\mathbb{R}^{3N})$. Instead of also quantizing the fields $\bm{A}$ and $-\frac{P\bm{E}}{4\pi}$ we leave them as classical variables. In this spirit we will for a given (normalized) quantum state $\psi:\mathbb{R}^{3N}\to \mathbb{C}$ of the particles regard the average energy $(\bm{A},-\frac{P\bm{E}}{4\pi})\mapsto \bigl(\psi,\mathscr{H}\bigl(\bm{A},-\tfrac{P\bm{E}}{4\pi}\bigr)\psi\bigr)_{L^2}$ as a classical Hamiltonian defined on the weak symplectic manifold $(PH^1\times PL^2,\omega)$ with
\begin{align*}
\omega_m\bigl(m,\bm{A}_1,-\tfrac{P\bm{E}_1}{4\pi},m,\bm{A}_2,-\tfrac{P\bm{E}_2}{4\pi}\bigr)=\frac{1}{4\pi}\int_{\mathbb{R}^3}\bigl(P\bm{E}_1\cdot\bm{A}_2-P\bm{E}_2\cdot\bm{A}_1\bigr)(\bm{y})\,\mathrm{d}\bm{y}
\end{align*}
for $m,\bigl(\bm{A}_1,-\frac{P\bm{E}_1}{4\pi}\bigr),\bigl(\bm{A}_2,-\frac{P\bm{E}_2}{4\pi}\bigr)\in PH^1\times PL^2$. The corresponding Hamilton equations express that
\begin{align}
\frac{1}{c^2}\partial_t\bm{A}(t)=-P\bm{E}(t)\textrm{ and }-\partial_t P\bm{E}(t)=\Delta\bm{A}(t)+\frac{4\pi}{c}\sum_{j=1}^NP\bm{J}_j[\psi,\bm{A}(t)].\label{HamiltQ}
\end{align}
In reality, we do of course not expect the quantum state of the charged particles to be time independent -- the time evolution of $\psi$ is governed by the Schrödinger equation
\begin{align}
i\hbar \partial_t\psi(t)=\mathscr{H}\bigl(\bm{A}(t),-\tfrac{P\bm{E}(t)}{4\pi}\bigr)\psi(t).\label{SchrodQ}
\end{align}
We investigate the situation where the fixed time-independent state $\psi$ appearing in \eqref{HamiltQ} is replaced by the time-dependent state $\psi(t)$ satisfying the Schrödinger equation \eqref{SchrodQ}. \eqref{MSP} is precisely obtained by doing this coupling of \eqref{HamiltQ} with \eqref{SchrodQ}. 

\chapter{Preliminaries}\label{proof}
First, we collect some simple estimates that will be useful to us later.
\begin{lemma}\label{Estlemma}
For all $1\leq j\leq N$, $\bm{A}\in \bigl[L^4(\mathbb{R}^3)\bigr]^3$, $B\in L^2(\mathbb{R}^3)$, $\psi\in L^2(\mathbb{R}^{3N})$ with $\Delta_{\bm{x}_j}\psi\in L^2(\mathbb{R}^{3N})$, $0<\varepsilon<1$ and $0<\delta<\frac{1}{2}$ we have
\begin{align}
\begin{split}\label{KatoRellich}
\bigl\|\bm{A}(\bm{x}_j)\cdot\nabla_{\bm{x}_j}\psi\bigr\|_{L^2}
&\lesssim \|\bm{A}\|_{L^4}\bigl(\varepsilon^{-7}\|\psi\|_{L^2}+\varepsilon\|\Delta_{\bm{x}_j}\psi\|_{L^2}\bigr)\\
\|B(\bm{x}_j)\psi\|_{L^2}&\lesssim \|B\|_{L^2}\bigl(\varepsilon^{-\frac{3+2\delta}{1-2\delta}}\|\psi\|_{L^2}+\varepsilon\|\Delta_{\bm{x}_j}\psi\|_{L^2}\bigr)\\
\left\|\frac{1}{|\bm{x}_j-\bm{x}_k|}\psi\right\|_{L^2}&\lesssim \varepsilon^{-\frac{3+2\delta}{1-2\delta}}\|\psi\|_{L^2}+\varepsilon\|\Delta_{\bm{x}_j}\psi\|_{L^2}\textrm{ for }1\leq j<k\leq N
\end{split}
\end{align}
and for all $1\leq j\leq N$, $\bm{A}\in \bigl[L^4(\mathbb{R}^3)\bigr]^3$, $B\in L^2(\mathbb{R}^3)$ and $\psi\in L^2(\mathbb{R}^{3N})$ the estimates
\begin{align}
\begin{split}\label{Hminzwei}
\|\mathrm{div}_{\bm{x}_j}(\bm{A}(\bm{x}_j)\psi)\|_{H^{-2}}&\lesssim \|\bm{A}\|_{L^4}\|\psi\|_{L^2},\\
\|B(\bm{x}_j)\psi\|_{H^{-2}}&\lesssim \|B\|_{L^2}\|\psi\|_{L^2},\\
\left\|\frac{1}{|\bm{x}_j-\bm{x}_k|}\psi\right\|_{H^{-2}}&\lesssim \|\psi\|_{L^2}\textrm{ for }1\leq j<k\leq N
\end{split}
\end{align}
hold true. Moreover, we have
\begin{align}
&\bigl\|\bm{J}_j[\psi_1,\bm{A}_1]-\bm{J}_j[\psi_2,\bm{A}_2]\bigr\|_{H^1}\nonumber\\
&\lesssim\sum_{k=1}^2\bigl((1+\|\bm{A}_k\|_{D^1})\|\psi_k\|_{H^2}\bigr)\|\psi_1-\psi_2\|_{H^2}+\|\psi_1\|_{H^2}\|\psi_2\|_{H^2}\|\bm{A}_1-\bm{A}_2\|_{D^1}\label{current}
\end{align}
for any $1\leq j\leq N$ and $(\psi_1,\bm{A}_1),(\psi_2,\bm{A}_2)\in H^2(\mathbb{R}^{3N})\times D^1(\mathbb{R}^3)$.
\end{lemma}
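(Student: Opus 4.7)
All three groups of estimates rest on the same toolkit: Hölder's inequality in $\bm{x}_j$ (with $\bm{x}_j'$ frozen), Sobolev embeddings in $\mathbb{R}^3$ applied in the $\bm{x}_j$-variable, interpolation between $L^2(\mathbb{R}^{3N})$ and $H^2_{\bm{x}_j}L^2_{\bm{x}_j'}$ via Plancherel in $\xi_j$, and Young's inequality to generate the $\varepsilon$-weights. For the first line of \eqref{KatoRellich} I would apply Hölder in $\bm{x}_j$ to obtain $\|\bm{A}\|_{L^4}\|\nabla_{\bm{x}_j}\psi\|_{L^2_{\bm{x}_j'}L^4_{\bm{x}_j}}$, use the 3D embedding $H^{3/4}\hookrightarrow L^4$ to pass to $\|\psi\|_{H^{7/4}_{\bm{x}_j}L^2_{\bm{x}_j'}}$, interpolate this as $\|\psi\|_{L^2}^{1/8}\|\Delta_{\bm{x}_j}\psi\|_{L^2}^{7/8}$, and apply Young with exponents $(8,8/7)$ to reach $\varepsilon^{-7}\|\psi\|_{L^2}+\varepsilon\|\Delta_{\bm{x}_j}\psi\|_{L^2}$. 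The second line follows the same scheme based on $H^{3/2+\delta}\hookrightarrow L^\infty$ for $0<\delta<\tfrac{1}{2}$; the Young step then produces the exponent $\tfrac{3+2\delta}{1-2\delta}$. The third line is a corollary of the second obtained by splitting $|\bm{x}|^{-1}=|\bm{x}|^{-1}\chi_{\{|\bm{x}|\leq 1\}}+|\bm{x}|^{-1}\chi_{\{|\bm{x}|>1\}}\in L^2(\mathbb{R}^3)+L^\infty(\mathbb{R}^3)$ and translating $\bm{x}_j\mapsto\bm{x}_j-\bm{x}_k$.

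The three estimates in \eqref{Hminzwei} I would deduce by duality, $\|f\|_{H^{-2}}=\sup_{\|\phi\|_{H^2}\leq 1}|(f,\phi)_{L^2}|$. The first follows from integration by parts, $|(\mathrm{div}_{\bm{x}_j}(\bm{A}(\bm{x}_j)\psi),\phi)|=|(\bm{A}(\bm{x}_j)\psi,\nabla_{\bm{x}_j}\phi)|$, after which Hölder in $\bm{x}_j$ and the 3D embedding $H^1\hookrightarrow L^4$ give $\|\bm{A}\|_{L^4}\|\psi\|_{L^2}\|\phi\|_{H^2}$. The $B$-estimate is analogous, using $H^2(\mathbb{R}^3)\hookrightarrow L^\infty(\mathbb{R}^3)$ to bound $\|\phi\|_{L^2_{\bm{x}_j'}L^\infty_{\bm{x}_j}}$ by $\|\phi\|_{H^2}$. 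The Coulomb estimate is immediate from Hardy's inequality in $\bm{x}_j$, which yields $\||\bm{x}_j-\bm{x}_k|^{-1}\phi\|_{L^2}\leq 2\|\nabla_{\bm{x}_j}\phi\|_{L^2}\leq 2\|\phi\|_{H^2}$.

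For \eqref{current} I would first recast $\bm{J}_j[\psi,\bm{A}]=\tfrac{Q_j\hbar}{m_j}\mathrm{Im}\!\int\overline{\psi}\nabla_{\bm{x}_j}\psi\,\mathrm{d}\bm{x}_j'-\tfrac{Q_j^2}{m_jc}\bm{A}(\bm{x}_j)\rho_j[\psi](\bm{x}_j)$ with $\rho_j[\psi]=\int|\psi|^2\,\mathrm{d}\bm{x}_j'$, and then telescope each difference to isolate exactly one factor of $\psi_1-\psi_2$ or $\bm{A}_1-\bm{A}_2$ per term. The $H^1(\mathbb{R}^3)$-norm of a partial-trace object such as $\int\overline{\phi}\nabla_{\bm{x}_j}\chi\,\mathrm{d}\bm{x}_j'$ is bounded by Cauchy--Schwarz in $\bm{x}_j'$ followed by Hölder in $\bm{x}_j$, making use of the mixed-norm embeddings $\|\phi\|_{L^\infty_{\bm{x}_j}L^2_{\bm{x}_j'}}\lesssim\|\phi\|_{H^2}$ (from $H^2\hookrightarrow L^\infty$), $\|\nabla_{\bm{x}_j}\phi\|_{L^6_{\bm{x}_j}L^2_{\bm{x}_j'}}\lesssim\|\phi\|_{H^2}$ (from $H^1\hookrightarrow L^6$) and the interpolated version $\|\nabla_{\bm{x}_j}\phi\|_{L^3_{\bm{x}_j}L^2_{\bm{x}_j'}}\lesssim\|\phi\|_{H^2}$. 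For the Hartree-type piece $\bm{A}\rho_j[\psi]$ one then invokes $\|\bm{A}\|_{L^6(\mathbb{R}^3)}\lesssim\|\bm{A}\|_{D^1}$ and the identity $\nabla\rho_j[\psi]=2\mathrm{Re}\!\int\overline{\psi}\nabla_{\bm{x}_j}\psi\,\mathrm{d}\bm{x}_j'$.

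The main technical obstacle is the $H^1$-analysis of $\bm{A}\rho_j[\psi]$: after differentiation one faces the two pairings $\nabla\bm{A}\cdot\rho_j[\psi]$ and $\bm{A}\cdot\nabla\rho_j[\psi]$, the second of which is the tight one and forces an estimate of $\nabla\rho_j[\psi]$ in $L^3(\mathbb{R}^3)$ via $H^{1/2}(\mathbb{R}^3)\hookrightarrow L^3(\mathbb{R}^3)$ in $\bm{x}_j$; simultaneously one must track, across all telescoped sub-terms, which factor carries the all-important $\|\psi_1-\psi_2\|_{H^2}$ or $\|\bm{A}_1-\bm{A}_2\|_{D^1}$ in order to arrive at the bilinear structure displayed in \eqref{current}.
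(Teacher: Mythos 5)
Your proposal is correct and follows essentially the same route as the paper: H\"older/Minkowski in the split variables $\bm{x}_j,\bm{x}_j'$, the three-dimensional embeddings $H^{\frac{3}{4}}\hookrightarrow L^4$, $H^{\frac{3}{2}+\delta}\hookrightarrow L^\infty$, $D^1\hookrightarrow L^6$, $H^{\frac{1}{2}}\hookrightarrow L^3$, Young's inequality to produce the $\varepsilon$-weights (the paper applies it pointwise in Fourier space, you via interpolation of norms -- the exponents $7$ and $\tfrac{3+2\delta}{1-2\delta}$ come out identically), duality for the $H^{-2}$ bounds, and bilinear partial-trace estimates after telescoping for \eqref{current}. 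Your only genuine deviations -- Hardy's inequality for the Coulomb $H^{-2}$ bound instead of reusing the corresponding estimate in \eqref{KatoRellich}, and the explicit paramagnetic/diamagnetic splitting of $\bm{J}_j$ -- are cosmetic.
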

\begin{proof}
For instance we can use Tonelli's theorem, Hölder's inequality, the Sobolev embedding $H^{\frac{3}{4}}\hookrightarrow L^4$ as well as the Young inequalities $\bm{p}_j^2\leq \frac{1}{2\varepsilon^2}+\frac{\varepsilon^2}{2}\bm{p}_j^4$ and $|\bm{p}_j|^{\frac{7}{2}}\leq \frac{1}{8\varepsilon^{14}}+\frac{7\varepsilon^2}{8}\bm{p}_j^4$ to obtain
\begin{align*}
\|\bm{A}(\bm{x}_j)\cdot\nabla_{\bm{x}_j}\psi\|_{L^2}^2&\leq\int_{\mathbb{R}^{3(N-1)}}\Bigl(\int_{\mathbb{R}^3}|\bm{A}(\bm{x}_j)|^4\,\mathrm{d}\bm{x}_j\Bigr)^{\frac{1}{2}}\Bigl(\int_{\mathbb{R}^3}| \nabla_{\bm{x}_j}\psi(\bm{x})|^4\,\mathrm{d}\bm{x}_j\Bigr)^{\frac{1}{2}}\,\mathrm{d}\bm{x}_j'\\
&\lesssim \|\bm{A}\|_{L^4}^2\int_{\mathbb{R}^{3(N-1)}}\int_{\mathbb{R}^3}\bigl|(1-\Delta)^{\frac{3}{8}}\nabla\psi^{j,\bm{x}_j'}(\bm{x}_j)\bigr|^2\,\mathrm{d}\bm{x}_j\,\mathrm{d}\bm{x}_j'\\
&\lesssim \|\bm{A}\|_{L^4}^2\bigl(\varepsilon^{-14}\|\psi\|_{L^2}^2+\varepsilon^2\|\Delta_{\bm{x}_j}\psi\|_{L^2}^2\bigr),
\end{align*}
where we for $\bm{x}_j'=(\bm{x}_1,\ldots,\bm{x}_{j-1},\bm{x}_{j+1},\ldots,\bm{x}_N)\in\mathbb{R}^{3(N-1)}$ introduce the mapping $\psi^{j,\bm{x}_j'}:\bm{x}_j\mapsto \psi(\bm{x}_1,\ldots,\bm{x}_{j-1},\bm{x}_j,\bm{x}_{j+1},\ldots,\bm{x}_N)$ that for almost all vectors $\bm{x}_j'$ is contained in the Sobolev space $H^2(\mathbb{R}^3)$ and satisfies the identities $(1-\Delta)^{\frac{s}{2}}\bigl[\psi^{j,\bm{x}_j'}\bigr]=\bigl[(1-\Delta_{\bm{x}_j})^{\frac{s}{2}}\psi\bigr]^{j,\bm{x}_j'}$ and $\partial^\alpha\bigl[\psi^{j,\bm{x}_j'}\bigr]=[\partial_{\bm{x}_j}^\alpha\psi]^{j,\bm{x}_j'}$ for any $s\leq 2$ and any multi-index $\alpha$ with $|\alpha|\leq 2$. The other estimates in \eqref{KatoRellich} follow analogously by using the Sobolev embedding $H^{\frac{3}{2}+\delta}\hookrightarrow L^\infty$ instead of $H^{\frac{3}{4}}\hookrightarrow L^4$. 

To prove the first inequality in \eqref{Hminzwei} we first note that for $\xi\in C_0^\infty$,
\begin{align}
\|\mathrm{div}_{\bm{x}_j}(\bm{A}(\bm{x}_j)\xi)\|_{H^{-2}}&\leq (2\pi)^{3N}\sum_{k=1}^3\sup_{\|\eta\|_{L^2}=1}\left|\left(\overline{A^k}(\bm{x}_j)\mathscr{F}^{-1}\bigl[(1+\bm{p}^2)^{-\frac{1}{2}}\eta\bigr],\xi\right)_{L^2}\right|\nonumber\\
&\lesssim \|\bm{A}\|_{L^4}\|\xi\|_{L^2},\label{vurdC}
\end{align}
where we use the Riesz-Fréchet theorem and the Sobolev embedding $H^1\hookrightarrow L^4$. For a given $\psi\in L^2$ we can therefore choose a sequence $(\psi_n)_{n\in\mathbb{N}}$ of $C_0^\infty$-functions converging in $L^2$ to $\psi$ and use \eqref{vurdC} to conclude that $(\mathrm{div}_{\bm{x}_j}(\bm{A}(\bm{x}_j)\psi_n))_{n\in\mathbb{N}}$ is a Cauchy sequence in the Hilbert space $H^{-2}$, whereby it must converge to some limit in $H^{-2}$. But this limit has to be $\mathrm{div}_{\bm{x}_j}(\bm{A}(\bm{x}_j)\psi)$ since the convergence $\mathrm{div}_{\bm{x}_j}(\bm{A}(\bm{x}_j)\psi_n)\xrightarrow[n\to\infty]{} \mathrm{div}_{\bm{x}_j}(\bm{A}(\bm{x}_j)\psi)$ holds in the space $\mathscr{D}'$ of distributions. Thus, the first estimate of \eqref{Hminzwei} is true and each of the remaining two inequalities follow by combining the Riesz-Fréchet theorem with the corresponding estimate in \eqref{KatoRellich}.

Finally, \eqref{current} is easy to derive from the general estimates 
\begin{align*}
&\Bigl(\int_{\mathbb{R}^3}\Bigl|\int_{\mathbb{R}^{3(N-1)}}\Psi_1(\bm{x})\nabla_{\bm{x}_j}\Psi_2(\bm{x})\,\mathrm{d}\bm{x}_j'\Bigr|^2\,\mathrm{d}\bm{x}_j\Bigr)^{\frac{1}{2}}\\
&\lesssim \min\bigl\{\|(1-\Delta_{\bm{x}_j})^{\frac{1}{4}}\Psi_1\|_{L^2}\|\nabla_{\bm{x}_j}\otimes\nabla_{\bm{x}_j}\Psi_2\|_{L^2},\\
&\hspace{6.5cm}\|(1-\Delta_{\bm{x}_j})^{\frac{3}{4}+\frac{\delta}{2}}\Psi_1\|_{L^2}\|\nabla_{\bm{x}_j}\Psi_2\|_{L^2}\bigr\}
\end{align*}
and
\begin{align*}
&\Bigl(\int_{\mathbb{R}^3}\Bigl|\int_{\mathbb{R}^{3(N-1)}}\bm{A}(\bm{x}_j)\Psi_1(\bm{x})\Psi_2(\bm{x})\,\mathrm{d}\bm{x}_j'\Bigr|^2\,\mathrm{d}\bm{x}_j\Bigr)^{\frac{1}{2}}\\
&\lesssim\min\Bigl\{\|\bm{A}\|_{L^6}\|\nabla_{\bm{x}_j}\Psi_1\|_{L^2}\|\nabla_{\bm{x}_j}\Psi_2\|_{L^2},\|\bm{A}\|_{L^2}\prod_{k=1}^2\|(1-\Delta_{\bm{x}_j})^{\frac{3}{4}+\frac{\delta}{2}}\Psi_k\|_{L^2}\Bigr\}
\end{align*}
on mappings $\Psi_1,\Psi_2:\mathbb{R}^{3N}\to \mathbb{C}$ and $\bm{A}:\mathbb{R}^3\to \mathbb{C}^3$ that follow for $\delta>0$ from Minkowski's integral inequality, the Sobolev embeddings $D^1\hookrightarrow L^6$, $H^{\frac{1}{2}}\hookrightarrow L^3$, $H^{\frac{3}{2}+\delta}\hookrightarrow L^\infty$ and Hölder's inequality. By $\nabla_{\bm{x}_j}\otimes\nabla_{\bm{x}_j}\Psi_2$ we here mean a $9$-vector with the derivatives $\partial_{x_j^k}\partial_{x_j^\ell}\Psi_2$ as components ($k,\ell\in\{1,2,3\}$).
\end{proof}
\begin{bemaerkning}
The lemma above allows us to clarify the exact meaning of a solution to \eqref{MSP}. If for some given pair $(\psi,\bm{A})\in C(\mathcal{I}_{T},H^2)\times  C\bigl(\mathcal{I}_T;H^{\frac{3}{2}}\bigr)$ the derivative $\partial_t\bm{A}$ of $\bm{A}\in \mathscr{D}'\bigl(\mathcal{I}_T^\circ;H^{\frac{3}{2}}\bigr)$ is a continuous mapping $\mathcal{I}_{T}\to H^{\frac{1}{2}}$ then by boundedness of $P:H^1\to H^1$ and the estimates in Lemma \ref{Estlemma} we have
\begin{align}
c^2\Bigl(\Delta\bm{A}+\frac{4\pi}{c}\sum_{j=1}^N P\bm{J}_j[\psi,\bm{A}]\Bigr)\in C\bigl(\mathcal{I}_{T};H^{-\frac{1}{2}}\bigr)\label{Asol}
\end{align}
and
\begin{align}
-\frac{i}{\hbar}\Bigl(\sum_{j=1}^N \frac{1}{2m_j}\nabla_{j,\bm{A}}^2\psi+\sum_{1\leq j<k\leq N} \frac{Q_jQ_k}{|\bm{x}_j-\bm{x}_k|}\psi+\mathscr{E}_{\mathrm{EM}}[\bm{A},\partial_t\bm{A}]\psi\Bigr)\in C(\mathcal{I}_{T};L^2).\label{psisol}
\end{align}
A pair $(\psi,\bm{A})\in C(\mathcal{I}_T;H^2)\times \bigl(C\bigl(\mathcal{I}_T;H^{\frac{3}{2}}\bigr)\cap C^1\bigl(\mathcal{I}_T;H^{\frac{1}{2}}\bigr)\bigr)$ is said to solve \eqref{MSP} if the second derivative $\partial_t^2\bm{A}$ of $\bm{A}\in \mathscr{D}'\bigl(\mathcal{I}_T^\circ;H^{\frac{3}{2}}\bigr)$ equals \eqref{Asol} and the derivative $\partial_t\psi$ of $\psi\in \mathscr{D}'(\mathcal{I}_T^\circ;H^2)$ equals \eqref{psisol}.
\end{bemaerkning}
For any solution $(\psi,\bm{A})\in C(\mathcal{I}_T;H^2)\times \bigl(C\bigl(\mathcal{I}_T;H^{\frac{3}{2}}\bigr)\cap C^1\bigl(\mathcal{I}_T;H^{\frac{1}{2}}\bigr)\bigr)$ to \eqref{MSPmaerke} the pair $\bigl(\mathrm{e}^{-\frac{i}{\hbar}\int_0^t\mathscr{E}_{\mathrm{EM}}[\bm{A},\partial_t\bm{A}](s)\,\mathrm{d}s}\psi,\bm{A}\bigr)$ will solve \eqref{MSP} -- here, the field energy $\mathscr{E}_{\mathrm{EM}}[\bm{A},\partial_t\bm{A}]$ is absolutely continuous $\mathcal{I}_{T}\to\mathbb{R}$ because $\partial_t\bm{A}$ and $\nabla\times \bm{A}$ are both absolutely continuous $\mathcal{I}_{T}\to H^{-\frac{1}{2}}$ and continuous $\mathcal{I}_{T}\to H^{\frac{1}{2}}$. Conversely, any $(\psi,\bm{A})\in C(\mathcal{I}_T;H^2)\times \bigl(C\bigl(\mathcal{I}_T;H^{\frac{3}{2}}\bigr)\cap C^1\bigl(\mathcal{I}_T;H^{\frac{1}{2}}\bigr)\bigr)$ solving \eqref{MSP} gives rise to the solution $\bigl(\mathrm{e}^{\frac{i}{\hbar}\int_0^t\mathscr{E}_{\mathrm{EM}}[\bm{A},\partial_t\bm{A}](s)\,\mathrm{d}s}\psi,\bm{A}\bigr)$ to \eqref{MSPmaerke}. Therefore we can concentrate on uniquely solving the simplified initial value problem \eqref{MSPmaerke}+\eqref{initialcondi} instead of \eqref{MSP}+\eqref{initialcondi}.

It is noteworthy that for any solution $(\psi,\bm{A})$ to the system \eqref{MSPmaerke} (or \eqref{MSP} for that matter) the norm $\|\psi\|_{L^2}:\mathcal{I}_{T}\ni t\mapsto \|\psi(t)\|_{L^2}\in\mathbb{R}$ will be a constant of the motion. The absolute continuity of $\psi:\mathcal{I}_{T}\to L^2$ implies namely that $\|\psi\|_{L^2}^2$ is absolutely continuous and for almost all $t\in\mathcal{I}_{T}$
\begin{align}
\partial_t\|\psi\|_{L^2}^2(t)=\frac{2}{\hbar}\mathrm{Im}\left(\psi(t),\sum_{j=1}^N \frac{1}{2m_j}\nabla_{j,\bm{A}}^2\psi(t)+\sum_{1\leq j<k\leq N}\frac{Q_jQ_k}{|\bm{x}_j-\bm{x}_k|}\psi(t)
\right)_{L^2}\!\!=0.\label{consi}
\end{align}
So if the initial condition $\psi_0$ is a unit vector in $L^2$ then the wave function $\psi$ will continue to be a unit vector in $L^2$ at all later times of existence -- this is consistent with the quantum mechanical interpretation of $|\psi(t)(\bm{x}_1,\ldots,\bm{x}_N)|^2$ as the probability density at time $t$ for finding particle $1$ at $\bm{x}_1$, particle $2$ at $\bm{x}_2$ etc.

Let us emphasize a final important consequence of Lemma \ref{Estlemma} -- namely that for any choice of divergence free vector potential $\bm{A}\in L^4(\mathbb{R}^3;\mathbb{R}^3)$ the formal operator acting on $\psi$ on the right hand side of the second equation in \eqref{MSPmaerke} can be realized as a symmetric operator in $L^2(\mathbb{R}^{3N})$ with dense domain $H^2(\mathbb{R}^{3N})$. By the Kato-Rellich theorem the selfadjointness of the nonnegative operator $-\sum_{j=1}^N\frac{\hbar^2}{2m_j}\Delta_{\bm{x}_j}:H^2(\mathbb{R}^{3N})\to L^2(\mathbb{R}^{3N})$ and the estimates \eqref{KatoRellich} even imply that $\sum_{j=1}^N \frac{1}{2m_j}\nabla_{j,\bm{A}}^2+\sum_{1\leq j<k\leq N} \frac{Q_jQ_k}{|\bm{x}_j-\bm{x}_k|}$ is selfadjoint on the domain $H^2(\mathbb{R}^{3N})$ with a lower bound that goes like some power of $\langle\|\bm{A}\|_{L^4}\rangle$.

\chapter{The Many-Body Schrödinger Equation}\label{SchEq}
We will eventually solve \eqref{MSPmaerke} by applying the Banach fixed-point theorem to the solution operator of a certain linearization of \eqref{MSPmaerke}. In this section we approach the many-body Schrödinger equation
\begin{align}
i\hbar\partial_t \xi &=\Bigl(\sum_{j=1}^N \frac{1}{2m_j}\nabla_{j,\bm{A}}^2+\sum_{1\leq j<k\leq N}\frac{Q_jQ_k}{|\bm{x}_j-\bm{x}_k|}\Bigr)\xi\label{LinSchr}
\end{align}
by considering $\bm{A}$ as a fixed (time-dependent) vector potential. We supply \eqref{LinSchr} with the initial condition
\begin{align}
\xi(\tau)&= \psi_0,\label{LinSchrI}
\end{align}
where $\tau\in\mathcal{I}_{T}$ and $\psi_0$ are also fixed and thought of as given beforehand. We will show that this initial value problem is well-posed by applying the following fundamental result by Kato concerning general linear evolution equations of the type
\begin{align*}
\partial_t\xi+\mathbb{A}(t)\xi&=\mathbb{F}(t),\\
\xi(\tau)=\psi_0
\end{align*}
in a Banach space $\mathcal{X}$.
\begin{saetning}{\emph{\cite[Theorem I]{Kato}}}\label{Katoth}
Suppose that
\begin{enumerate}
\item[\emph{(i')}] For all $t\in\mathcal{I}_T$ the operator $-\mathbb{A}(t)$ generates a strongly continuous one-parameter semigroup $[0,\infty)\ni s\mapsto \exp\bigl(-s\mathbb{A}(t)\bigr)\in \mathcal{L}(\mathcal{X})$ and the family $\{\mathbb{A}(t)\,|\,t\in\mathcal{I}_T\}$ is quasi-stable with stability index $(M,\beta)$, in the sense that
\begin{align*}
\Bigl\|\prod_{j=1}^k \exp(-s_j\mathbb{A}(t_j))\Bigr\|_{\mathcal{L}(\mathcal{X})}\leq M\exp\Bigl(\sum_{j=1}^k s_j\beta(t_j)\Bigr)
\end{align*}
for all $k\in\mathbb{N}$, $0\leq t_1\leq\cdots\leq t_k\leq T$ and $s_1,\ldots,s_k\in[0,\infty)$, where $M$ is a constant, $\beta:\mathcal{I}_T\to \mathbb{R}$ is upper Lebesgue integrable and the product on the left hand side is time-ordered so that a factor with larger $t_j$ stands to the left of ones with smaller $t_j$.
\item[\emph{(ii'\,\!'\,\!')}] There exists a Banach space $\mathcal{Y}$, continuously and densely embedded in $\mathcal{X}$, and a family $\{\mathbb{S}(t)\,|\,t\in\mathcal{I}_T\}$ of isomorphisms $\mathcal{Y}\to\mathcal{X}$, such that
\begin{align*}
\mathbb{S}(t)\mathbb{A}(t)\mathbb{S}(t)^{-1}=\mathbb{A}(t)+\mathbb{B}(t)\textrm{ for almost all }t\in\mathcal{I}_T,
\end{align*}
where $\mathbb{B}$ maps into $\mathcal{L}(\mathcal{X})$, $\mathbb{B}(\cdot)x$ is strongly measurable \emph{(}as an $\mathcal{X}$-valued mapping\emph{)} for all $x\in\mathcal{X}$ and $\|\mathbb{B}(\cdot)\|_{\mathcal{L}(\mathcal{X})}$ is upper Lebesgue integrable. Furthermore, there exists a function $\dot{\mathbb{S}}$ defined almost everywhere on $\mathcal{I}_T$ and mapping into $\mathcal{L}(\mathcal{Y},\mathcal{X})$ such that $\dot{\mathbb{S}}(\cdot)y$ is strongly measurable for all $y\in\mathcal{Y}$, $\bigl\|\dot{\mathbb{S}}(\cdot)\bigr\|_{\mathcal{L}(\mathcal{Y},\mathcal{X})}$ is upper Lebesgue integrable and $\mathbb{S}$ is a strong indefinite integral of $\dot{\mathbb{S}}$.
\item[\emph{(iii)}] For all $t\in\mathcal{I}_T$ the domain of the operator $\mathbb{A}(t)$ in $\mathcal{X}$ contains $\mathcal{Y}$ and $\mathbb{A}:\mathcal{I}_T\to \mathcal{L}(\mathcal{Y},\mathcal{X})$ is norm-continuous.
\end{enumerate}
Then there exists a unique $\mathscr{U}$ defined on the triangle $\mathcal{T}_T=\{(t,\tau)\in \mathcal{I}_T^2\mid t\geq \tau\}$ with the following properties.
\begin{enumerate}
\item[\emph{(a)}] $\mathscr{U}$ is strongly continuous $\mathcal{T}_T\to \mathcal{L}(\mathcal{X})$ with $\mathscr{U}(t,t)=1$ for all $t\in\mathcal{I}_T$,
\item[\emph{(b)}] $\mathscr{U}(t,\tau)\mathscr{U}(\tau,s)=\mathscr{U}(t,s)$ for all $(t,\tau,s)$ satisfying $0\leq s\leq \tau\leq t\leq T$,
\item[\emph{(c)}] For all $(t,\tau)\in \mathcal{T}_T$ the inclusion $\mathscr{U}(t,\tau)\mathcal{Y}\subset \mathcal{Y}$ holds and $\mathscr{U}$ is strongly continuous $\mathcal{T}_T\to \mathcal{L}(\mathcal{Y})$,
\item[\emph{(d)}] The strong partial derivatives $\partial_t\mathscr{U}(t,\tau)y=-\mathbb{A}(t)\mathscr{U}(t,\tau)y$ as well as  $\partial_{\tau}\mathscr{U}(t,\tau)y=\mathscr{U}(t,\tau)\mathbb{A}(\tau)y$ exist in $\mathcal{X}$ for all $(t,\tau,y)\in \mathcal{T}_T\times \mathcal{Y}$ and $\partial_t\mathscr{U},\partial_{\tau}\mathscr{U}:\mathcal{T}_T\to \mathcal{L}(\mathcal{Y},\mathcal{X})$ are both strongly continuous.
\end{enumerate}
\end{saetning}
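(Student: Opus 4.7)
The plan is to construct the propagator $\mathscr{U}(t,\tau)$ by a Yosida-style piecewise-constant discretization of the generator and then pass to the limit. For each $n\in\mathbb{N}$ I partition $\mathcal{I}_T$ into subintervals $[t_k^n,t_{k+1}^n)$ with vanishing mesh and replace $\mathbb{A}(t)$ by the piecewise-constant family $\mathbb{A}_n(t)=\mathbb{A}(t_k^n)$ on the $k$'th subinterval. Since each $-\mathbb{A}(t_k^n)$ generates a $C_0$-semigroup on $\mathcal{X}$, the associated approximate propagator $\mathscr{U}_n(t,\tau)$ is a concrete time-ordered product of these semigroups, and hypothesis (i') directly yields the uniform bound $\|\mathscr{U}_n(t,\tau)\|_{\mathcal{L}(\mathcal{X})}\leq M\exp\bigl(\int_\tau^t\beta(s)\,\mathrm{d}s\bigr)$ independent of $n$.

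The main analytic step is to show that $\mathscr{U}_n(t,\tau)y$ is Cauchy in $\mathcal{X}$ for $y\in\mathcal{Y}$. Using the (formal) Duhamel identity
\begin{align*}
(\mathscr{U}_n(t,\tau)-\mathscr{U}_m(t,\tau))y=\int_\tau^t\mathscr{U}_n(t,s)[\mathbb{A}_m(s)-\mathbb{A}_n(s)]\mathscr{U}_m(s,\tau)y\,\mathrm{d}s,
\end{align*}
I would control the integrand by combining (a) the uniform $\mathcal{L}(\mathcal{X})$-bound on $\mathscr{U}_n$ from the previous step, (b) the fact that $\|\mathbb{A}_n(s)-\mathbb{A}_m(s)\|_{\mathcal{L}(\mathcal{Y},\mathcal{X})}\to 0$ uniformly in $s$, which is exactly the norm-continuity of $\mathbb{A}:\mathcal{I}_T\to\mathcal{L}(\mathcal{Y},\mathcal{X})$ provided by (iii), and (c) a uniform bound on $\|\mathscr{U}_m(s,\tau)y\|_{\mathcal{Y}}$. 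Density of $\mathcal{Y}$ in $\mathcal{X}$ together with the uniform $\mathcal{L}(\mathcal{X})$-bound will then extend the pointwise limit on $\mathcal{Y}$ to a bounded operator $\mathscr{U}(t,\tau)\in\mathcal{L}(\mathcal{X})$.

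The genuinely hard ingredient is point (c): establishing uniform $\mathcal{Y}$-bounds on $\mathscr{U}_n$, and this is precisely where hypothesis (ii''') is indispensable. The similarity identity $\mathbb{S}(t)\mathbb{A}(t)\mathbb{S}(t)^{-1}=\mathbb{A}(t)+\mathbb{B}(t)$ shows that conjugation by $\mathbb{S}(t)$ converts $\exp(-s\mathbb{A}(t))$ into the semigroup generated by $-(\mathbb{A}(t)+\mathbb{B}(t))$; by a Dyson/Duhamel expansion and a Gronwall argument using upper integrability of $\|\mathbb{B}(\cdot)\|_{\mathcal{L}(\mathcal{X})}$, this perturbed semigroup differs from $\exp(-s\mathbb{A}(t))$ only by a factor bounded on $\mathcal{X}$. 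This allows me to transfer the $\mathcal{L}(\mathcal{X})$-quasi-stability of $\{\mathbb{A}(t)\}$ into a $\mathcal{L}(\mathcal{Y})$-quasi-stability of the same family, which then yields the required $\mathcal{Y}$-control on the time-ordered products defining $\mathscr{U}_n$. The auxiliary hypothesis that $\|\dot{\mathbb{S}}(\cdot)\|_{\mathcal{L}(\mathcal{Y},\mathcal{X})}$ be upper integrable enters here to handle the $t$-dependence of $\mathbb{S}$ when passing between the distinct base points $t_j$ appearing in the product.

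Once $\mathscr{U}(t,\tau)$ is in hand, properties (a)--(d) follow relatively routinely. Properties (a) and (b) are inherited from the semigroup factorizations of $\mathscr{U}_n$ together with the strong continuity of the individual factors; (c) follows from the uniform $\mathcal{Y}$-bound already secured; and (d) is obtained by exploiting $\partial_t\mathscr{U}_n(t,\tau)y=-\mathbb{A}_n(t)\mathscr{U}_n(t,\tau)y$ on each subinterval, together with the norm-continuity of $\mathbb{A}:\mathcal{I}_T\to\mathcal{L}(\mathcal{Y},\mathcal{X})$ and the already-established strong continuity of $\mathscr{U}$ in $\mathcal{Y}$. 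The formula $\partial_\tau\mathscr{U}(t,\tau)y=\mathscr{U}(t,\tau)\mathbb{A}(\tau)y$ follows symmetrically. Uniqueness of $\mathscr{U}$ is immediate from (d) by differentiating $\mathscr{U}_1(t,s)\mathscr{U}_2(s,\tau)y$ in $s$ for any two candidates $\mathscr{U}_1,\mathscr{U}_2$ and integrating from $s=\tau$ to $s=t$.
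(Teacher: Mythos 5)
This statement is Kato's Theorem I, which the paper imports by citation rather than proving; the only argument the paper gives in its vicinity (Remark \ref{Katore}) already describes Kato's construction of $\mathscr{U}$ as the strong limit of time-ordered products $\mathscr{U}^n$ built from piecewise-constant generators. Your sketch correctly outlines exactly that argument -- discretization, the uniform $\mathcal{L}(\mathcal{X})$-bound from quasi-stability, the transfer of stability to $\mathcal{Y}$ via the similarity transformation $\mathbb{S}$, and Duhamel plus norm-continuity of $\mathbb{A}$ for convergence -- so it matches the approach of the cited source.
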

\begin{bemaerkning}\label{Katore}
If $\mathbb{A}$ satisfies the points (i'), (ii'\,\!'\,\!') and (iii) then $\mathbb{A}'=-\mathbb{A}\circ \Re$ with $\Re:\mathcal{I}_T\ni t\mapsto (T-t)\in \mathcal{I}_T$ will automatically fulfill (ii'\,\!'\,\!') and (iii). This can easily be checked by choosing $\bigl(\mathbb{S}',\mathbb{B}',\dot{\mathbb{S}}'\bigr)=\bigl(\mathbb{S},-\mathbb{B},-\dot{\mathbb{S}}\bigr)\circ \Re$ (with a hopefully obvious notation) and using that for any Banach space $\mathcal{Z}$ the function $f\mapsto (-f\circ \Re)$ not only conserves the property of strong measurability $\mathcal{I}_T\to \mathcal{Z}$, but it also maps $L^1(\mathcal{I}_T;\mathcal{Z})$ isometrically onto itself. If $\mathbb{A}'$ also happens to satisfy (i') in the sense that $-\mathbb{A}'(t)$ generates a $C_0$-semigroup for all $t\in\mathcal{I}_T$ and the family $\{\mathbb{A}'(t)\mid t\in\mathcal{I}_T\}$ is quasi-stable with stability index $(M,\beta\circ\Re)$, then we can combine the evolution operators $\mathscr{U}_{\mathbb{A}}$ and $\mathscr{U}_{\mathbb{A}'}$ -- whose existence are ensured by Theorem \ref{Katoth} -- into a single evolution operator $\mathscr{U}$ defined in \emph{all} points $(t,\tau)\in \mathcal{I}_T^2$ by setting
\begin{align*}
\mathscr{U}(t,\tau)=\begin{dcases}\mathscr{U}_{\mathbb{A}}(t,\tau)&\textrm{for }t\geq \tau\\\mathscr{U}_{\mathbb{A}'}(T-t,T-\tau)&\textrm{for }t< \tau\end{dcases}.
\end{align*}
This operator satisfies
\begin{enumerate}
\item[(a')] $\mathscr{U}$ is strongly continuous $\mathcal{I}_T^2\to \mathcal{L}(\mathcal{X})$ with $\mathscr{U}(t,t)=1$ for all $t\in\mathcal{I}_T$,
\item[(b')] $\mathscr{U}(t,\tau)\mathscr{U}(\tau,s)=\mathscr{U}(t,s)$ for all $(t,\tau,s)\in \mathcal{I}_T^3$,
\item[(c')] For all $(t,\tau)\in \mathcal{I}_T^2$ the inclusion $\mathscr{U}(t,\tau)\mathcal{Y}\subset \mathcal{Y}$ holds and $\mathscr{U}$ is strongly continuous $\mathcal{I}_T^2\to \mathcal{L}(\mathcal{Y})$,
\item[(d')] The strong partial derivatives $\partial_t\mathscr{U}(t,\tau)y=-\mathbb{A}(t)\mathscr{U}(t,\tau)y$ as well as  $\partial_{\tau}\mathscr{U}(t,\tau)y=\mathscr{U}(t,\tau)\mathbb{A}(\tau)y$ exist in $\mathcal{X}$ for all $(t,\tau,y)\in \mathcal{I}_T^2\times \mathcal{Y}$ and $\partial_t\mathscr{U},\partial_{\tau}\mathscr{U}:\mathcal{I}_T^2\to \mathcal{L}(\mathcal{Y},\mathcal{X})$ are both strongly continuous.
\end{enumerate}
Here, (b') is the only point that does not follow immediately from the properties listed in Theorem \ref{Katoth} of the individual operators $\mathscr{U}_{\mathbb{A}}$ and $\mathscr{U}_{\mathbb{A}'}$ -- however, it suffices to prove the identities
\begin{align}
\mathscr{U}_{\mathbb{A}}(t_0,\tau_0)\mathscr{U}_{\mathbb{A}'}(T-\tau_0,T-t_0)=\mathscr{U}_{\mathbb{A}'}(T-\tau_0,T-t_0)\mathscr{U}_{\mathbb{A}}(t_0,\tau_0)=1\label{bpoint}
\end{align}
for all $(t_0,\tau_0)\in\mathcal{T}_T$. To prove \eqref{bpoint} note first that by \cite[Proposition 4.4]{Kato1} the operator $\widetilde{\mathbb{A}}(t)$ (resp. $\widetilde{\mathbb{A}}'(t)$) in $\mathcal{Y}$ acting like $\mathbb{A}(t)$ (resp. $\mathbb{A}'(t)$) on the domain $\{y\in\mathcal{Y}\mid \mathbb{A}(t)y\in\mathcal{Y}\}$ (resp. $\{y\in\mathcal{Y}\mid \mathbb{A}'(t)y\in\mathcal{Y}\}$) is quasi-stable and the second coordinate of it's stability index can be chosen to be $\widetilde{\beta}=\beta+M\|\mathbb{B}(\cdot)\|_{\mathcal{L}(\mathcal{X})}$ (resp. $\widetilde{\beta}\circ\Re$). Without loss of generality we can here assume that $\beta$ and $\widetilde{\beta}$ are integrable $\mathcal{I}_T\to [0,\infty)$ (otherwise replace them by integrable majorants). With the help of \cite[Lemma A1]{Kato} and the remark after \cite[Lemma A2]{Kato} consider now a sequence $\bigl(\{\mathcal{I}_{T}^{n1},\ldots,\mathcal{I}_{T}^{nm_n}\}\bigr)_{n\in\mathbb{N}}$ of partitions of the interval $\mathcal{I}_{T}$ into subintervals with $\sup_j \bigl|\mathcal{I}_{T}^{nj}\bigr|\xrightarrow[n\to\infty]{} 0$ and a corresponding sequence $\bigl(\{t^{n1},\ldots,t^{n m_n}\}\bigr)_{n\in\mathbb{N}}$ with $t^{nj}\in \mathcal{I}_{T}^{nj}$ for $n\in\mathbb{N}$ and $j\in\{1,\ldots,m_n\}$ such that the Riemann step functions $\sum_{j=1}^{m_n}\beta\bigl(t^{nj}\bigr)1_{\mathcal{I}_T^{nj}}$ and $\sum_{j=1}^{m_n}\widetilde{\beta}\bigl(t^{nj}\bigr)1_{\mathcal{I}_T^{nj}}$ approximate $\beta$ respectively $\widetilde{\beta}$, in $L^1(\mathcal{I}_T)$ as well as pointwise almost everywhere. Then by the proof of \cite[Theorem I]{Kato} the operator $\mathscr{U}_{\mathbb{A}}(t,\tau)$ is the strong limit in $\mathcal{L}(L^2)$ (uniformly in $(t,\tau)\in \mathcal{T}_T$) of a sequence $\bigl(\mathscr{U}_{\mathbb{A}}^n(t,\tau)\bigr)_{n\in\mathbb{N}}$ of operators satisfying
\begin{itemize}
\item $\mathscr{U}_{\mathbb{A}}^n(t,\tau)=\mathrm{e}^{-(t-\tau)\mathbb{A}(t^{nj})}$ for $t,\tau\in \overline{\mathcal{I}^{nj}_{T}}$ with $t\geq \tau$,
\item $\mathscr{U}_{\mathbb{A}}^n(t,\tau)=\mathscr{U}_{\mathbb{A}}^n(t,s)\mathscr{U}_{\mathbb{A}}^n(s,\tau)$ for $t\geq s\geq \tau$.
\end{itemize}
But here the sequence $\bigl(\{T-\mathcal{I}_{T}^{n1},\ldots,T-\mathcal{I}_{T}^{nm_n}\}\bigr)_{n\in\mathbb{N}}$ of partitions of $\mathcal{I}_T$ satisfies $\sup_j\bigl|T-\mathcal{I}_{T}^{nj}\bigr|\xrightarrow[n\to\infty]{}0$ and the corresponding Riemann step functions $\sum_{j=1}^{m_n}(\beta\circ\Re)\bigl(T-t^{nj}\bigr)1_{T-\mathcal{I}_T^{nj}}$ and $\sum_{j=1}^{m_n}\bigl(\widetilde{\beta}\circ\Re\bigr)\bigl(T-t^{nj}\bigr)1_{T-\mathcal{I}_T^{nj}}$ approximate $\beta\circ\Re$ respectively $\widetilde{\beta}\circ\Re$, in $L^1(\mathcal{I}_T)$ as well as pointwise almost everywhere. Consequently, $\mathscr{U}_{\mathbb{A}'}(T-\tau,T-t)$ is also the strong limit in $\mathcal{L}(L^2)$ (uniformly in $(t,\tau)\in\mathcal{T}_T$) of a sequence $\bigl(\mathscr{U}_{\mathbb{A}'}^n(T-\tau,T-t)\bigr)_{n\in\mathbb{N}}$ satisfying
\begin{itemize}
\item $\mathscr{U}_{\mathbb{A}'}^n(T-\tau,T-t)=\mathrm{e}^{(t-\tau)\mathbb{A}(t^{nj})}$ for $t,\tau\in\overline{\mathcal{I}^{nj}_{T}}$ with $t\geq \tau$,
\item $\mathscr{U}_{\mathbb{A}'}^n(T-\tau,T-t)=\mathscr{U}_{\mathbb{A}'}^n(T-\tau,T-s)\mathscr{U}_{\mathbb{A}'}^n(T-s,T-t)$ for $t\geq s\geq \tau$.
\end{itemize}
Now, \eqref{bpoint} follows immediately from the four properties of $\mathscr{U}_{\mathbb{A}}^n$ and $\mathscr{U}_{\mathbb{A}'}^n$ listed above.
\end{bemaerkning}
We now apply Theorem \ref{Katoth} to the problem \eqref{LinSchr}--\eqref{LinSchrI}.
\begin{korollar}\label{existenceofevolution}
For all $T>0$ and all $\bm{A}\in W^{1,1}\bigl(\mathcal{I}_{T};L^4(\mathbb{R}^3;\mathbb{R}^3)\bigr)$ whose continuous representative is divergence free at all times there exists a unique evolution operator $\mathscr{U}_{\bm{A}}$ defined on $\mathcal{I}_{T}^2$ such that
\begin{enumerate}
\item[\emph{(A)}] $\mathscr{U}_{\bm{A}}$ is strongly continuous $\mathcal{I}_{T}^2\to\mathcal{L}(L^2)$ with $\mathscr{U}_{\bm{A}}(t,t)=1$ for $t\in\mathcal{I}_{T}$,
\item[\emph{(B)}] $\mathscr{U}_{\bm{A}}(t,\tau)\mathscr{U}_{\bm{A}}(\tau,s)=\mathscr{U}_{\bm{A}}(t,s)$ for all $(t,\tau,s)\in \mathcal{I}_{T}^3$,
\item[\emph{(C)}] $\mathscr{U}_{\bm{A}}(t,\tau)H^2\subset H^2$ for $(t,\tau)\in\mathcal{I}_{T}^2$ and $\mathscr{U}_{\bm{A}}:\mathcal{I}_{T}^2\to \mathcal{L}(H^2)$ is strongly continuous,
\item[\emph{(D)}] The strong partial derivatives $\partial_t\mathscr{U}_{\bm{A}}(t,\tau)\psi_0$ and $\partial_\tau \mathscr{U}_{\bm{A}}(t,\tau)\psi_0$ exist in $L^2$ for all $(t,\tau)\in \mathcal{I}_T^2$ and $\psi_0\in H^2$ and are given by
\begin{align*}
i\hbar\partial_t \mathscr{U}_{\bm{A}}(t,\tau)\psi_0=\Bigl(\sum_{j=1}^N \frac{1}{2m_j}\nabla_{j,\bm{A}(t)}^2+\sum_{1\leq j<k\leq N}\frac{Q_jQ_k}{|\bm{x}_j-\bm{x}_k|}\Bigr)\mathscr{U}_{\bm{A}}(t,\tau)\psi_0
\end{align*}
respectively
\begin{align*}
\hbar\partial_{\tau} \mathscr{U}_{\bm{A}}(t,\tau)\psi_0=i\mathscr{U}_{\bm{A}}(t,\tau)\Bigl(\sum_{j=1}^N \frac{1}{2m_j}\nabla_{j,\bm{A}(\tau)}^2+\sum_{1\leq j<k\leq N}\frac{Q_j Q_k}{|\bm{x}_j-\bm{x}_k|}\Bigr)\psi_0.
\end{align*}
Moreover, $\partial_t\mathscr{U}_{\bm{A}},\partial_{\tau}\mathscr{U}_{\bm{A}}:\mathcal{I}_{T}^2\to \mathcal{L}(H^2,L^2)$ are strongly continuous.
\end{enumerate}
\end{korollar}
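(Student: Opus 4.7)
The plan is to invoke Kato's Theorem \ref{Katoth} together with Remark \ref{Katore}, working in $\mathcal{X}=L^2(\mathbb{R}^{3N})$, $\mathcal{Y}=H^2(\mathbb{R}^{3N})$, with the generator $\mathbb{A}(t):= i\hbar^{-1}H(t)$ where $H(t)$ denotes the formal operator on the right-hand side of \eqref{LinSchr}. Since $\bm{A}\in W^{1,1}(\mathcal{I}_T;L^4)\hookrightarrow C(\mathcal{I}_T;L^4)\cap L^\infty(\mathcal{I}_T;L^4)$, the number $M_0 := \sup_{t\in\mathcal{I}_T}\|\bm{A}(t)\|_{L^4}$ is finite. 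The Kato-Rellich reasoning recorded at the end of Chapter \ref{proof} then shows that $H(t)$ is self-adjoint on $H^2$ for every $t$ with a uniform lower bound $-C_0$ depending only on $M_0$. Hence $-\mathbb{A}(t)$ is skew-adjoint, generates a unitary group on $L^2$, and hypothesis (i') of Theorem \ref{Katoth} holds with stability index $(M,\beta) = (1,0)$. Hypothesis (iii) is checked by applying the first two estimates of Lemma \ref{Estlemma} to the differences making up $H(t)-H(s)$, which give
\begin{align*}
\|\mathbb{A}(t)-\mathbb{A}(s)\|_{\mathcal{L}(H^2,L^2)}\lesssim \|\bm{A}(t)-\bm{A}(s)\|_{L^4}(1+M_0),
\end{align*}
tending to zero by continuity of $\bm{A}:\mathcal{I}_T\to L^4$. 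The identities in (D) will come out of Kato's (d) after multiplication by $i\hbar$.

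The main obstacle is hypothesis (ii'''). I propose choosing $\mathbb{S}(t) := 1 + C_0 + H(t)$, which is a positive self-adjoint operator satisfying $\mathbb{S}(t) \geq 1$, and by the same Kato-Rellich estimates an isomorphism $H^2\to L^2$ whose norm and that of $\mathbb{S}(t)^{-1}$ are uniformly controlled in $t$. Crucially, $\mathbb{A}(t)$ and $\mathbb{S}(t)$ are bounded Borel functions of the single self-adjoint operator $H(t)$; the spectral theorem thus yields $\mathbb{S}(t)\mathbb{A}(t)\mathbb{S}(t)^{-1} = \mathbb{A}(t)$ as operators on $H^2$, so one can take $\mathbb{B}(t)=0$. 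Next I must analyse $\dot{\mathbb{S}}(t) = \partial_t H(t)$: the Coulomb-gauge conditions $\mathrm{div}\bm{A}(t) = \mathrm{div}\dot{\bm{A}}(t) = 0$ allow me to write
\begin{align*}
\dot{\mathbb{S}}(t)\phi = \sum_{j=1}^N\frac{1}{m_j}\Bigl(\tfrac{i\hbar Q_j}{c}\dot{\bm{A}}(t,\bm{x}_j)\cdot\nabla_{\bm{x}_j}\phi + \tfrac{Q_j^2}{c^2}\bm{A}(t,\bm{x}_j)\cdot\dot{\bm{A}}(t,\bm{x}_j)\phi\Bigr),
\end{align*}
whereupon Lemma \ref{Estlemma} (used with $B = \bm{A}(t)\cdot\dot{\bm{A}}(t)\in L^2$ for the second summand) gives $\|\dot{\mathbb{S}}(t)\|_{\mathcal{L}(H^2,L^2)}\lesssim \|\dot{\bm{A}}(t)\|_{L^4}(1+M_0)$, which is integrable on $\mathcal{I}_T$ by hypothesis. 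Strong measurability of $\dot{\mathbb{S}}(\cdot)\phi$ and the representation $\mathbb{S}(t)\phi-\mathbb{S}(\tau)\phi = \int_\tau^t\dot{\mathbb{S}}(s)\phi\,\mathrm{d}s$ for $\phi\in H^2$ both descend from $\bm{A}\in W^{1,1}(\mathcal{I}_T;L^4)$.

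Once (i'), (ii''') and (iii) are verified, Theorem \ref{Katoth} furnishes a unique evolution operator $\mathscr{U}_{\mathbb{A}}$ on the forward triangle $\mathcal{T}_T$ with Kato's properties (a)-(d). To extend to the full square $\mathcal{I}_T^2$ I invoke Remark \ref{Katore}: the reversed operator $\mathbb{A}' := -\mathbb{A}\circ\Re$ also satisfies (i'), since $-\mathbb{A}'(t) = i\hbar^{-1}H(T-t)$ is again a skew-adjoint generator of a unitary group with stability index $(1,0)$, while (ii''') and (iii) transfer automatically under $t\mapsto T-t$ with the choices $(\mathbb{S}',\mathbb{B}',\dot{\mathbb{S}}')=(\mathbb{S},-\mathbb{B},-\dot{\mathbb{S}})\circ\Re$. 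Gluing the two evolutions via \eqref{bpoint} produces the desired $\mathscr{U}_{\bm{A}}$ on $\mathcal{I}_T^2$ with the Remark's properties (a')-(d'), which translate immediately into (A)-(D) after substituting $\mathbb{A} = i\hbar^{-1}H$ and multiplying the derivative identities by $i\hbar$. Uniqueness follows from the uniqueness clauses in Theorem \ref{Katoth} applied separately on each triangle together with the gluing identities \eqref{bpoint}.
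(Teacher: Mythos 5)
Your proposal is correct and follows essentially the same route as the paper: Kato's Theorem \ref{Katoth} with $\mathcal{X}=L^2$, $\mathcal{Y}=H^2$, stability index $(1,0)$ from unitarity, $\mathbb{S}(t)$ a positive shift of the Hamiltonian itself (so that $\mathbb{B}=0$ by commutativity), the same formula for $\dot{\mathbb{S}}$ with $\bm{A}\cdot\partial_t\bm{A}\in L^2$ handled by Lemma \ref{Estlemma}, and the extension to $\mathcal{I}_T^2$ via Remark \ref{Katore}. The only point you gloss over that the paper spells out is the verification that $\mathbb{S}$ really is a strong indefinite integral of $\dot{\mathbb{S}}$, which rests on the product rule $\partial_t\bm{A}^2=2\bm{A}\cdot\partial_t\bm{A}$ in $W^{1,1}(\mathcal{I}_T;L^2)$, proved by approximating $\bm{A}$ by simple functions.
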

\begin{proof}
Let $\bm{A}:\mathcal{I}_T\to L^4$ denote (the absolutely continuous representative of) a magnetic vector potential satisfying the hypotheses of the corollary and consider it's strong derivative $\partial_t\bm{A}$ that is defined almost everywhere on $\mathcal{I}_T$ and contained in $L^1(\mathcal{I}_T;L^4)$. Our goal will be to apply Theorem \ref{Katoth} and Remark \ref{Katore} to the family of operators
\begin{align*}
\mathbb{A}(t)=\frac{i}{\hbar}\Bigl(\sum_{j=1}^N \frac{1}{2m_j}\nabla_{j,\bm{A}(t)}^2+\sum_{1\leq j<k\leq N}\frac{Q_jQ_k}{|\bm{x}_j-\bm{x}_k|}\Bigr)
\end{align*}
in $\mathcal{X}=L^2\bigl(\mathbb{R}^{3N}\bigr)$ with domain $\mathcal{Y}=H^2\bigl(\mathbb{R}^{3N}\bigr)$. By Stone's theorem the selfadjointness of $i\mathbb{A}(t)$ implies that $-\mathbb{A}(t)$ generates a strongly continuous one-parameter group $\mathbb{R}\ni s\mapsto \exp\bigl(-s\mathbb{A}(t)\bigr)\in\mathcal{L}(L^2)$ of unitary operators for each $t\in\mathcal{I}_T$. Thereby $[0,\infty)\ni s\mapsto \exp\bigl(-s\mathbb{A}(t)\bigr)$ and $[0,\infty)\ni s\mapsto \exp\bigl(s\mathbb{A}(T-t)\bigr)$ are strongly continuous one-parameter semigroups generated by $-\mathbb{A}(t)$ respectively $\mathbb{A}(T-t)$. Moreover, the unitarity of the operators $\exp\bigl(-s\mathbb{A}(t)\bigr)$ for $t\in\mathcal{I}_T$ and $s\in\mathbb{R}$ ensures that both of the families $\{\mathbb{A}(t)\mid t\in\mathcal{I}_T\}$ and $\{-\mathbb{A}(T-t)\mid t\in\mathcal{I}_T\}$ are (quasi-)stable with the common stability index $(1,0)$. Thus, $\mathbb{A}$ and $-\mathbb{A}\circ\Re$ both satisfy the point (i') from Theorem \ref{Katoth}.

The operator $-i\mathbb{A}(t)$ in $L^2$ is selfadjoint and bounded from below, uniformly in $t$, by some constant $-M$ so by setting
\begin{align*}
\mathbb{S}(t)=M+1-i\mathbb{A}(t)\textrm{ for }t\in\mathcal{I}_T,
\end{align*}
we obtain a family of selfadjoint operators in $L^2$ that all have lower bounds $\geq 1$ and thereby map their common domain $H^2$ bijectively onto $L^2$. Lemma \ref{Estlemma} even gives that $\mathbb{S}(t)$ is bounded, when considered as an operator from the Hilbert space $H^2$ to the Hilbert space $L^2$, whereby it's inverse must also be bounded according to the bounded inverse theorem. Consequently, $\mathbb{S}(t)$ is an isomorphism $H^2\to L^2$ and the identity $\mathbb{S}(t)\mathbb{A}(t)\mathbb{S}(t)^{-1}=\mathbb{A}(t)$ holds by construction for all $t\in\mathcal{I}_T$. To show the final part of (ii'\,\!'\,\!') we define
\begin{align*}
\dot{\mathbb{S}}(t)=\sum_{j=1}^N \frac{Q_j}{\hbar m_j c}\partial_t\bm{A}(t)(\bm{x}_j)\cdot\nabla_{j,\bm{A}(t)}
\end{align*}
as an $\mathcal{L}(H^2,L^2)$-element for almost all points $t\in\mathcal{I}_T$ -- namely the points where $\partial_t\bm{A}$ is well-defined. Lemma \ref{Estlemma} and the strong measurability $\mathcal{I}_{T}\to L^4$ of $\bm{A}$ and $\partial_t\bm{A}$ allow us to conclude that $\mathbb{S}$ and $\dot{\mathbb{S}}$ are strongly measurable $\mathcal{I}_T\to \mathcal{L}(H^2,L^2)$ with the estimates 
\begin{align*}
\|\mathbb{S}(t)\|_{\mathcal{L}(H^2,L^2)}\lesssim 1+\|\bm{A}(t)\|_{L^4}^2, \bigl\|\dot{\mathbb{S}}(t)\bigr\|_{\mathcal{L}(H^2,L^2)}\lesssim \|\partial_t\bm{A}(t)\|_{L^4}(1+\|\bm{A}(t)\|_{L^4})
\end{align*}
holding true for almost all $t\in\mathcal{I}_T$. Consequently, $\mathbb{S}$ and $\dot{\mathbb{S}}$ are both Bochner integrable $\mathcal{I}_{T}\to\mathcal{L}(H^2,L^2)$ -- in fact, it follows from \eqref{AAcont} that $\mathbb{S}$ is continuous. Given an arbitrary $C_0^\infty(\mathcal{I}_T^\circ)$-function $g$ we now get
\begin{align}
&\int_{0}^T\dot{\mathbb{S}}(t)g(t)\,\mathrm{d}t\nonumber\\
&=\sum_{j=1}^N\frac{Q_j}{m_jc\hbar}\Bigl(i\hbar\int_{0}^T\partial_t\bm{A}(t)g(t)\,\mathrm{d}t(\bm{x}_j)\cdot\nabla_{\bm{x}_j}+\frac{Q_j}{c}\int_{0}^T\bigl(\partial_t\bm{A}\cdot\bm{A}\bigr)(t)g(t)\,\mathrm{d}t(\bm{x}_j)\Bigr)\nonumber\\
&=-\int_{0}^T\mathbb{S}(t)g'(t)\,\mathrm{d}t.\label{weakderiv}
\end{align}
where we use that $\bm{A}^2\in W^{1,1}\bigl(\mathcal{I}_T;L^2(\mathbb{R}^3)\bigr)$ with
\begin{align}
\partial_t\bm{A}^2(t)=2\partial_t\bm{A}(t)\cdot\bm{A}(t)\textrm{ for almost all }t\in\mathcal{I}_T,\label{derivofAsq}
\end{align}
which follows from approximating $\bm{A}$ in $W^{1,1}\bigl(\mathcal{I}_T;L^4(\mathbb{R}^3;\mathbb{R}^3)\bigr)$ by functions in the form $\bm{A}^n:t\mapsto\sum_{m=1}^{M^n}\bm{a}_m^nf_m^n(t)$ with $M^n\in\mathbb{N}$, $\bm{a}_1^n,\ldots,\bm{a}_{M^n}^n\in L^4(\mathbb{R}^3;\mathbb{R}^3)$ and $f_1^n,\ldots,f_{M^n}^n\in C^\infty(\mathcal{I}_T)$ for $n\in\mathbb{N}$. We conclude from \eqref{weakderiv} that the function $\mathbb{S}\in W^{1,1}\bigl(\mathcal{I}_T,\mathcal{L}(H^2,L^2)\bigr)$ has $\dot{\mathbb{S}}$ as it's derivative, whereby (ii'\,\!'\,\!') from Theorem \ref{Katoth} has been verified. 

Finally, we obtain from Lemma \ref{Estlemma} that for all $t,t'\in\mathcal{I}_{T}$
\begin{align}
\|\mathbb{A}(t)-\mathbb{A}(t')\|_{\mathcal{L}(H^2,L^2)}\lesssim \bigl(1+\|\bm{A}(t)+\bm{A}(t')\|_{L^4}\bigr)\|\bm{A}(t)-\bm{A}(t')\|_{L^4}\label{AAcont}
\end{align}
so the continuity of $\bm{A}:\mathcal{I}_{T}\to L^4$ implies that $\mathbb{A}:\mathcal{I}_{T}\to \mathcal{L}(H^2,L^2)$ is norm-continuous. Thus, also the point (iii) of Theorem \ref{Katoth} is satisfied.
\end{proof}
\begin{bemaerkning}\label{abscontU}
Let $\psi_0\in H^2$ and $\tau\in\mathcal{I}_{T}$ be given and set $\xi(t)=\mathscr{U}_{\bm{A}}(t,\tau)\psi_0$ for $t\in\mathcal{I}_{T}$. Being strongly differentiable $\mathcal{I}_{T}\to L^2$ with continuous derivative the function $\xi$ can be expressed as
\begin{align*}
\xi(t)= \xi(0) + \int_0^t \partial_t\xi(s)\,\mathrm{d}s\textrm{ for all }t\in\mathcal{I}_T,
\end{align*}
since the right hand side as a function of $t$ is strongly differentiable in $L^2$ with $\partial_t \xi$ as it's derivative by the mean value theorem. Thus, $\xi$ is absolutely continuous $\mathcal{I}_{T}\to L^2$, which in turn means that $\xi\in W^{1,1}(\mathcal{I}_{T};L^2)$ and that it's distributional derivative agrees with it's strong derivative.
\end{bemaerkning}
\begin{bemaerkning}\label{consL2norm}
By the same argument as in \eqref{consi} the mapping $\xi(t)=\mathscr{U}_{\bm{A}}(t,\tau)\psi_0$ has a conserved $L^2$-norm for any $\psi_0\in H^2$ and $\tau\in \mathcal{I}_T$. This together with the continuity of $\mathscr{U}_{\bm{A}}(t,\tau):L^2\to L^2$ implies that the $L^2$-norm of $\xi(t)$ is in fact a constant of the motion for all $\psi_0\in L^2$ and $\tau\in \mathcal{I}_T$.
\end{bemaerkning}
Given a potential $\bm{A}\in W^{1,1}\bigl(\mathcal{I}_T;L^4(\mathbb{R}^3;\mathbb{R}^3)\bigr)$ whose continuous representative is divergence free at all times we can according to Corollary \ref{existenceofevolution} apply $\mathscr{U}_{\bm{A}}(t,\tau)$ to any $L^2$-function $\psi_0$ and thereby obtain another $L^2$-function, even though we are only guaranteed that the result $\mathscr{U}_{\bm{A}}(t,\tau)\psi_0$ actually solves \eqref{LinSchr} if $\psi_0\in H^2$. However, by the estimates \eqref{Hminzwei} the right hand side of \eqref{LinSchr} is in fact meaningful (as an $H^{-2}$-element) when $\xi(t)$ is merely an $L^2$-function, provided that we interpret $\nabla_{j,\bm{A}(t)}^2\xi(t)$ as the sum
\begin{align}
-\hbar^2\Delta_{\bm{x}_j}\xi(t)+2i\frac{\hbar Q_j}{c}\mathrm{div}_{\bm{x}_j}\bigl(\bm{A}(t)(\bm{x}_j)\xi(t)\bigr)+\frac{Q_j^2}{c^2}\bigl[\bm{A}(t)(\bm{x}_j)\bigr]^2\xi(t).\label{Hm2inter}
\end{align}
A special case of the result below shows that for $\psi_0\in L^2$ there can not be any other $C(\mathcal{I}_{T};L^2)\cap W^{1,1}(\mathcal{I}_{T};H^{-2})$-solutions to the initial value problem \eqref{LinSchr}--\eqref{LinSchrI} than $\mathscr{U}_{\bm{A}}(t,\tau)\psi_0$. In order to formulate this result we introduce for $(t,\tau)\in\mathcal{I}_{T}^2$ the linear operator $H^{-2}\to H^{-2}$ (that we will again call $\mathscr{U}_{\bm{A}}(t,\tau)$) by setting
\begin{align*}
\langle \mathscr{U}_{\bm{A}}(t,\tau)\xi,\zeta\rangle_{H^{-2},H^2}=\langle \xi,\mathscr{U}_{\bm{A}}(\tau,t)\zeta\rangle_{H^{-2},H^2}
\end{align*}
for $\xi\in H^{-2}$ and $\zeta\in H^2$, where we remember that $H^{-s}$ is isometrically anti-isomorphic to the dual space $(H^s)^*$ of $H^s$ by the mapping
\begin{align*}
H^{-s}\ni \xi\mapsto \Bigl(\langle \xi,\cdot\rangle_{H^{-s},H^s}:\zeta\mapsto\frac{1}{(2\pi)^3} \bigl(\langle\bm{p}\rangle^{-s}\widehat{\xi},\langle\bm{p}\rangle^s\widehat{\zeta}\bigr)_{L^2}\Bigr)\in \bigl(H^s\bigr)^*.
\end{align*}
Then $\mathscr{U}_{\bm{A}}(t,\tau)$ is bounded with
\begin{align}
\|\mathscr{U}_{\bm{A}}(t,\tau)\|_{\mathcal{L}(H^{-2})}\leq \|\mathscr{U}_{\bm{A}}(\tau,t)\|_{\mathcal{L}(H^2)}\leq \sup_{(t',\tau')\in \mathcal{I}_{T}^2}\bigl\|\mathscr{U}_{\bm{A}}\bigl(t',\tau'\bigr)\bigr\|_{\mathcal{L}(H^2)},\label{normHminusto}
\end{align}
for $(t,\tau)\in\mathcal{I}_{T}^2$, where the right hand side is finite by the uniform boundedness principle. Moreover, $\mathscr{U}_{\bm{A}}(t,\tau):H^{-2}\to H^{-2}$ is an extension of the unitary operator $\mathscr{U}_{\bm{A}}(t,\tau): L^2\to L^2$ in the sense that they agree on $L^2$-functions. 
\begin{lemma}\label{integralformofequation}
Let the continuous representative of $\bm{A}\in W^{1,1}\bigl(\mathcal{I}_{T};L^4(\mathbb{R}^3;\mathbb{R}^3)\bigr)$ be divergence free at all times and consider some arbitrary $f\in L^1\bigl(\mathcal{I}_{T};H^{-2}\bigr)$. Then if $\xi\in C(\mathcal{I}_{T};L^2)\cap W^{1,1}\bigl(\mathcal{I}_{T};H^{-2}\bigr)$ satisfies the inhomogeneous many-body Schrödinger equation
\begin{align*}
i\hbar\partial_t \xi=\Bigl(\sum_{j=1}^N \frac{1}{2m_j}\nabla_{j,\bm{A}}^2+\sum_{1\leq j<k\leq N}\frac{Q_jQ_k}{|\bm{x}_j-\bm{x}_k|}\Bigr)\xi+f
\end{align*}
then
\begin{align}
\xi(t)=\mathscr{U}_{\bm{A}}(t,\tau)\xi(\tau)-\frac{i}{\hbar}\int_{\tau}^t \mathscr{U}_{\bm{A}}(t,s)f(s)\,\mathrm{d}s,\label{H-2ligning}
\end{align}
for all $(t,\tau)\in\mathcal{I}_{T}^2$.
\end{lemma}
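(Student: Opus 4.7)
The plan is to prove the identity \eqref{H-2ligning} in $H^{-2}$ by testing it against an arbitrary $\zeta\in H^2$ and invoking the properties of the evolution operator provided by Corollary \ref{existenceofevolution}. Fix $t,\tau\in\mathcal{I}_T$ and $\zeta\in H^2$, and set $h(s)=\mathscr{U}_{\bm{A}}(s,t)\zeta$. By parts (C) and (D) of Corollary \ref{existenceofevolution}, the mapping $h$ is continuous $\mathcal{I}_T\to H^2$ and continuously differentiable $\mathcal{I}_T\to L^2$ with $i\hbar\partial_s h(s)=\mathscr{H}_{\bm{A}(s)}h(s)$, where I abbreviate $\mathscr{H}_{\bm{A}(s)}=\sum_{j=1}^N\frac{1}{2m_j}\nabla_{j,\bm{A}(s)}^2+\sum_{1\leq j<k\leq N}\frac{Q_jQ_k}{|\bm{x}_j-\bm{x}_k|}$. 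I then consider the scalar function $\eta(s)=\langle\xi(s),h(s)\rangle_{H^{-2},H^2}$ on $\mathcal{I}_T$, which coincides with $(\xi(s),h(s))_{L^2}$ because $\xi(s)\in L^2\subset H^{-2}$.

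The first step is to verify that $\eta$ is absolutely continuous and to compute its derivative by a product rule. Splitting $\eta(s_2)-\eta(s_1)$ through the intermediate point $(\xi(s_2),h(s_1))_{L^2}$, the bounds $|(\xi(s_2),h(s_2)-h(s_1))_{L^2}|\leq\|\xi\|_{L^\infty_T L^2}\|h(s_2)-h(s_1)\|_{L^2}$ and $|\langle\xi(s_2)-\xi(s_1),h(s_1)\rangle_{H^{-2},H^2}|\leq\|\xi(s_2)-\xi(s_1)\|_{H^{-2}}\|h\|_{L^\infty_T H^2}$ yield absolute continuity, since $h\in C^1(\mathcal{I}_T;L^2)\cap C(\mathcal{I}_T;H^2)$ and $\xi\in C(\mathcal{I}_T;L^2)\cap W^{1,1}(\mathcal{I}_T;H^{-2})$. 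An analogous difference-quotient analysis at Lebesgue points then produces
\begin{align*}
\eta'(s)=\langle\partial_s\xi(s),h(s)\rangle_{H^{-2},H^2}+(\xi(s),\partial_s h(s))_{L^2}\quad\textrm{for almost every }s\in\mathcal{I}_T.
\end{align*}

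Substituting the hypothesis $\partial_s\xi(s)=-\frac{i}{\hbar}\bigl(\mathscr{H}_{\bm{A}(s)}\xi(s)+f(s)\bigr)$ along with $\partial_s h(s)=-\frac{i}{\hbar}\mathscr{H}_{\bm{A}(s)}h(s)$, and invoking the symmetry identity $\langle\mathscr{H}_{\bm{A}(s)}\xi(s),h(s)\rangle_{H^{-2},H^2}=(\xi(s),\mathscr{H}_{\bm{A}(s)}h(s))_{L^2}$ (which is valid for $\xi(s)\in L^2$, $h(s)\in H^2$ as a consequence of the explicit representation \eqref{Hm2inter}, Lemma \ref{Estlemma}, and density of $H^2$-functions in $L^2$), the two Hamiltonian contributions cancel. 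The cancellation uses a crucial sign flip coming from the antilinearity of the pairing in its second argument, namely $(\xi,-\frac{i}{\hbar}\mathscr{H}_{\bm{A}(s)}h)_{L^2}=\frac{i}{\hbar}(\xi,\mathscr{H}_{\bm{A}(s)}h)_{L^2}$. What remains is $\eta'(s)=-\frac{i}{\hbar}\langle f(s),h(s)\rangle_{H^{-2},H^2}=-\frac{i}{\hbar}\langle\mathscr{U}_{\bm{A}}(t,s)f(s),\zeta\rangle_{H^{-2},H^2}$.

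Integrating this identity from $\tau$ to $t$, rewriting $\eta(\tau)=\langle\mathscr{U}_{\bm{A}}(t,\tau)\xi(\tau),\zeta\rangle_{H^{-2},H^2}$, and using that the Bochner integral $\int_\tau^t\mathscr{U}_{\bm{A}}(t,s)f(s)\,\mathrm{d}s$ exists in $H^{-2}$ thanks to \eqref{normHminusto} together with $f\in L^1(\mathcal{I}_T;H^{-2})$, I arrive at
\begin{align*}
\Bigl\langle\xi(t)-\mathscr{U}_{\bm{A}}(t,\tau)\xi(\tau)+\tfrac{i}{\hbar}\int_\tau^t\mathscr{U}_{\bm{A}}(t,s)f(s)\,\mathrm{d}s,\,\zeta\Bigr\rangle_{H^{-2},H^2}=0.
\end{align*}
Since $\zeta\in H^2$ was arbitrary and the pairing furnishes an anti-isomorphism of $H^{-2}$ with the dual of $H^2$, the bracketed element must vanish in $H^{-2}$, which is precisely \eqref{H-2ligning}. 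I expect the main obstacle to be the rigorous justification of the product rule and the symmetric cancellation, both of which require carefully tracking the antilinear convention in the $H^{-2}$--$H^2$ pairing and exploiting that, although $\xi$ is only absolutely continuous as an $H^{-2}$-valued mapping, its $L^2$-valuedness allows the two factors of $\eta$ to be handled within two different regularity frameworks.
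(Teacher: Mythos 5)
Your proof is correct and follows essentially the same route as the paper: both test against an arbitrary $\zeta\in H^2$, consider the scalar function $s\mapsto\langle \mathscr{U}_{\bm{A}}(t,s)\xi(s),\zeta\rangle_{H^{-2},H^2}=\langle\xi(s),\mathscr{U}_{\bm{A}}(s,t)\zeta\rangle_{H^{-2},H^2}$, establish its absolute continuity from the complementary regularities of $\xi$ and $\mathscr{U}_{\bm{A}}(\cdot,t)\zeta$, cancel the Hamiltonian terms via the symmetry identity justified by density and the estimates of Lemma \ref{Estlemma}, and then integrate and commute the pairing with the Bochner integral. The only cosmetic difference is that you take the duality pairing to be anti-linear in its second slot whereas the paper's convention makes $\langle\cdot,\zeta\rangle_{H^{-2},H^2}$ anti-linear in the first; your computation is internally consistent and yields the same final identity.
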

\begin{proof}
Given some $t\in\mathcal{I}_T$ and $\zeta\in H^2$ the map $\langle \mathscr{U}_{\bm{A}}(t,\cdot)\xi(\cdot),\zeta\bigr\rangle_{H^{-2},H^2}$ is absolutely continuous since $\xi:\mathcal{I}_{T}\to H^{-2}$, $\mathscr{U}_{\bm{A}}(\cdot,t)\zeta:\mathcal{I}_{T}\to L^2$ are absolutely continuous (see Remark \ref{abscontU}) and $\xi:\mathcal{I}_{T}\to L^2$, $\mathscr{U}_{\bm{A}}(\cdot,t)\zeta:\mathcal{I}_{T}\to H^2$ are continuous. It's derivative is well defined almost everywhere in $\mathcal{I}_T$ and for almost all $s\in\mathcal{I}_T$
\begin{align}
\partial_{s}\bigl\langle \mathscr{U}_{\bm{A}}(t,s)\xi(s),\zeta\bigr\rangle_{H^{-2},H^2}&=\bigl\langle \partial_{s}\xi(s),\mathscr{U}_{\bm{A}}(s,t)\zeta\bigr\rangle_{H^{-2},H^2}+\bigl( \xi(s),\partial_{s}\mathscr{U}_{\bm{A}}(s,t)\zeta\bigr)_{L^2}\nonumber\\
&=\frac{i}{\hbar}\bigl\langle \mathscr{U}_{\bm{A}}(t,s)f(s),\zeta\bigr\rangle_{H^{-2},H^2},\label{Hminusto}
\end{align}
where $\bigl\langle \bigl(\sum_{j=1}^N \frac{1}{2m_j}\nabla_{j,\bm{A}(s)}^2+\sum_{j<k}\frac{Q_jQ_k}{|\bm{x}_j-\bm{x}_k|}\bigr)\xi(s),\mathscr{U}_{\bm{A}}(s,t)\zeta\bigr\rangle_{H^{-2},H^2}$ is seen to be equal to $\bigl(\xi(s),\bigl(\sum_{j=1}^N \frac{1}{2m_j}\nabla_{j,\bm{A}(s)}^2+\sum_{j < k}\frac{Q_jQ_k}{|\bm{x}_j-\bm{x}_k|}\bigr)\mathscr{U}_{\bm{A}}(s,t)\zeta\bigr)_{L^2}$ by approximating $\xi(s)$ and $\mathscr{U}_{\bm{A}}(s,t)\zeta$ in $L^2$ respectively $H^2$ by sequences of $C_0^\infty$-functions and using the estimates \eqref{KatoRellich} and \eqref{Hminzwei}. 
Thus,
\begin{align*}
\langle \xi(t),\zeta\rangle_{H^{-2},H^2}=\langle \mathscr{U}_{\bm{A}}(t,\tau)\xi(\tau),\zeta\rangle_{H^{-2},H^2}+\frac{i}{\hbar}\int_{\tau}^t\langle \mathscr{U}_{\bm{A}}(t,s)f(s),\zeta\rangle_{H^{-2},H^2}\,\mathrm{d}s
\end{align*}
for all $\tau\in\mathcal{I}_T$. Here, \eqref{normHminusto} and the assumption that $f\in L^1(\mathcal{I}_{T},H^{-2})$ give that $\mathscr{U}_{\bm{A}}(t,\cdot)f(\cdot)$ is Bochner integrable $\mathcal{I}_{T}\to H^{-2}$, whereby we can use \cite[Corollary V.5.2]{Yo} to commute the integral with the bounded anti-linear operator $\langle \cdot,\zeta\rangle_{H^{-2},H^2}:H^{-2}\to\mathbb{C}$ and obtain
\begin{align*}
\Bigl\langle \xi(t)-\mathscr{U}_{\bm{A}}(t,\tau)\xi(\tau)+\frac{i}{\hbar}\int_{\tau}^t\mathscr{U}_{\bm{A}}(t,s)f(s)\,\mathrm{d}s,\zeta\Bigr\rangle_{H^{-2},H^2}=0
\end{align*}
for all $\tau\in\mathcal{I}$, whereby the identity \eqref{H-2ligning} follows.
\end{proof}
As already mentioned in \eqref{normHminusto} the norms $\|\mathscr{U}_{\bm{A}}(t,\tau)\|_{\mathcal{L}(H^2)}$ are uniformly bounded in $(t,\tau)\in\mathcal{I}_{T}^2$. We will now find an explicit upper bound.
\begin{lemma}\label{H2wp}
Consider a vector potential $\bm{A}\in W^{1,1}\bigl(\mathcal{I}_{T};L^4(\mathbb{R}^3;\mathbb{R}^3)\bigr)$ whose continuous representative is divergence free at all times. Then for all $0<\delta<\frac{1}{2}$ there exists a constant $C>0$ \emph{(}depending on $c$, $\hbar$, $\delta$, $N$, $m_1,\ldots,m_N$ and $Q_1,\ldots,Q_N$\emph{)} such that
\begin{align}
\|\mathscr{U}_{\bm{A}}(t,\tau)\|_{\mathcal{L}(H^2)}\leq C\langle\|\bm{A}\|_{L_T^\infty L^4}\rangle^{\frac{8}{1-2\delta}} \exp\Bigl(C\int_{\tau}^t \langle \|\bm{A}(s)\|_{L^4}\rangle \|\partial_t\bm{A}(s)\|_{L^4}\,\mathrm{d}s\Bigr)\label{H2H2}
\end{align}
for all $(t,\tau)\in \mathcal{T}_T$.
\end{lemma}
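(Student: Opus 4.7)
The plan is to carry out the standard Kato-style conjugation by the isomorphism $\mathbb{S}(t)=M+1-i\mathbb{A}(t)\colon H^2\to L^2$ constructed in the proof of Corollary \ref{existenceofevolution}, exploiting the crucial feature that $\mathbb{S}(t)$ and $\mathbb{A}(t)$ commute by construction. Fix $\psi_0\in H^2$, $\tau\in\mathcal{I}_T$ and set $\xi(t)=\mathscr{U}_{\bm{A}}(t,\tau)\psi_0$ and $\eta(t)=\mathbb{S}(t)\xi(t)$. Differentiating and combining $\partial_t\xi=-\mathbb{A}(t)\xi$ with $\mathbb{S}(t)\mathbb{A}(t)=\mathbb{A}(t)\mathbb{S}(t)$ yields the driven equation
\begin{align*}
\partial_t\eta(t)+\mathbb{A}(t)\eta(t)=\dot{\mathbb{S}}(t)\xi(t).
\end{align*}
Duhamel's formula together with the $L^2$-unitarity of $\mathscr{U}_{\bm{A}}$ noted in Remark \ref{consL2norm} then gives
\begin{align*}
\|\eta(t)\|_{L^2}\leq \|\eta(\tau)\|_{L^2}+\int_\tau^t\|\dot{\mathbb{S}}(s)\xi(s)\|_{L^2}\,\mathrm{d}s,
\end{align*}
so everything reduces to a two-sided comparison between $\|\cdot\|_{H^2}$ and $\|\mathbb{S}(t)\cdot\|_{L^2}$ and to a careful bound on the forcing.

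For the two-sided comparison, I would expand $\hbar^2\Delta_{\bm{x}_j}\xi=-\nabla_{j,\bm{A}}^2\xi+2i\hbar(Q_j/c)\bm{A}(\bm{x}_j)\cdot\nabla_{\bm{x}_j}\xi+(Q_j^2/c^2)\bm{A}(\bm{x}_j)^2\xi$ and control the cross-terms using \eqref{KatoRellich}. The worst of these is the $\bm{A}^2$-contribution: applying the second line of \eqref{KatoRellich} to $B=|\bm{A}|^2\in L^2$ with $\|B\|_{L^2}=\|\bm{A}\|_{L^4}^2$ and choosing $\varepsilon\sim\|\bm{A}\|_{L^4}^{-2}$ so that $\varepsilon\|\Delta_{\bm{x}_j}\xi\|_{L^2}$ can be absorbed back on the left produces a coefficient $\|\bm{A}\|_{L^4}^{2+2(3+2\delta)/(1-2\delta)}=\|\bm{A}\|_{L^4}^{8/(1-2\delta)}$ in front of $\|\xi\|_{L^2}$. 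Recalling that $M$ in the definition of $\mathbb{S}(t)$ is already chosen uniformly of size $\langle\|\bm{A}\|_{L_T^\infty L^4}\rangle^{8/(1-2\delta)}$, this bookkeeping produces the two-sided estimate
\begin{align*}
\|\xi\|_{H^2}\lesssim \langle\|\bm{A}\|_{L_T^\infty L^4}\rangle^{8/(1-2\delta)}\|\mathbb{S}(t)\xi\|_{L^2}, \quad \|\mathbb{S}(t)\xi\|_{L^2}\lesssim \langle\|\bm{A}\|_{L_T^\infty L^4}\rangle^{8/(1-2\delta)}\|\xi\|_{H^2},
\end{align*}
which in particular delivers $\|\eta(\tau)\|_{L^2}\lesssim \langle\|\bm{A}\|_{L_T^\infty L^4}\rangle^{8/(1-2\delta)}\|\psi_0\|_{H^2}$ and $\|\mathscr{U}_{\bm{A}}(t,\tau)\psi_0\|_{H^2}=\|\xi(t)\|_{H^2}\lesssim \langle\|\bm{A}\|_{L_T^\infty L^4}\rangle^{8/(1-2\delta)}\|\eta(t)\|_{L^2}$.

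To estimate the forcing I would expand
\begin{align*}
\dot{\mathbb{S}}(s)\xi(s)=\sum_{j=1}^N\frac{iQ_j}{m_jc}\partial_t\bm{A}(s)(\bm{x}_j)\cdot\nabla_{\bm{x}_j}\xi(s)+\sum_{j=1}^N\frac{Q_j^2}{\hbar m_jc^2}(\partial_t\bm{A}\cdot\bm{A})(s)(\bm{x}_j)\xi(s),
\end{align*}
apply the first line of \eqref{KatoRellich} to the gradient term with coefficient $\|\partial_t\bm{A}(s)\|_{L^4}$, and the second line to the quadratic term with $B=\partial_t\bm{A}(s)\cdot\bm{A}(s)\in L^2$ of norm $\leq\|\partial_t\bm{A}(s)\|_{L^4}\|\bm{A}(s)\|_{L^4}$. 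Selecting the $\varepsilon$-parameter so that $\varepsilon\|\Delta_{\bm{x}_j}\xi(s)\|_{L^2}$ is balanced against the $\varepsilon^{-\alpha}\|\xi(s)\|_{L^2}$-terms using the $H^2$-bound on $\xi(s)$ established above, the net estimate is
\begin{align*}
\|\dot{\mathbb{S}}(s)\xi(s)\|_{L^2}\lesssim \|\partial_t\bm{A}(s)\|_{L^4}\langle\|\bm{A}(s)\|_{L^4}\rangle\,\|\eta(s)\|_{L^2},
\end{align*}
after absorbing a uniform factor $\langle\|\bm{A}\|_{L_T^\infty L^4}\rangle^{8/(1-2\delta)}$ into the implicit constant. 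Inserting this into the Duhamel inequality and applying Gronwall's lemma yields the exponential factor in \eqref{H2H2}, while the prefactor $\langle\|\bm{A}\|_{L_T^\infty L^4}\rangle^{8/(1-2\delta)}$ arises from the two passages $\|\psi_0\|_{H^2}\to\|\eta(\tau)\|_{L^2}$ and $\|\eta(t)\|_{L^2}\to\|\xi(t)\|_{H^2}$.

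The main technical obstacle is precisely this forcing estimate: a naive bound $\|\dot{\mathbb{S}}(s)\|_{\mathcal{L}(H^2,L^2)}\|\xi(s)\|_{H^2}\lesssim \|\partial_t\bm{A}(s)\|_{L^4}(1+\|\bm{A}(s)\|_{L^4})\|\xi(s)\|_{H^2}$ followed by the $H^2$-bound on $\xi(s)$ would drag an extra factor $\langle\|\bm{A}\|_{L_T^\infty L^4}\rangle^{8/(1-2\delta)}$ into the Gronwall exponent, producing an exponential of $\langle\|\bm{A}\|_{L_T^\infty L^4}\rangle^{8/(1-2\delta)}$ rather than of the integral $\int\langle\|\bm{A}(s)\|_{L^4}\rangle\|\partial_t\bm{A}(s)\|_{L^4}\,\mathrm{d}s$. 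Avoiding this loss is what forces the careful $\varepsilon$-balancing in \eqref{KatoRellich}, trading $\|\Delta_{\bm{x}_j}\xi\|_{L^2}$ (which is only controlled by $\|\eta\|_{L^2}$ up to the $K$-factor) against $\|\xi\|_{L^2}$ (which is controlled by $\|\eta\|_{L^2}$ with no loss since $\mathbb{S}(t)\geq 1$).
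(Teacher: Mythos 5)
Your overall architecture is the paper's own argument in a different guise: along the flow one has $\mathbb{S}(t)\xi(t)=(M+1)\xi(t)+i\partial_t\xi(t)$, so conjugating by $\mathbb{S}$ and studying $\eta=\mathbb{S}\xi$ is equivalent to the paper's device of differentiating the equation in time and applying Lemma \ref{integralformofequation}, unitarity and Gronwall to $\partial_t\xi$. The genuine gap lies in the bookkeeping of the $\bm{A}$-dependence, and it is precisely the loss you announce you are avoiding. Your forcing estimate $\|\dot{\mathbb{S}}(s)\xi(s)\|_{L^2}\lesssim\|\partial_t\bm{A}(s)\|_{L^4}\langle\|\bm{A}(s)\|_{L^4}\rangle\|\eta(s)\|_{L^2}$ ``after absorbing a uniform factor $\langle\|\bm{A}\|_{L_T^\infty L^4}\rangle^{8/(1-2\delta)}$ into the implicit constant'' is not legitimate: the constant $C$ in \eqref{H2H2} must be independent of $\bm{A}$, so if that factor is present it sits inside the Gronwall kernel and yields $\exp\bigl(C\langle\|\bm{A}\|_{L_T^\infty L^4}\rangle^{8/(1-2\delta)}\int_\tau^t\langle\|\bm{A}(s)\|_{L^4}\rangle\|\partial_t\bm{A}(s)\|_{L^4}\,\mathrm{d}s\bigr)$ rather than the stated exponential. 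Nor can the $\varepsilon$-balancing remove it: minimizing $\varepsilon^{-7}a+\varepsilon b$ over $\varepsilon$ gives $a^{1/8}b^{7/8}$, and with $a\sim\|\eta(s)\|_{L^2}$ and $b\sim\langle\|\bm{A}\|_{L_T^\infty L^4}\rangle^{8/(1-2\delta)}\|\eta(s)\|_{L^2}$ you still retain a factor $\langle\|\bm{A}\|_{L_T^\infty L^4}\rangle^{7/(1-2\delta)}$ in the kernel. A second, smaller slip: composing your two one-sided comparisons, each of which costs $\langle\|\bm{A}\|_{L_T^\infty L^4}\rangle^{8/(1-2\delta)}$ because the shift $M+1$ enters both passages, produces the prefactor $\langle\|\bm{A}\|_{L_T^\infty L^4}\rangle^{16/(1-2\delta)}$, not $\langle\|\bm{A}\|_{L_T^\infty L^4}\rangle^{8/(1-2\delta)}$.

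Both problems disappear if you run Gronwall on $u(s)=\|\xi(s)\|_{H^2}$ instead of on $\|\eta(s)\|_{L^2}$, which is what the paper does. Estimate the forcing with a \emph{fixed} $\varepsilon$ in \eqref{KatoRellich}, which gives $\|\dot{\mathbb{S}}(s)\xi(s)\|_{L^2}\lesssim\|\partial_t\bm{A}(s)\|_{L^4}\langle\|\bm{A}(s)\|_{L^4}\rangle\|\xi(s)\|_{H^2}$ with a universal constant (no balancing is needed here), and use the recovery inequality in its additive form $\|\xi(t)\|_{H^2}\lesssim\|\eta(t)\|_{L^2}+\langle\|\bm{A}\|_{L_T^\infty L^4}\rangle^{8/(1-2\delta)}\|\xi(t)\|_{L^2}$ together with the conservation $\|\xi(t)\|_{L^2}=\|\psi_0\|_{L^2}$ from Remark \ref{consL2norm}: the large factor then multiplies only the constant term of the resulting integral inequality, not its kernel. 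Likewise $\|\eta(\tau)\|_{L^2}\lesssim\langle\|\bm{A}\|_{L_T^\infty L^4}\rangle^{8/(1-2\delta)}\|\psi_0\|_{L^2}+\langle\|\bm{A}(\tau)\|_{L^4}\rangle^{2}\|\psi_0\|_{H^2}\lesssim\langle\|\bm{A}\|_{L_T^\infty L^4}\rangle^{8/(1-2\delta)}\|\psi_0\|_{H^2}$ contributes a single power of the prefactor. With these modifications your argument closes and coincides with the paper's proof.
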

\begin{proof}
Given $\psi_0\in H^2$ and $\tau\in \mathcal{I}_{T}$ we set $\xi(\cdot)=\mathscr{U}_{\bm{A}}(\cdot,\tau)\psi_0\in C(\mathcal{I}_{T};H^2)$ and note that the time derivative 
\begin{align*}
\partial_t\xi=-\frac{i}{\hbar}\Bigl(\sum_{j=1}^N \frac{1}{2m_j}\nabla_{j,\bm{A}}^2+\sum_{1\leq j<k\leq N}\frac{Q_jQ_k}{|\bm{x}_j-\bm{x}_k|}\Bigr)\xi
\end{align*}
has the distributional derivative given by
\begin{align}
\partial_t^2\xi&=-\frac{i}{\hbar}\Bigl(\sum_{j=1}^N\frac{1}{2m_j}\nabla_{j,\bm{A}}^2+\sum_{1\leq j<k\leq N}\frac{Q_jQ_k}{|\bm{x}_j-\bm{x}_k|}\Bigr)\partial_t\xi-\frac{i}{\hbar}f\label{timederiv}
\end{align}
where $\nabla_{j,\bm{A}}^2$ is interpreted as in \eqref{Hm2inter} and we introduce the $L^1(\mathcal{I}_T;L^2)$-map
\begin{align*}
f(t)=i\sum_{j=1}^N\frac{\hbar Q_j}{cm_j}\mathrm{div}_{\bm{x}_j}\bigl(\partial_t\bm{A}(t)(\bm{x}_j)\xi(t)\bigr)+\sum_{j=1}^N\frac{Q_j^2}{c^2 m_j}\bm{A}(t)(\bm{x}_j)\cdot\partial_t\bm{A}(t)(\bm{x}_j)\xi(t).
\end{align*}
This can be shown by approximating $\xi$ in $W^{1,1}\bigl(\mathcal{I}_{T};L^2\bigl(\mathbb{R}^{3N}\bigr)\bigr)$ by a sequence of maps $\xi^n:t\mapsto \sum_{m=1}^{M^n} \xi_m^n f_m^n(t)$ with $M^n\in\mathbb{N}$, $\xi_1^n,\ldots,\xi_{M^n}^n\in L^2\bigl(\mathbb{R}^{3N}\bigr)$ and $f_1^n,\ldots,f_{M^n}^n\in C^\infty(\mathcal{I}_T)$ for $n\in\mathbb{N}$. From \eqref{Hminzwei} and \eqref{derivofAsq} it follows for example that
\begin{align*}
&\int_0^T \mathrm{div}_{\bm{x}_j}\bigl(\bm{A}(t)(\bm{x}_j)\xi(t)\bigr)g'(t)\,\mathrm{d}t=\lim_{n\to\infty}\sum_{m=1}^{M^n}\mathrm{div}_{\bm{x}_j}\Bigl(\int_0^T \bm{A}(t)(f_m^n g')(t)\,\mathrm{d}t(\bm{x}_j)\xi_m^n\Bigr)\\
&=-\int_0^T \mathrm{div}_{\bm{x}_j}\bigl(\partial_t\bm{A}(t)(\bm{x}_j)\xi(t)+\bm{A}(t)(\bm{x}_j)\partial_t\xi(t)\bigr)g(t)\,\mathrm{d}t
\end{align*}
and
\begin{align*}
&\int_0^T \bigl[\bm{A}(t)(\bm{x}_j)\bigr]^2\xi(t) g'(t)\,\mathrm{d}t=\lim_{n\to\infty}\sum_{m=1}^{M^n}\int_0^T\bigl[\bm{A}(t)\bigr]^2(f_{m}^n g')(t)\,\mathrm{d}t(\bm{x}_j)\xi_{m}^n\\
&=-\int_0^T \bigl[\bm{A}(t)(\bm{x}_j)\bigr]^2\partial_t\xi(t)g(t)\,\mathrm{d}t-2\int_0^T \bm{A}(t)(\bm{x}_j)\cdot\partial_t\bm{A}(t)(\bm{x}_j)\xi(t)g(t)\,\mathrm{d}t,
\end{align*}
for all $j\in\{1,\ldots,N\}$ and $g\in C_0^\infty(\mathcal{I}_T^\circ)$, where the limits are taken in $H^{-2}$.
From \eqref{KatoRellich}, \eqref{Hminzwei}, \eqref{timederiv}, Corollary \ref{existenceofevolution} and Lemma \ref{integralformofequation} we get for all $t\in\mathcal{I}_{T}$ that
\begin{align*}
\partial_t\xi(t)=\mathscr{U}_{\bm{A}}(t,\tau)\partial_t\xi(\tau)-\frac{i}{\hbar}\int_{\tau}^t \mathscr{U}_{\bm{A}}(t,s)f(s)\,\mathrm{d}s.
\end{align*}
By using \eqref{KatoRellich} and Remark \ref{consL2norm} we therefore get the existence of a constant $K>0$ such that
\begin{align*}
&\|\xi(t)\|_{H^2}\\
&\leq K\!\left(\langle \|\bm{A}\|_{L_T^\infty L^4}\rangle^{\frac{8}{1-2\delta}}\|\xi(\tau)\|_{H^2}+\!\int_{\tau}^t \!\!\|\partial_t\bm{A}(s)\|_{L^4}\langle \|\bm{A}(s)\|_{L^4}\rangle\|\xi(s)\|_{H^2}\,\mathrm{d}s\right)
\end{align*}
for all $t\in[\tau,T]$ so \eqref{H2H2} holds by Gronwall's inequality.
\end{proof}

\chapter{The Klein-Gordon Equation}\label{KleinGo}
Given $\sigma\in\mathbb{R}$, $(\bm{A}_0,\bm{A}_1)\in H^{\sigma}\times H^{\sigma-1}$ and $\bm{F}\in L^1\bigl(\mathcal{I}_T;H^{\sigma-1}\bigr)$ define the continuous function $\mathscr{V}_{\bm{F}}(\cdot,0)[\bm{A}_0,\bm{A}_1]:\mathcal{I}_T\to H^{\sigma}$ by
\begin{align}
\mathscr{V}_{\bm{F}}(t,0)[\bm{A}_0,\bm{A}_1]=\dot{\mathfrak{s}}(t)\bm{A}_0+\mathfrak{s}(t)\bm{A}_1+c^2\int_0^t \mathfrak{s}(t-\tau)\bm{F}(\tau)\,\mathrm{d}\tau,\label{KGsol}
\end{align}
where the two linear operators $\dot{\mathfrak{s}}(t)=\cos\bigl(c(1-\Delta)^{1/2} t\bigr):H^{\sigma}\to H^{\sigma}$ and $\mathfrak{s}(t)=\frac{\sin(c(1-\Delta)^{1/2} t)}{c(1-\Delta)^{1/2}}:H^{\sigma-1}\to H^{\sigma}$ are defined as Fourier multipliers for $t\in\mathcal{I}_T$. Then $\mathscr{V}_{\bm{F}}(\cdot,0)[\bm{A}_0,\bm{A}_1]$ has the $C(\mathcal{I}_T;H^{\sigma-1})$-mapping
\begin{align*}
\partial_t\mathscr{V}_{\bm{F}}(t,0)[\bm{A}_0,\bm{A}_1]=c^2(\Delta-1)\mathfrak{s}(t)\bm{A}_0+\dot{\mathfrak{s}}(t)\bm{A}_1+c^2\int_0^t \dot{\mathfrak{s}}(t-\tau)\bm{F}(\tau)\,\mathrm{d}\tau
\end{align*}
as distributional first derivative and the $L^1(\mathcal{I}_T;H^{\sigma-2})$-function
\begin{align*}
\partial_t^2\mathscr{V}_{\bm{F}}(t,0)[\bm{A}_0,\bm{A}_1]=c^2(\Delta-1)\mathscr{V}_{\bm{F}}(t,0)[\bm{A}_0,\bm{A}_1]+c^2\bm{F}(t).
\end{align*}
as distributional second derivative. In other words, $\mathscr{V}_{\bm{F}}(\cdot,0)[\bm{A}_0,\bm{A}_1]$ solves the Klein-Gordon equation
\begin{align}
(\Box+1)\bm{B}=\bm{F}\label{KG}
\end{align}
with initial conditions
\begin{align}
\bm{B}(0)=\bm{A}_0\textrm{ and }\partial_t\bm{B}(0)=\bm{A}_1.\label{KGinit}
\end{align}
As expressed below in Lemma \ref{KleinGordon} the function \eqref{KGsol} can be shown to be a $C(\mathcal{I}_T;H^{\sigma})\cap C^1(\mathcal{I}_T;H^{\sigma-1})$-solution to the initial value problem \eqref{KG}--\eqref{KGinit} for even more general choices of inhomogeneity $\bm{F}$. We will need the accompanying Strichartz estimate. The result is due to Brenner \cite{Brenner}, Strichartz \cite{Strichartz}, Ginibre and Velo \cite{Gin-1,Gin0}, but is formulated on the basis of \cite[Lemma 4.1]{NW}.
\begin{lemma}{\emph{\cite[Lemma 4.1]{NW}}}\label{KleinGordon}
Let $0\leq \frac{2}{q_k}= 1-\frac{2}{r_k}<1$ for $k\in\{0,1\}$. Then for $\sigma\in\mathbb{R}$, $(\bm{A}_0,\bm{A}_1)\in H^{\sigma}\times H^{\sigma-1}$ and $\bm{F}\in L^{q_1'}\bigl(\mathcal{I}_T;W^{\sigma-1+\frac{2}{q_1},r_1'}\bigr)$ the function $\bm{B}(t)=\mathscr{V}_{\bm{F}}(\cdot,0)[\bm{A}_0,\bm{A}_1]$ in \eqref{KGsol} is contained in $C(\mathcal{I}_T;H^{\sigma})\cap C^1(\mathcal{I}_T;H^{\sigma-1})$ and the Strichartz estimate
\begin{align*}
\max_{k\in\{0,1\}}\|\partial_t^k\bm{B}\|_{L_T^{q_0}W^{\sigma-k-\frac{2}{q_0},r_0}}\lesssim \|(\bm{A}_0,\bm{A}_1)\|_{H^{\sigma}\times H^{\sigma-1}}+\|\bm{F}\|_{L_T^{q_1'}W^{\sigma-1+\frac{2}{q_1},r_1'}}
\end{align*}
holds true.
\end{lemma}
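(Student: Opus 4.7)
The lemma is stated as a reformulation of \cite[Lemma 4.1]{NW}, and my plan is essentially to extract it from that reference, adding only a short verification that the representation formula \eqref{KGsol} is the right object for the regularity and initial condition claims. I would split the proof into (i) the homogeneous-data piece $\dot{\mathfrak{s}}(t)\bm{A}_0+\mathfrak{s}(t)\bm{A}_1$, (ii) the Duhamel piece $c^2\int_0^t \mathfrak{s}(t-\tau)\bm{F}(\tau)\,\mathrm{d}\tau$, and prove each part of the conclusion for each piece separately, then add.

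For the homogeneous piece, the spectral calculus makes everything transparent: the Fourier multipliers $\dot{\mathfrak{s}}(t)$ and $c\langle\bm{p}\rangle\mathfrak{s}(t)$ are bounded families on $H^s$ for every $s\in\mathbb{R}$ and depend strongly continuously on $t$, with derivatives $-c^2(1-\Delta)\mathfrak{s}(t)$ and $\dot{\mathfrak{s}}(t)$ respectively. This immediately yields $\dot{\mathfrak{s}}(t)\bm{A}_0+\mathfrak{s}(t)\bm{A}_1\in C(\mathbb{R};H^\sigma)\cap C^1(\mathbb{R};H^{\sigma-1})$, the correct initial values, and the identity $(\Box+1)\bm{B}=0$. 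For the Strichartz bound on this piece, I would reduce to the admissibility condition $2/q_0=1-2/r_0\in[0,1)$ and apply the homogeneous Klein--Gordon estimate from \cite{Brenner,Strichartz,Gin-1,Gin0}, i.e.\ after writing $\dot{\mathfrak{s}}$ and $\mathfrak{s}$ via Euler's formula, bound $\|\mathrm{e}^{\pm ict\langle\bm{p}\rangle}f\|_{L_T^{q_0}W^{\sigma-2/q_0,r_0}}\lesssim \|f\|_{H^\sigma}$. The analogous estimate for $\partial_t^k$ ($k=1$) loses one derivative, which accounts for the $H^{\sigma-1}$-norm of $\bm{A}_1$ on the right-hand side.

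For the Duhamel piece, the distributional derivatives stated above the lemma already show $\partial_t^2\mathscr{V}_{\bm{F}}-c^2\Delta\mathscr{V}_{\bm{F}}+c^2\mathscr{V}_{\bm{F}}=c^2\bm{F}$ and zero initial data, so it remains to establish the Strichartz bound and upgrade the $L^\infty_t$ regularity to $C(\mathcal{I}_T;H^\sigma)\cap C^1(\mathcal{I}_T;H^{\sigma-1})$. The bound is the inhomogeneous Strichartz estimate for Klein--Gordon: when $(q_0,r_0)$ and $(q_1,r_1)$ coincide, the standard $TT^*$ argument applied to the operator $f\mapsto \mathfrak{s}(t)\langle\bm{p}\rangle^{1-2/q_0}f$ combined with the dispersive bound $\|\mathrm{e}^{\pm ict\langle\bm{p}\rangle}f\|_{L^{r_0}}\lesssim|t|^{-(1-2/r_0)}\|\langle\bm{p}\rangle^{1-2/r_0}f\|_{L^{r_0'}}$ gives the admissible case; when the two pairs differ, the retarded truncation $\int_0^t$ is handled via the Christ--Kiselev lemma, which is applicable precisely because $q_1'<q_0$ in the non-endpoint range $2/q_k<1$. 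Continuity in $H^\sigma$ and $H^{\sigma-1}$ then follows from a standard density argument: approximate $\bm{F}$ in $L^{q_1'}_T W^{\sigma-1+2/q_1,r_1'}$ by smooth compactly supported functions, for which continuity is clear, and pass to the limit using the already-proved estimate.

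The genuinely hard input is the dispersive decay for $\mathrm{e}^{\pm ict\langle\bm{p}\rangle}$ in three dimensions, which is a stationary phase estimate for an oscillatory integral with symbol $\langle\xi\rangle$ and requires a Littlewood--Paley decomposition to manage the transition from wave-like behavior at high frequencies to Schr\"odinger-like behavior at low frequencies; this is precisely what Brenner and Ginibre--Velo established. Since we only need the packaged result and it is stated verbatim in \cite[Lemma 4.1]{NW}, the cleanest presentation is to perform the elementary verifications of the first paragraph above and then quote that lemma to close the argument.
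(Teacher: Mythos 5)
Your proposal matches the paper's treatment: the paper gives no proof of this lemma at all, simply quoting it from \cite[Lemma 4.1]{NW} (with attribution to Brenner, Strichartz, Ginibre and Velo) after having verified, in the text immediately preceding the statement, precisely the elementary facts in your first paragraph --- the distributional derivatives of $\mathscr{V}_{\bm{F}}(\cdot,0)[\bm{A}_0,\bm{A}_1]$ and the initial conditions. Your sketch of the internal mechanism ($TT^*$, Christ--Kiselev for the retarded integral, dispersive decay for $\mathrm{e}^{\pm ict\langle\bm{p}\rangle}$, density for continuity) is the standard route and consistent with the cited sources, though note that in three dimensions the derivative loss in the interpolated dispersive estimate that produces a Strichartz bound with loss $2/q_0$ should be $2(1-2/r_0)$ rather than the $1-2/r_0$ you wrote.
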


\chapter{The Contraction Argument}\label{Contraction}
Let $(\psi_0,\bm{A}_0,\bm{A}_1)\in H^2(\mathbb{R}^{3N})\times H^{\frac{3}{2}}(\mathbb{R}^3;\mathbb{R}^3)\times H^{\frac{1}{2}}(\mathbb{R}^3;\mathbb{R}^3)$ satisfy the identities $\mathrm{div}\bm{A}_0=\mathrm{div}\bm{A}_1=0$ and consider for $T,R_1,R_2\in(0,\infty)$ the mapping $\Phi$ sending a pair $(\psi,\bm{A})$ from the $(T,R_1,R_2)$-dependent space
\begin{align*}
\mathcal{Z}_T&\!=\!\bigl\{(\psi,\bm{A})\in L^\infty(\mathcal{I}_{T};H^2)\!\times\! \bigl(L^\infty \bigl(\mathcal{I}_{T};H^1(\mathbb{R}^3;\mathbb{R}^3)\bigr)\cap W^{1,4}\bigl(\mathcal{I}_{T};L^4(\mathbb{R}^3;\mathbb{R}^3)\bigr)\bigr)\big| \nonumber\\
&\hspace{0.5cm}\textrm{the continuous representative }\mathcal{I}_T\to L^4(\mathbb{R}^3;\mathbb{R}^3)\textrm{ of }\bm{A}\textrm{ is divergence free}\\
&\hspace{0.5cm}\textrm{at all times},\|\psi\|_{L_T^\infty H^2}\leq R_1,\max\{\|\bm{A}\|_{L_T^\infty H^1},\|\bm{A}\|_{W_T^{1,4}L^4}\}\leq R_2\bigr\}
\end{align*}
into the solution $\Phi(\psi,\bm{A})=\bigl(\mathscr{U}_{\bm{A}}(\cdot,0)\psi_0,\mathscr{V}_{\frac{4\pi}{c}\sum_{j=1}^N P\bm{J}_j[\psi,\bm{A}]+\bm{A}}(\cdot,0)[\bm{A}_0,\bm{A}_1]\bigr)$ to the linearized system
\begin{align}
i\hbar\partial_t\xi&= \Bigl(\sum_{j=1}^N \frac{1}{2m_j}\nabla_{j,\bm{A}}^2+\sum_{1 \leq j < k \leq N}\frac{Q_jQ_k}{|\bm{x}_j-\bm{x}_k|}\Bigr)\xi,\label{SE}\\
(\Box+1)\bm{B}&=\frac{4\pi}{c}\sum_{j=1}^N P\bm{J}_j[\psi,\bm{A}]+\bm{A}\label{KGE}
\end{align}
with initial data
\begin{align*}
\xi(0)=\psi_0,\quad \bm{B}(0)=\bm{A}_0\quad\textrm{and}\quad\partial_t\bm{B}(0)=\bm{A}_1,
\end{align*}
where we observe that $W^{1,4}\bigl(\mathcal{I}_T;L^4(\mathbb{R}^3;\mathbb{R}^3)\bigr)\hookrightarrow W^{1,1}\bigl(\mathcal{I}_T;L^4(\mathbb{R}^3;\mathbb{R}^3)\bigr)$ and $P\bm{J}_j[\psi,\bm{A}]\in L^\infty(\mathcal{I}_T;H^1(\mathbb{R}^3;\mathbb{R}^3))$ for $j\in\{1,\ldots,N\}$ by \eqref{current} and the boundedness of the Helmholtz projection $H^1\to H^1$. 
Combining Corollary \ref{existenceofevolution} with Lemma \ref{KleinGordon} gives that $\Phi(\psi,\bm{A})\in C(\mathcal{I}_T;H^2)\times \bigl(C(\mathcal{I}_T;H^{\frac{3}{2}})\cap C^1(\mathcal{I}_T;H^{\frac{1}{2}})\bigr)$ and we observe directly from \eqref{KGsol} that the second coordinate of $\Phi(\psi,\bm{A})$ must be divergence free at all times, whereby a fixed point of $\Phi$ will have the desired properties. Our strategy will therefore be to invoke the Banach fixed-point theorem and for this we equip $\mathcal{Z}_T$ with the metric $d$ given by
\begin{align*}
d\bigl((\psi,\bm{A}),(\psi',\bm{A}')\bigr)=\max\bigl\{\|\psi-\psi'\|_{L_T^\infty L^2},\|\bm{A}-\bm{A}'\|_{L_T^\infty H^{\frac{1}{2}}},\|\bm{A}-\bm{A}'\|_{L_T^4L^4}\bigr\}
\end{align*}
for $(\psi,\bm{A}),(\psi',\bm{A}')\in \mathcal{Z}_T$.
\begin{lemma}
For all choices of positive numbers $T$, $R_1$ and $R_2$ the metric space $(\mathcal{Z}_T,d)$ is complete.
\end{lemma}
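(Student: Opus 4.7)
The plan is to take an arbitrary $d$-Cauchy sequence $(\psi_n,\bm{A}_n)_{n\in\mathbb{N}}\subset\mathcal{Z}_T$, produce a candidate limit using completeness of the ambient Banach spaces appearing in the definition of $d$, and then \emph{upgrade} this limit so as to recover the stronger membership conditions defining $\mathcal{Z}_T$. Since the three norms in the definition of $d$ individually define Banach spaces, the Cauchy property immediately yields $\psi\in L^\infty(\mathcal{I}_T;L^2)$ and $\bm{A}\in L^\infty(\mathcal{I}_T;H^{\frac{1}{2}})\cap L^4(\mathcal{I}_T;L^4)$ with $\psi_n\to\psi$ and $\bm{A}_n\to\bm{A}$ in these spaces. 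The substance of the proof is then to show that $\psi\in L^\infty(\mathcal{I}_T;H^2)$ with $\|\psi\|_{L_T^\infty H^2}\leq R_1$, that $\bm{A}\in L^\infty(\mathcal{I}_T;H^1)\cap W^{1,4}(\mathcal{I}_T;L^4)$ with both norms bounded by $R_2$, and that the continuous representative of $\bm{A}$ is divergence free at every time.

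For the upgrade my approach is weak-$*$ and weak sequential compactness. Since $H^2$ and $H^1$ are separable reflexive Hilbert spaces, their duals $H^{-2}$ and $H^{-1}$ are too and in particular possess the Radon-Nikod\'ym property, whence the standard dualities $(L^1(\mathcal{I}_T;H^{-2}))^*\cong L^\infty(\mathcal{I}_T;H^2)$ and $(L^1(\mathcal{I}_T;H^{-1}))^*\cong L^\infty(\mathcal{I}_T;H^1)$. The sequences $(\psi_n)_n$ and $(\bm{A}_n)_n$ sit inside the closed norm balls of radii $R_1$ and $R_2$ in these duals, so Banach-Alaoglu together with separability of the preduals yields subsequences with weak-$*$ limits $\tilde\psi$ and $\tilde{\bm{A}}$ whose norms respect the same bounds. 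Testing against $C_0^\infty$-functions on $\mathcal{I}_T^\circ\times\mathbb{R}^{3N}$ (respectively $\mathcal{I}_T^\circ\times\mathbb{R}^3$) both modes of convergence descend to distributions, forcing $\tilde\psi=\psi$ and $\tilde{\bm{A}}=\bm{A}$ almost everywhere. The same extraction procedure applies in the reflexive space $W^{1,4}(\mathcal{I}_T;L^4)$, with weak lower semicontinuity of the norm delivering $\|\bm{A}\|_{W_T^{1,4}L^4}\leq R_2$.

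The last point to address is the at-every-time divergence-free condition on the continuous representative. The one-dimensional Morrey embedding gives $W^{1,4}(\mathcal{I}_T;L^4)\hookrightarrow C(\mathcal{I}_T;L^4)$, so $\bm{A}$ has a continuous $L^4$-valued representative; and since $\mathrm{div}:L^4\to W^{-1,4}$ is bounded while $\mathrm{div}\,\bm{A}_n(t)=0$ identically, the distributional identity $\mathrm{div}\,\bm{A}(t)=0$ for almost every $t$ extends to every $t$ by continuity of the representative. The main technical nuisance throughout is the identification of weak-$*$ (or weak) subsequential limits with the strong $d$-limit, but this is the standard trick: both convergence modes imply convergence in $\mathscr{D}'$ on the relevant open set, and distributional limits are unique. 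Once this identification is in place, all three upper bounds and the divergence-free condition pass to the limit, placing $(\psi,\bm{A})$ in $\mathcal{Z}_T$.
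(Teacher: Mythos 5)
Your proposal is correct and follows essentially the same route as the paper: strong limits from the Cauchy property in the metric's norms, weak-$*$/weak sequential compactness in $L^\infty H^2$, $L^\infty H^1$ and the $W^{1,4}L^4$ setting via the $(L^1(\mathcal{I}_T;\mathcal{Y}))^*\cong L^\infty(\mathcal{I}_T;\mathcal{Y}^*)$ duality and reflexivity, identification of the weak limits with the strong ones through distributional testing, lower semicontinuity of the norms for the bounds $R_1,R_2$, and continuity of $t\mapsto\mathrm{div}\,\bm{A}(t)\in W^{-1,4}$ to upgrade the almost-everywhere divergence-free condition to every $t$. The only (immaterial) differences are that you extract the weak limit in $W^{1,4}(\mathcal{I}_T;L^4)$ directly rather than for $\partial_t\bm{A}_n$ in $L^4(\mathcal{I}_T;L^4)$, and you conclude the everywhere-divergence-free property by density and continuity rather than by the paper's fundamental-theorem-of-calculus computation.
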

\begin{proof}
Let $\bigl((\psi_n,\bm{A}_n)\bigr)_{n\in\mathbb{N}}$ be a Cauchy sequence in $(\mathcal{Z}_T,d)$. Then $(\psi_n)_{n\in\mathbb{N}}$ is a Cauchy sequence in the Banach space $L^\infty(\mathcal{I}_{T};L^2)$ and $(\psi_n)_{n\in\mathbb{N}}$ is furthermore known to be bounded by the constant $R_1$ in the space $L^\infty(\mathcal{I}_{T};H^2)$ -- a space that can be identified with the dual of the separable space $L^1(\mathcal{I}_{T};H^{-2})$ by the isometric anti-isomorphism
\begin{align*}
L^\infty(\mathcal{I}_T;H^2)\ni F\mapsto \Bigl(G\mapsto\int_0^T \langle F(t),G(t)\rangle_{H^2,H^{-2}}\,\mathrm{d}t\Bigr)\in \bigl(L^1(\mathcal{I}_T;H^{-2})\bigr)^*
\end{align*}
as expressed in \cite[Theorem 8.18.3]{Edwards}. Therefore we can use the Banach-Alaoglu theorem to conclude that there exist $\psi\in L^\infty(\mathcal{I}_{T};L^2)$ and $\psi^*\in L^\infty(\mathcal{I}_{T};H^2)$ such that
\begin{align}
\psi_n\xrightarrow[n\to\infty]{}\psi\textrm{ in }L^\infty(\mathcal{I}_{T};L^2)\textrm{ and }\psi_{n_k}\xrightharpoonup[k\to\infty]{w*}\psi^*\textrm{ in }L^\infty(\mathcal{I}_{T};H^2).\label{psicon}
\end{align}
For $\varphi\in L^2(\mathcal{I}_{T};L^2)$ the sequence $\bigl((\psi_{n_k},\varphi)_{L^2L^2}\bigr)_{k\in\mathbb{N}}$ then converges to both of the numbers $(\psi,\varphi)_{L^2L^2}$ and $(\psi^*,\varphi)_{L^2L^2}$ so the functions $\psi$ and $\psi^*$ must be identical. Likewise, $(\bm{A}_n)_{n\in\mathbb{N}}$ is bounded by the constant $R_2$ in the dual $L^\infty(\mathcal{I}_{T};H^1)$ of the separable space $L^1(\mathcal{I}_{T};H^{-1})$ and in addition $(\bm{A}_n)_{n\in\mathbb{N}}$ is a Cauchy sequence in each of the two Banach spaces $L^\infty\bigl(\mathcal{I}_{T};H^{\frac{1}{2}}\bigr)$ and $L^4(\mathcal{I}_{T};L^4)$. Consequently, there exists an $\bm{A}\in L^\infty(\mathcal{I}_T,H^1)\cap L^4(\mathcal{I}_T,L^4)$ such that
\begin{align}
\bm{A}_n\xrightarrow[n\to\infty]{}\bm{A}\textrm{ in }L^\infty\bigl(\mathcal{I}_{T};H^{\frac{1}{2}}\bigr)\textrm{ and }L^4(\mathcal{I}_{T};L^4),\bm{A}_{n_k'}\xrightharpoonup[k\to\infty]{w*}\bm{A}\textrm{ in }L^\infty(\mathcal{I}_{T};H^1).\label{Akonverse}
\end{align}
Moreover, the boundedness of the sequence $(\partial_t\bm{A}_n)_{n\in\mathbb{N}}$ in the reflexive space $L^4(\mathcal{I}_{T};L^4)$ gives the existence of an $\dot{\bm{A}}\in L^4(\mathcal{I}_T;L^4)$ such that the weak convergence
\begin{align}
\partial_t\bm{A}_{n_k''}\xrightharpoonup[k\to\infty]{}\dot{\bm{A}}\textrm{ in }L^4(\mathcal{I}_{T};L^4)\label{Acon}
\end{align}
holds. But for any $k\in\mathbb{N}$, $\eta\in L^{\frac{4}{3}}(\mathbb{R}^3)$ and $\varphi\in C_0^\infty(\mathcal{I}_T^\circ)$ we then have
\begin{align*}
\int_0^T \int_{\mathbb{R}^3}\bm{A}_{n_k''}(t)(\bm{x})\eta(\bm{x})\,\mathrm{d}\bm{x}\varphi'(t)\,\mathrm{d}t=-\int_0^T\int_{\mathbb{R}^3}\partial_t\bm{A}_{n_k''}(t)(\bm{x})	\eta(\bm{x})\,\mathrm{d}\bm{x}\varphi(t)\,\mathrm{d}t
\end{align*}
whereby letting $k\to\infty$ and using \eqref{Akonverse}--\eqref{Acon} gives that $\dot{\bm{A}}$ is the distributional time derivative of $\bm{A}$. Concerning the divergence of $\bm{A}$ we observe that
\begin{align*}
\int_0^T \|\mathrm{div}\bm{A}(t)\|_{W^{-1,4}}^4\,\mathrm{d}t
\leq \|\bm{A}-\bm{A}_n\|_{L_T^4L^4}^4\xrightarrow[n\to\infty]{}0
\end{align*}
so $\mathrm{div}\bm{A}(t)=\mathrm{div}\partial_t\bm{A}(t)=0$ for almost all $t\in\mathcal{I}_T$. For any $t\in(0,T]$ the continuous representative $\mathcal{I}_T\to L^4(\mathbb{R}^3;\mathbb{R}^3)$ of $\bm{A}$ therefore satisfies
\begin{align*}
\mathrm{div}\bm{A}(t)=\mathrm{div}\bm{A}(t')+\int_{t'}^t\mathrm{div}\partial_t\bm{A}(s)\,\mathrm{d}s=0,
\end{align*}
where we have chosen some time $t'\in [0,t]$ in which $\bm{A}$ takes a divergence free value -- the identity $\mathrm{div}\bm{A}(0)=0$ then follows by using the continuity of $\mathcal{I}_T\ni t\mapsto \mathrm{div}\bm{A}(t)\in W^{-1,4}(\mathbb{R}^3)$. Finally, \cite[Propositions 3.5 and 3.13]{HB} concerning boundedness of weakly (respectively weak-$*$) convergent sequences combined with \eqref{psicon}--\eqref{Acon} give
\begin{align*}
\|\psi\|_{L_T^\infty H^2}\leq R_1\textrm{ and }\max\bigl\{\|\bm{A}\|_{L_T^\infty H^1},\|\bm{A}\|_{W_T^{1,4}L^4}\bigr\}\leq R_2,
\end{align*}
whereby we are in position to conclude that $(\psi,\bm{A})$ is contained in $\mathcal{Z}_T$ and that $d\bigl((\psi,\bm{A}),(\psi_n,\bm{A}_n)\bigr)\xrightarrow[n\to\infty]{}0$.
\end{proof}
Next, we investigate the properties of the mapping $\Phi$.
\begin{lemma}\label{into}
Given any $(\psi_0,\bm{A}_0,\bm{A}_1)\in H^{2}(\mathbb{R}^{3N})\times H^{\frac{3}{2}}(\mathbb{R}^3;\mathbb{R}^3)\times H^{\frac{1}{2}}(\mathbb{R}^3;\mathbb{R}^3)$ satisfying $\mathrm{div}\bm{A}_0=\mathrm{div}\bm{A}_1=0$ and any $R>0$ there exist $R_1,R_2\in (R,\infty)$ and $T_\dagger>0$ such that for all $T\in(0,T_\dagger]$ the function $\Phi$ maps $\mathcal{Z}_T$ into itself.
\end{lemma}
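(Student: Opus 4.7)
The plan is to estimate each of the two components of $\Phi(\psi,\bm{A})=(\xi,\bm{B})$ using the tools developed in the previous sections, and to choose the free parameters in the order $R_2, R_1, T_\dagger$. Write $C_0:=\|(\bm{A}_0,\bm{A}_1)\|_{H^{3/2}\times H^{1/2}}+\|\psi_0\|_{H^2}$. The idea is that $R_2$ will be fixed to dominate the Klein-Gordon initial data, $R_1$ will then be fixed to dominate the Schr\"odinger initial data under the $H^2$-bound from Lemma \ref{H2wp} (whose constant depends only on $R_2$), and finally $T_\dagger$ will be shrunk to make the time-integrated inhomogeneities and the exponential factor in Lemma \ref{H2wp} acceptably small.

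For the Klein-Gordon component $\bm{B}$, I first bound the inhomogeneity $\bm{F}:=\frac{4\pi}{c}\sum_{j=1}^{N}P\bm{J}_j[\psi,\bm{A}]+\bm{A}$: applying \eqref{current} with $(\psi_2,\bm{A}_2)=(0,0)$, using $\|\bm{A}\|_{D^1}\leq\|\bm{A}\|_{H^1}\leq R_2$ and boundedness of $P$ on $H^1$, gives $\|\bm{F}\|_{L_T^\infty H^1}\lesssim (1+R_2)R_1^2+R_2=:M(R_1,R_2)$, whence $\|\bm{F}\|_{L_T^1 H^{1/2}}\leq TM(R_1,R_2)$. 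I then invoke Lemma \ref{KleinGordon} at regularity $\sigma=3/2$ with inhomogeneity pair $(q_1,r_1)=(\infty,2)$, coupling it in turn with two solution pairs: $(q_0,r_0)=(\infty,2)$ to control $\|\bm{B}\|_{L_T^\infty H^{3/2}}$ and $\|\partial_t\bm{B}\|_{L_T^\infty H^{1/2}}$, and $(q_0,r_0)=(4,4)$ to control $\|\bm{B}\|_{L_T^4 W^{1,4}}$ together with $\|\partial_t\bm{B}\|_{L_T^4 L^4}$ (note that with this choice $\sigma-1-2/q_0=0$, so the time derivative lands in $L^4 L^4$ directly). Combined with the Sobolev embeddings $H^{3/2}\hookrightarrow H^1$ and $W^{1,4}\hookrightarrow L^4$ on $\mathbb{R}^3$, this produces a constant $K_2>0$ independent of $T,R_1,R_2$ such that
\begin{equation*}
\max\{\|\bm{B}\|_{L_T^\infty H^1},\|\bm{B}\|_{W_T^{1,4}L^4}\}\leq K_2\bigl(C_0+TM(R_1,R_2)\bigr).
\end{equation*}
The divergence-freeness of the continuous representative of $\bm{B}$ is immediate from \eqref{KGsol}, since $\mathfrak{s}(t)$ and $\dot{\mathfrak{s}}(t)$ are scalar Fourier multipliers commuting with $\mathrm{div}$, and each of $\bm{A}_0$, $\bm{A}_1$, $P\bm{J}_j[\psi,\bm{A}](t)$ and $\bm{A}(t)$ is divergence-free.

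For the Schr\"odinger component $\xi=\mathscr{U}_{\bm{A}}(\cdot,0)\psi_0$, the Sobolev embedding $H^1\hookrightarrow L^4$ on $\mathbb{R}^3$ yields $\|\bm{A}\|_{L_T^\infty L^4}\lesssim R_2$, while H\"older's inequality in time gives $\int_0^T\|\partial_t\bm{A}(s)\|_{L^4}\,\mathrm{d}s\leq T^{3/4}\|\partial_t\bm{A}\|_{L_T^4 L^4}\leq T^{3/4}R_2$. Lemma \ref{H2wp} (with any fixed $\delta\in(0,1/2)$) then provides
\begin{equation*}
\|\xi\|_{L_T^\infty H^2}\leq C\langle R_2\rangle^{\frac{8}{1-2\delta}}\exp\bigl(C'\langle R_2\rangle T^{3/4}R_2\bigr)\|\psi_0\|_{H^2}.
\end{equation*}
With both estimates in hand, I fix $R_2:=\max\{R,4K_2C_0\}+1$, then $R_1:=\max\{R,4C\langle R_2\rangle^{8/(1-2\delta)}\|\psi_0\|_{H^2}\}+1$, and finally $T_\dagger>0$ small enough that both $K_2 T_\dagger M(R_1,R_2)\leq R_2/2$ and $\exp(C'\langle R_2\rangle T_\dagger^{3/4}R_2)\leq 2$ hold. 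The main subtlety, and the reason for this ordering, is that the bound of Lemma \ref{H2wp} grows polynomially in $R_2$ through the factor $\langle R_2\rangle^{8/(1-2\delta)}$, so $R_1$ must be fixed \emph{after} $R_2$; by first fixing $R_2$ from the Klein-Gordon analysis (whose leading initial-data term $K_2 C_0$ does not involve $R_1$), then $R_1$ depending on this $R_2$, and only afterwards shrinking $T$ to dominate the remaining $R_1$- and $R_2$-dependent perturbations, one avoids circular dependence and closes the estimates. The remaining regularity properties of $\Phi(\psi,\bm{A})$ are precisely those recorded immediately before the statement of the lemma.
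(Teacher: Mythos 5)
Your proof is correct and follows essentially the same route as the paper: bound $\|\xi\|_{L_T^\infty H^2}$ via Lemma \ref{H2wp} with $\|\bm{A}\|_{L_T^\infty L^4}\lesssim R_2$ and $\|\partial_t\bm{A}\|_{L_T^1L^4}\leq T^{3/4}R_2$, bound the Klein--Gordon component via \eqref{current} and the Strichartz estimate of Lemma \ref{KleinGordon}, and then fix $R_2$, $R_1$, $T_\dagger$ in exactly that order. Your explicit discussion of the admissible exponent pairs and of why the ordering of the choices avoids circularity matches the paper's (more tersely stated) argument.
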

\begin{proof}
Let $(\psi_0,\bm{A}_0,\bm{A}_1)\in H^{2}(\mathbb{R}^{3N})\times H^{\frac{3}{2}}(\mathbb{R}^3;\mathbb{R}^3)\times H^{\frac{1}{2}}(\mathbb{R}^3;\mathbb{R}^3)$ satisfy the identities $\mathrm{div}\bm{A}_0=\mathrm{div}\bm{A}_1=0$ and consider arbitrary positive constants $T,R_1$ and $R_2$. For any fixed pair $(\psi,\bm{A})\in \mathcal{Z}_T$ we get from Lemma \ref{H2wp}, the Sobolev embedding $H^{\frac{3}{4}}\hookrightarrow L^4$, Lemma \ref{KleinGordon} and \eqref{current} that not only is $\Phi(\psi,\bm{A})=(\xi,\bm{B})$ contained in $C(\mathcal{I}_T;H^2)\times \bigl(C(\mathcal{I}_T;H^{\frac{3}{2}})\cap C^1(\mathcal{I}_T;H^{\frac{1}{2}})\bigr)$ as noted above, but we also have $\bm{B}\in W^{1,4}(\mathcal{I}_{T},L^4)$ with the two estimates
\begin{align*}
\|\xi\|_{L_T^\infty H^2}\leq C\langle R_2 \rangle^{\frac{8}{1-2\delta}} \exp\bigl(C T^{\frac{3}{4}} \langle R_2\rangle R_2\bigr)\|\psi_0\|_{H^2}
\end{align*}
and
\begin{align*}
&\max\bigl\{\|\bm{B}\|_{L_T^\infty H^{\frac{3}{2}}},\|\bm{B}\|_{L_T^4 L^4},\|\partial_t \bm{B}\|_{L_T^4 L^4}\bigr\}\\
&\leq C\bigl(\|(\bm{A}_0,\bm{A}_1)\|_{H^{\frac{3}{2}}\times H^{\frac{1}{2}}}+T(1+R_2)R_1^2+TR_2\bigr)
\end{align*}
holding true for some constant $C>0$ (depending on $c$, $\hbar$, $N$, $m_1,\ldots,m_N$ and $Q_1,\ldots,Q_N$). Given some $(\psi_0,\bm{A}_0,\bm{A}_1)\in H^{2}(\mathbb{R}^{3N})\times H^{\frac{3}{2}}(\mathbb{R}^3;\mathbb{R}^3)\times H^{\frac{1}{2}}(\mathbb{R}^3;\mathbb{R}^3)$ with $\mathrm{div}\bm{A}_0=\mathrm{div}\bm{A}_1=0$ and some positive number $R$ we can therefore choose $R_2>\max\bigl\{2\sqrt{2}C\|(\bm{A}_0,\bm{A}_1)\|_{H^{\frac{3}{2}}\times H^{\frac{1}{2}}},R\bigr\}$, $R_1>\max\bigl\{2C\langle R_2\rangle^{\frac{8}{1-2\delta}}\|\psi_0\|_{H^2},R\bigr\}$ and $T_\dagger=\min\bigl\{\frac{R_2}{2\sqrt{2}C((1+R_2)R_1^2+R_2)},\bigl(\frac{\log 2}{CR_2\langle R_2\rangle}\bigr)^{\frac{4}{3}}\bigr\}$ to make sure that $\Phi$ maps $\mathcal{Z}_T$ into itself for any $T\in (0,T_\dagger]$.
\end{proof}
Finally, we show that by choosing $T$ sufficiently small we can make $\Phi$ a contraction on $(\mathcal{Z}_T,d)$, which by the Banach fixed-point theorem guarantees the existence of a unique fixed point for $\Phi$.
\begin{lemma}\label{fixedpoint}
For any $(\psi_0,\bm{A}_0,\bm{A}_1)\in H^{2}(\mathbb{R}^{3N})\times H^{\frac{3}{2}}(\mathbb{R}^3;\mathbb{R}^3)\times H^{\frac{1}{2}}(\mathbb{R}^3;\mathbb{R}^3)$ with $\mathrm{div}\bm{A}_0=\mathrm{div}\bm{A}_1=0$ and any $R\geq 0$ there exist $R_1,R_2\in (R,\infty)$ and $T_*>0$ such that $\Phi$ is a contraction on $\bigl(\mathcal{Z}_T,d\bigr)$ for all $T\in(0,T_*]$.
\end{lemma}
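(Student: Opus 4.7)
The plan is: first pick $R_1,R_2\in(R,\infty)$ as in the proof of Lemma \ref{into} so that $\Phi$ maps $\mathcal{Z}_T$ into itself for every $T\le T_\dagger$; then show $d(\Phi(\psi,\bm{A}),\Phi(\psi',\bm{A}'))\le C(R_1,R_2)\,T^{3/4}\,d((\psi,\bm{A}),(\psi',\bm{A}'))$ so that a strict contraction is obtained by further shrinking $T$ and the Banach fixed-point theorem applies. Writing $(\xi,\bm{B})=\Phi(\psi,\bm{A})$ and $(\xi',\bm{B}')=\Phi(\psi',\bm{A}')$, the three summands in $d(\Phi(\psi,\bm{A}),\Phi(\psi',\bm{A}'))$ are treated separately.

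For the Schr\"odinger part I would observe that $\xi-\xi'$ solves the equation \eqref{LinSchr} with coefficient $\bm{A}'$ and inhomogeneity $\bigl(\sum_j\frac{1}{2m_j}(\nabla_{j,\bm{A}}^2-\nabla_{j,\bm{A}'}^2)\bigr)\xi$, starting from zero. Lemma \ref{integralformofequation} applied to $\mathscr{U}_{\bm{A}'}$, together with Remark \ref{consL2norm} (unitarity on $L^2$) and \eqref{KatoRellich}, then gives
\begin{align*}
\|\xi(t)-\xi'(t)\|_{L^2}\lesssim \int_0^t (1+\|(\bm{A}+\bm{A}')(s)\|_{L^4})\|(\bm{A}-\bm{A}')(s)\|_{L^4}\|\xi(s)\|_{H^2}\,\mathrm{d}s,
\end{align*}
and using $\|\xi\|_{L_T^\infty H^2}\le R_1$, the Sobolev embedding $H^{3/4}\hookrightarrow L^4$ to bound $\|\bm{A}\|_{L^4},\|\bm{A}'\|_{L^4}\lesssim R_2$, and H\"older in time yields $\|\xi-\xi'\|_{L_T^\infty L^2}\lesssim (1+R_2)R_1\,T^{3/4}\,\|\bm{A}-\bm{A}'\|_{L_T^4 L^4}$.

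For the Klein--Gordon part, $\bm{B}-\bm{B}'$ solves $(\Box+1)(\bm{B}-\bm{B}')=\bm{F}-\bm{F}'$ with zero Cauchy data, where $\bm{F}=\tfrac{4\pi}{c}\sum_j P\bm{J}_j[\psi,\bm{A}]+\bm{A}$. Lemma \ref{KleinGordon} with $\sigma=\tfrac{1}{2}$, $(q_1,r_1)=(\infty,2)$ (so sources are measured in $L_T^1 H^{-1/2}$), and either $(q_0,r_0)=(\infty,2)$ for $L_T^\infty H^{1/2}$ or $(q_0,r_0)=(4,4)$ for $L_T^4 L^4$, reduces matters to estimating $\|\bm{F}-\bm{F}'\|_{L_T^1 H^{-1/2}}$. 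The summand $\bm{A}-\bm{A}'$ is absorbed by $\|\bm{A}-\bm{A}'\|_{L_T^1 H^{-1/2}}\le T\|\bm{A}-\bm{A}'\|_{L_T^\infty H^{1/2}}$. For the current differences I would test $P\bm{J}_j[\psi,\bm{A}]-P\bm{J}_j[\psi',\bm{A}']$ against $\phi\in H^{1/2}(\mathbb{R}^3;\mathbb{R}^3)$; self-adjointness of $P$ replaces $\phi$ by the divergence-free $P\phi$ (still bounded by $\|\phi\|_{H^{1/2}}$). After splitting bilinearly, the term $\int\int\phi\cdot\overline{(\psi-\psi')}\nabla_{\bm{x}_j}\psi\,\mathrm{d}\bm{x}$ is controlled by Cauchy--Schwarz and $\|\phi(\bm{x}_j)\nabla_{\bm{x}_j}\psi\|_{L^2}\lesssim\|\phi\|_{L^3}\|\psi\|_{H^2}\lesssim\|\phi\|_{H^{1/2}}\|\psi\|_{H^2}$, an estimate of the type behind \eqref{KatoRellich} combined with $H^{1/2}(\mathbb{R}^3)\hookrightarrow L^3(\mathbb{R}^3)$. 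For $\int\int\phi\cdot\overline{\psi'}\nabla_{\bm{x}_j}(\psi-\psi')\,\mathrm{d}\bm{x}$ integration by parts in $\bm{x}_j$ produces one term of the same type and a second term proportional to $\int\int(\mathrm{div}_{\bm{x}_j}\phi)\,\overline{\psi'}(\psi-\psi')\,\mathrm{d}\bm{x}$ which vanishes by the divergence-freeness of $\phi$: this cancellation is the crux of the argument. The $\bm{A}$-dependent pieces $(\bm{A}-\bm{A}')(\bm{x}_j)\int|\psi|^2\,\mathrm{d}\bm{x}_j'$ and $\bm{A}'(\bm{x}_j)\int(|\psi|^2-|\psi'|^2)\,\mathrm{d}\bm{x}_j'$ are bounded analogously, using $H^{1/2}\hookrightarrow L^3$, $H^1\hookrightarrow L^6$, and the uniform estimate $\|\int|\psi|^2\,\mathrm{d}\bm{x}_j'\|_{L^\infty(\mathbb{R}^3)}\lesssim R_1^2$ that follows from $H^2(\mathbb{R}^{3N})\hookrightarrow L^\infty(\mathbb{R}^3_{\bm{x}_j};L^2(\mathbb{R}^{3(N-1)}_{\bm{x}_j'}))$. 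Collecting and integrating in time gives $\|\bm{F}-\bm{F}'\|_{L_T^1 H^{-1/2}}\le C(R_1,R_2)\,T\,d((\psi,\bm{A}),(\psi',\bm{A}'))$.

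Combining the two coordinates yields $d(\Phi(\psi,\bm{A}),\Phi(\psi',\bm{A}'))\le C'(R_1,R_2)\,T^{3/4}\,d((\psi,\bm{A}),(\psi',\bm{A}'))$, so any $T_*\in(0,T_\dagger]$ with $C'(R_1,R_2)T_*^{3/4}<1$ makes $\Phi$ a strict contraction on $(\mathcal{Z}_T,d)$ for every $T\in(0,T_*]$, and Banach's fixed-point theorem concludes. The main obstacle will be the linear-in-$\|\psi-\psi'\|_{L_T^\infty L^2}$ bound on the current difference: a direct application of \eqref{current} would only yield $\|\psi-\psi'\|_{L_T^\infty H^2}$-dependence, and only by exploiting self-adjointness of $P$ to restrict to divergence-free test functions does the obstructing $\mathrm{div}_{\bm{x}_j}\phi$ term disappear.
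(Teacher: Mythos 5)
Your proposal is correct and is, in all essentials, the paper's own argument: choose $R_1,R_2,T_\dagger$ via Lemma \ref{into}, estimate $\xi-\xi'$ by Duhamel (Lemma \ref{integralformofequation}) together with unitarity and \eqref{KatoRellich}, estimate $\bm{B}-\bm{B}'$ by the Strichartz inequality of Lemma \ref{KleinGordon}, and close the loop with the Banach fixed-point theorem; and you correctly identify the crux, namely that \eqref{current} alone is useless here because it costs $\|\psi-\psi'\|_{H^2}$, so the gradient structure of the current must be exploited through the Helmholtz projection. The only point where you genuinely deviate is the norm in which the current difference is measured. The paper takes the dual Strichartz pair $(q_1,r_1)=(4,4)$ and bounds $P\bigl(\bm{J}_j[\psi,\bm{A}]-\bm{J}_j[\psi',\bm{A}']\bigr)$ directly in $L_T^{4/3}L^{4/3}$, rewriting the dangerous piece $\overline{\psi}'\nabla_{j,\bm{A}'}(\psi'-\psi)$ as derivative-free terms plus the pure gradient $\nabla_{\bm{x}_j}\mathrm{Im}\int\overline{\psi}'(\psi'-\psi)\,\mathrm{d}\bm{x}_j'$, which $P$ annihilates on the source side; you instead take $(q_1,r_1)=(\infty,2)$, work in $L_T^1H^{-\frac{1}{2}}$ by duality, and let $\mathrm{div}P\phi=0$ kill the same term after integration by parts on the test-function side. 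These are adjoint formulations of the identical cancellation and both close: yours requires the additional embeddings $H^{\frac{1}{2}}\hookrightarrow L^3$, $H^1\hookrightarrow L^6$ and the mixed-norm bound on $\int|\psi|^2\,\mathrm{d}\bm{x}_j'$ to pair the $\bm{A}$-dependent pieces against $H^{\frac{1}{2}}$ test functions, whereas the paper's only needs H\"older and $L^{\frac{4}{3}}$-boundedness of $P$. (A cosmetic remark: your claimed uniform factor $T^{\frac{3}{4}}$ is not essential; the paper's bookkeeping produces a mixture of powers $T^{\frac{3}{4}}$, $T^{\frac{1}{2}}$ and $T$, and all that matters is that the contraction constant tends to $0$ as $T\to 0^+$.)
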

\begin{proof}
Given $R\geq 0$ and $(\psi_0,\bm{A}_0,\bm{A}_1)\in H^{2}(\mathbb{R}^{3N})\times H^{\frac{3}{2}}(\mathbb{R}^3;\mathbb{R}^3)\times H^{\frac{1}{2}}(\mathbb{R}^3;\mathbb{R}^3)$ satisfying $\mathrm{div}\bm{A}_0=\mathrm{div}\bm{A}_1=0$ we use Lemma \ref{into} to choose $R_1,R_2\in(R,\infty)$ and $T_\dagger>0$ such that $\Phi$ maps $\mathcal{Z}_T$ into itself for any time span $T\in(0,T_\dagger]$. Given an arbitrary such $T\in(0,T_\dagger]$ we consider $(\psi,\bm{A}),(\psi',\bm{A}')\in\mathcal{Z}_T$ and write $\Phi(\psi,\bm{A})=(\xi,\bm{B})$ as well as $\Phi(\psi',\bm{A}')=(\xi',\bm{B}')$. After introducing $f\in C(\mathcal{I}_T;L^2)$ by setting
\begin{align*}
f(t)=\sum_{j=1}^N\frac{1}{2m_j}\bigl(\nabla_{j,\bm{A}(t)}^2-\nabla_{j,\bm{A}'(t)}^2\bigr)\xi'(t)\textrm{ for }t\in\mathcal{I}_T
\end{align*}
we observe that $\xi-\xi'$ solves the initial value problem
\begin{align*}
i\hbar \partial_t(\xi-\xi')&=\Bigl(\sum_{j=1}^N\frac{1}{2m_j}\nabla_{j,\bm{A}}^2+\sum_{1\leq j<k\leq N}\frac{Q_jQ_k}{|\bm{x}_j-\bm{x}_k|}\Bigr)(\xi-\xi')+f\\
(\xi-\xi')(0)&=0.
\end{align*}
Combining this with Lemma \ref{integralformofequation} gives that $(\xi-\xi')(t)=-\frac{i}{\hbar}\int_0^t \mathscr{U}_{\bm{A}}(t,s)f(s)\,\mathrm{d}s$ for all $t\in\mathcal{I}_{T}$, whereby Remark \ref{consL2norm}, Lemma \ref{Estlemma} and Hölder's inequality help us obtain the estimate
\begin{align}
\|\xi-\xi'\|_{L_T^\infty L^2}
&\lesssim \int_0^T (1+\|\bm{A}(s)+\bm{A}'(s)\|_{L^4})\|\bm{A}(s)-\bm{A}'(s)\|_{L^4}\|\xi'(s)\|_{H^2}\,\mathrm{d}s\nonumber\\
&\leq R_1\bigl(T^{\frac{3}{4}}+2R_2 T^{\frac{1}{2}}\bigr)\|\bm{A}-\bm{A}'\|_{L_T^4 L^4}.\label{xidist}
\end{align}
The map $\bm{B}-\bm{B}'=\mathscr{V}_{\frac{4\pi}{c}\sum_{j=1}^N P(\bm{J}_j[\psi,\bm{A}]-\bm{J}_j[\psi',\bm{A}'])}(\cdot,0)[\bm{0},\bm{0}]+\mathscr{V}_{\bm{A}-\bm{A}'}(\cdot,0)[\bm{0},\bm{0}]$ satisfies
\begin{align}
&\max\bigl\{\|\bm{B}-\bm{B}'\|_{L_T^\infty H^{\frac{1}{2}}},\|\bm{B}-\bm{B}'\|_{L_T^4L^4}\bigr\}\nonumber\\
&\lesssim \sum_{j=1}^N\bigl\| P\bigl(\bm{J}_j[\psi,\bm{A}]-\bm{J}_j[\psi',\bm{A}']\bigr)\bigr\|_{L_T^{\frac{4}{3}}L^{\frac{4}{3}}}+\|\bm{A}-\bm{A}'\|_{L_T^1 H^{-\frac{1}{2}}}\label{Bdist}
\end{align}
by Lemma \ref{KleinGordon}. To estimate the first term on the right hand side of \eqref{Bdist} we write $\bigl(\bm{J}_j[\psi,\bm{A}]-\bm{J}_j[\psi',\bm{A}']\bigr)(t)$ for almost all $t\in\mathcal{I}_T$ as a sum of the three $L^{\frac{4}{3}}$-functions
\begin{align*}
&g_j^1(t): \bm{x}_j\mapsto \frac{Q_j}{m_j}\mathrm{Re}\int_{\mathbb{R}^{3(N-1)}}\overline{(\psi'-\psi)}(t)(\bm{x})\nabla_{j,\bm{A}(t)}\psi(t)(\bm{x})\,\mathrm{d}\bm{x}_j',\\
&g_j^2(t): \bm{x}_j\mapsto \frac{Q_j^2}{m_j c}(\bm{A}'-\bm{A})(t)(\bm{x}_j)\mathrm{Re}\int_{\mathbb{R}^{3(N-1)}}\bigl(\overline{\psi}'\psi\bigr)(t)(\bm{x})\,\mathrm{d}\bm{x}_j'
\end{align*}
and
\begin{align}
&g_j^3(t):\bm{x}_j\mapsto \Bigl\{-\frac{Q_j}{m_j}\mathrm{Re}\int_{\mathbb{R}^{3(N-1)}}\nabla_{j,-\bm{A}'(t)}\overline{\psi}'(t)(\bm{x})(\psi'-\psi)(t)(\bm{x})\,\mathrm{d}\bm{x}_j'\nonumber\\
&\hspace{3.4cm}-\frac{Q_j\hbar}{m_j}\nabla_{\bm{x}_j}\mathrm{Im}\int_{\mathbb{R}^{3(N-1)}}\bigl[\overline{\psi}'(\psi'-\psi)\bigr](t)(\bm{x})\,\mathrm{d}\bm{x}_j'\Bigr\}.\label{gj3}
\end{align}
where the expression for the third function can also be written more compactly as $\bm{x}_j\mapsto \frac{Q_j}{m_j}\mathrm{Re}\int_{\mathbb{R}^{3(N-1)}}\overline{\psi}'(t)(\bm{x})\nabla_{j,\bm{A}'(t)}(\psi'-\psi)(t)(\bm{x})\,\mathrm{d}\bm{x}_j'$. However, in the present context we prefer to express $g_j^3$ in the form \eqref{gj3} since applying the Helmholtz projection kills the last term in \eqref{gj3} and leaves us with a term with no derivatives applied to the difference $(\psi'-\psi)(t)$. As in the proof of Lemma \ref{Estlemma} we can therefore use Minkowski's integral inequality, the Sobolev embeddings $H^{\frac{3}{4}}\hookrightarrow L^4$, $H^{\frac{3}{2}+\delta}\hookrightarrow L^\infty$, boundedness of the Helmholtz projection $L^{\frac{4}{3}}\to L^{\frac{4}{3}}$ and Hölder's inequality to obtain that for almost all $t\in\mathcal{I}_T$,
\begin{align*}
&\bigl\| P\bigl(\bm{J}_j[\psi,\bm{A}]-\bm{J}_j[\psi',\bm{A}']\bigr)(t)\bigr\|_{L^{\frac{4}{3}}}\\
&\lesssim \|g_j^1(t)\|_{L^{\frac{4}{3}}}+\|g_j^2(t)\|_{L^{\frac{4}{3}}}+\Bigl\|\bm{x}_j\mapsto\int\nabla_{j,-\bm{A}'(t)}\overline{\psi}'(t)(\bm{x})(\psi'-\psi)(t)(\bm{x})\,\mathrm{d}\bm{x}_j'\Bigr\|_{L^{\frac{4}{3}}}\nonumber\\
&\lesssim \bigl\{(1+\|\bm{A}(t)\|_{L^4})\|\psi(t)\|_{H^2}+(1+\|\bm{A}'(t)\|_{L^4})\|\psi'(t)\|_{H^2}\bigr\}\|(\psi'-\psi)(t)\|_{L^2}\\
&\hspace{5.5cm}+\|\psi'(t)\|_{L^2}\|\psi(t)\|_{H^2}\|(\bm{A}'-\bm{A})(t)\|_{L^4}.
\end{align*}
and so
\begin{align}
&\bigl\| P\bigl(\bm{J}_j[\psi,\bm{A}]-\bm{J}_j[\psi',\bm{A}']\bigr)\bigr\|_{L^{\frac{4}{3}}L^{\frac{4}{3}}}\nonumber\\
&\lesssim R_1\bigl(T^{\frac{3}{4}}+R_2 T^{\frac{1}{2}}\bigr)\|\psi'-\psi\|_{L_T^\infty L^2}+R_1^2T^{\frac{1}{2}}\|\bm{A}'-\bm{A}\|_{L_T^4L^4}.\label{Bvidere}
\end{align}
From \eqref{xidist}, \eqref{Bdist} and \eqref{Bvidere} we realize that there exists a constant $C>0$ such that
\begin{align*}
d\bigl(\Phi(\psi,\bm{A}),\Phi(\psi',\bm{A}')\bigr)\leq C\bigl(R_1\bigl(T^{\frac{3}{4}}+R_2 T^{\frac{1}{2}}\bigr)+R_1^2T^{\frac{1}{2}}+T\bigr) d\bigl((\psi,\bm{A}),(\psi',\bm{A}')\bigr)
\end{align*}
so for small enough $T$ the mapping $\Phi$ will be a contraction on $(\mathcal{Z}_T,d)$.
\end{proof}
The existence part of Theorem \ref{main} has now been proven. 

\chapter{Uniqueness}\label{Uniqueness}
We now turn our attention to the uniqueness question.
\begin{lemma}\label{uniq1}
Let $(\psi_0,\bm{A}_0,\bm{A}_1)\in H^2(\mathbb{R}^{3N})\times H^{\frac{3}{2}}(\mathbb{R}^3;\mathbb{R}^3)\times H^{\frac{1}{2}}(\mathbb{R}^3;\mathbb{R}^3)$ with $\mathrm{div}\bm{A}_0=\mathrm{div}\bm{A}_1=0$ and $T>0$ be given. Then if the pairs $(\psi^1,\bm{A}^1)$ and $(\psi^2,\bm{A}^2)$ belong to $C(\mathcal{I}_{T};H^2)\times \bigl(C\bigl(\mathcal{I}_T;H^{\frac{3}{2}}\bigr)\cap C^1\bigl(\mathcal{I}_T;H^{\frac{1}{2}}\bigr)\bigr)$, solve \eqref{MSPmaerke}$+$\eqref{initialcondi} and both of the vector fields $\bm{A}^1$, $\bm{A}^2$ are divergence free at all times in $[0,T]$ then there exists a $T_*\in(0,T]$ such that $(\psi^1,\bm{A}^1)$ and $(\psi^2,\bm{A}^2)$ agree on the time interval $[0,T_*]$.
\end{lemma}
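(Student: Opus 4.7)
The plan is to reduce the claim to uniqueness of the fixed point of the mapping $\Phi$ from Section \ref{Contraction} on a suitable complete metric space $(\mathcal{Z}_{T_*}, d)$.

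First, I would show that for each $k\in\{1,2\}$ there exist $R_1, R_2 > 0$ such that the restriction $(\psi^k, \bm{A}^k)\big|_{[0,T']}$ lies in $\mathcal{Z}_{T'}$ for every $T' \in (0,T]$. The $L^\infty(\mathcal{I}_{T'};H^2)$-bound on $\psi^k$ and the $L^\infty(\mathcal{I}_{T'};H^1)$-bound on $\bm{A}^k$ (via the embedding $H^{\frac{3}{2}} \hookrightarrow H^1$) follow from the continuity assumption and compactness of $[0,T]$, while divergence-freeness is part of the hypothesis. The only nontrivial requirement is that $\bm{A}^k \in W^{1,4}(\mathcal{I}_T; L^4)$. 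To extract this I would rewrite the wave equation for $\bm{A}^k$ as the Klein-Gordon equation
\begin{equation*}
(\Box+1)\bm{A}^k = \frac{4\pi}{c}\sum_{j=1}^N P\bm{J}_j[\psi^k,\bm{A}^k] + \bm{A}^k,
\end{equation*}
and apply Lemma \ref{KleinGordon} with $\sigma = \frac{3}{2}$, $q_0 = r_0 = 4$ and $(q_1, r_1) = (\infty, 2)$. The right-hand side lies in $L^\infty\bigl(\mathcal{I}_T; H^{\frac{1}{2}}\bigr) \hookrightarrow L^1\bigl(\mathcal{I}_T; H^{\frac{1}{2}}\bigr)$ by the current estimate \eqref{current} and boundedness of $P$ on $H^1$, so the Strichartz bound places both $\bm{A}^k$ and $\partial_t\bm{A}^k$ in $L^4(\mathcal{I}_T; L^4)$.

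Having placed both restrictions in a common $\mathcal{Z}_{T'}$, I would invoke Lemma \ref{fixedpoint} with $R$ chosen to dominate the $R_1, R_2$ above to pick $T_* \in (0,T]$ so small that $\Phi$ is a contraction on $(\mathcal{Z}_{T_*}, d)$. By construction of $\Phi$, together with Corollary \ref{existenceofevolution} for the Schr\"odinger part (its uniqueness packaged in Lemma \ref{integralformofequation} applied with $f = 0$) and the fact that $\mathscr{V}_{\bm{F}}(\cdot, 0)[\bm{A}_0, \bm{A}_1]$ in \eqref{KGsol} is the unique $C(\mathcal{I}_{T_*}; H^{\frac{3}{2}}) \cap C^1(\mathcal{I}_{T_*}; H^{\frac{1}{2}})$-solution of the associated Klein-Gordon initial value problem, being a fixed point of $\Phi$ in $\mathcal{Z}_{T_*}$ is equivalent to solving \eqref{MSPmaerke} with the prescribed initial data in the given class. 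Consequently both $(\psi^k, \bm{A}^k)\big|_{[0,T_*]}$ are fixed points of $\Phi$, and the Banach fixed-point theorem forces $d\bigl((\psi^1, \bm{A}^1), (\psi^2, \bm{A}^2)\bigr) = 0$; since $d$ dominates $\|\psi^1 - \psi^2\|_{L_{T_*}^\infty L^2}$ and $\|\bm{A}^1 - \bm{A}^2\|_{L_{T_*}^\infty H^{\frac{1}{2}}}$, continuity in $t$ then yields $(\psi^1, \bm{A}^1) = (\psi^2, \bm{A}^2)$ on $[0, T_*]$.

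The main obstacle is extracting the $W^{1,4}(\mathcal{I}_T; L^4)$-regularity of $\bm{A}^k$ from its Klein-Gordon reformulation; once this is in hand, everything else is bookkeeping. A purely direct alternative, avoiding any appeal to $\Phi$, would be to mimic the estimates \eqref{xidist} and \eqref{Bvidere} for the difference $(\psi^1 - \psi^2, \bm{A}^1 - \bm{A}^2)$ with vanishing initial data and close the inequality either by choosing $T_*$ small or by Gronwall, but the contraction reformulation is more economical.
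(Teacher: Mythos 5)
Your argument is correct and follows essentially the same route as the paper's proof: you identify each given solution with its Duhamel/evolution-operator representation (Lemma \ref{integralformofequation} with $f=0$ for the Schr\"odinger part, and uniqueness of the Klein--Gordon initial value problem together with the Strichartz estimate of Lemma \ref{KleinGordon} to place $\bm{A}^k$ in $W^{1,4}(\mathcal{I}_{T};L^4)$), so that both restrictions become fixed points of the contraction $\Phi$ on the $\mathcal{Z}_{T_*}$ furnished by Lemma \ref{fixedpoint}. The only difference is cosmetic: you establish the $W^{1,4}L^4$-membership before fixing the radii, which is if anything a cleaner ordering than the paper's.
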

\begin{proof}
For $\ell\in\{1,2\}$ let $(\psi^\ell,\bm{A}^\ell)$ satisfy the hypotheses of the lemma and choose with the help of Lemma \ref{fixedpoint} some radii
\begin{align*}
R_1,R_2>\max\bigl\{\|\psi^{\ell}\|_{L_T^\infty H^2},\|\bm{A}^{\ell}\|_{L_T^\infty H^1},\|\bm{A}^{\ell}\|_{W_T^{1,4}L^4}\big| \ell\in\{1,2\}\bigr\}
\end{align*}
and a time $T_*\in(0,T]$ such that $\Phi$ is a contraction on $\mathcal{Z}_{T_*}$. Then the vector field $\bm{B}=\bm{A}^{\ell}|_{\mathcal{I}_{T_*}}\in C\bigl(\mathcal{I}_{T_*};H^{\frac{3}{2}}\bigr)\cap C^1\bigl(\mathcal{I}_{T_*};H^{\frac{1}{2}}\bigr)$ solves the initial value problem \eqref{KGE}+\eqref{KGinit} on $\mathcal{I}_{T_*}$ with $(\psi,\bm{A})=\bigl(\psi^{\ell}|_{\mathcal{I}_{T_*}},\bm{A}^{\ell}|_{\mathcal{I}_{T_*}}\bigr)$ so by uniqueness of solutions to the Klein-Gordon initial value problem \cite[Theorem 3.2]{Sogge} we have 
\begin{align*}
\bm{A}^{\ell}|_{\mathcal{I}_{T_*}}(t)=\mathscr{V}_{\frac{4\pi}{c}\sum_{j=1}^N P\bm{J}_j[\psi^{\ell}|_{\mathcal{I}_{T_*}},\bm{A}^{\ell}|_{\mathcal{I}_{T_*}}]+\bm{A}^{\ell}|_{\mathcal{I}_{T_*}}}(t,0)[\bm{A}_0,\bm{A}_1]
\end{align*}
for $t\in\mathcal{I}_{T_*}$. We conclude that $\bm{A}^1|_{\mathcal{I}_{T_*}},\bm{A}^2|_{\mathcal{I}_{T_*}}\in W^{1,4}\bigl(\mathcal{I}_{T_*};L^4(\mathbb{R}^3;\mathbb{R}^3)\bigr)$ by Lemma \ref{KleinGordon}. Likewise, for $\ell\in\{1,2\}$ the map $\xi=\psi^{\ell}|_{\mathcal{I}_{T_*}}\in C(\mathcal{I}_{T_*};H^2)$ solves the initial value problem \eqref{LinSchr}+\eqref{LinSchrI} on $\mathcal{I}_{T_*}$ with $\bm{A}=\bm{A}^{\ell}|_{\mathcal{I}_{T_*}}$ so Lemma \ref{integralformofequation} gives that 
\begin{align*}
\psi^{\ell}|_{\mathcal{I}_{T_*}}(t)=\mathscr{U}_{\bm{A}^{\ell}|_{\mathcal{I}_{T_*}}}(t,0)\psi_0
\end{align*}
for $t\in\mathcal{I}_{T_*}$. Consequently, $\bigl(\psi^1|_{\mathcal{I}_{T_*}},\bm{A}^1|_{\mathcal{I}_{T_*}}\bigr)$ and $\bigl(\psi^2|_{\mathcal{I}_{T_*}},\bm{A}^2|_{\mathcal{I}_{T_*}}\bigr)$ are both fixed points for the contraction $\Phi:\mathcal{Z}_{T_*}\to \mathcal{Z}_{T_*}$, whereby we must have $\bigl(\psi^1|_{\mathcal{I}_{T_*}},\bm{A}^1|_{\mathcal{I}_{T_*}}\bigr)=\bigl(\psi^2|_{\mathcal{I}_{T_*}},\bm{A}^2|_{\mathcal{I}_{T_*}}\bigr)$.
\end{proof}
In fact, Lemma \ref{uniq1} holds true with $T_*=T$.
\begin{lemma}\label{uniq2}
Given $(\psi_0,\bm{A}_0,\bm{A}_1)\in H^2(\mathbb{R}^{3N})\times H^{\frac{3}{2}}(\mathbb{R}^3;\mathbb{R}^3)\times H^{\frac{1}{2}}(\mathbb{R}^3;\mathbb{R}^3)$ with $\mathrm{div}\bm{A}_0=\mathrm{div}\bm{A}_1=0$ and $T>0$ there exists at most one pair $(\psi,\bm{A})$ in the space $C(\mathcal{I}_{T};H^2)\times \bigl(C\bigl(\mathcal{I}_T;H^{\frac{3}{2}}\bigr)\cap C^1\bigl(\mathcal{I}_T;H^{\frac{1}{2}}\bigr)\bigr)$ that solves \eqref{MSPmaerke}$+$\eqref{initialcondi} and satisfies $\mathrm{div}\bm{A}(t)=0$ for all $t\in\mathcal{I}_T$.
\end{lemma}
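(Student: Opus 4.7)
The plan is to upgrade Lemma \ref{uniq1} to the full interval by a standard connectedness/continuation argument. Define
\begin{align*}
S=\bigl\{t\in\mathcal{I}_T\bigm| (\psi^1(s),\bm{A}^1(s),\partial_s\bm{A}^1(s))=(\psi^2(s),\bm{A}^2(s),\partial_s\bm{A}^2(s))\text{ for all }s\in[0,t]\bigr\}
\end{align*}
and set $T^{**}=\sup S$. By Lemma \ref{uniq1} we have $T^{**}>0$, and since $\psi^\ell\in C(\mathcal{I}_T;H^2)$ and $\bm{A}^\ell\in C(\mathcal{I}_T;H^{3/2})\cap C^1(\mathcal{I}_T;H^{1/2})$, the set $S$ is closed, so $T^{**}\in S$. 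The goal is to show $T^{**}=T$.

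Suppose for contradiction $T^{**}<T$. The idea is to re-apply Lemma \ref{uniq1} with $T^{**}$ in place of $0$. First I would check that the common values $\psi^{\star}_0:=\psi^\ell(T^{**})\in H^2$, $\bm{A}^{\star}_0:=\bm{A}^\ell(T^{**})\in H^{3/2}$ and $\bm{A}^{\star}_1:=\partial_t\bm{A}^\ell(T^{**})\in H^{1/2}$ are well-defined (independent of $\ell$) and satisfy $\mathrm{div}\bm{A}^{\star}_0=\mathrm{div}\bm{A}^{\star}_1=0$. The first divergence vanishes by hypothesis; the second follows because $\mathrm{div}\bm{A}^\ell(t)=0$ for every $t$ together with $\bm{A}^\ell\in C^1(\mathcal{I}_T;H^{1/2})$ implies $\mathrm{div}\partial_t\bm{A}^\ell\equiv 0$ in $H^{-1/2}$ by continuity of $\mathrm{div}:H^{1/2}\to H^{-1/2}$.

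Next I would introduce the time-shifted pairs $\bigl(\widetilde{\psi}^\ell(t),\widetilde{\bm{A}}^\ell(t)\bigr)=\bigl(\psi^\ell(T^{**}+t),\bm{A}^\ell(T^{**}+t)\bigr)$ on the interval $[0,T-T^{**}]$. Since \eqref{MSPmaerke} is autonomous in $t$, these shifted pairs both lie in $C([0,T-T^{**}];H^2)\times\bigl(C([0,T-T^{**}];H^{3/2})\cap C^1([0,T-T^{**}];H^{1/2})\bigr)$, solve \eqref{MSPmaerke} with initial data $(\psi^{\star}_0,\bm{A}^{\star}_0,\bm{A}^{\star}_1)$, and have divergence-free vector potentials. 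Lemma \ref{uniq1} then supplies some $\epsilon>0$ with $\widetilde{\psi}^1=\widetilde{\psi}^2$ and $\widetilde{\bm{A}}^1=\widetilde{\bm{A}}^2$ on $[0,\epsilon]$, so the original solutions agree on $[0,T^{**}+\epsilon]$, contradicting the definition of $T^{**}$. Hence $T^{**}=T$ and the two solutions coincide on the entire interval $\mathcal{I}_T$.

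I do not anticipate a genuine obstacle here: the only points requiring care are the translation invariance of the system \eqref{MSPmaerke} (which is manifest) and the bookkeeping needed to verify that the initial value $\bm{A}^{\star}_1=\partial_t\bm{A}^\ell(T^{**})$ is indeed divergence free and independent of $\ell$. Both facts follow immediately from the $C^1$-regularity assumed in the statement and the agreement of the derivatives from the left at $T^{**}$ combined with continuity of $\partial_t\bm{A}^\ell$.
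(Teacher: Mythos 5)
Your proposal is correct and follows essentially the same route as the paper's own proof: define the supremum of times up to which the solutions agree, use continuity of $\psi^\ell$, $\bm{A}^\ell$ and $\partial_t\bm{A}^\ell$ to see that the data coincide at that time, then time-shift and reapply Lemma \ref{uniq1} to extend the agreement and reach a contradiction. The extra bookkeeping you supply (well-definedness and divergence-freeness of the shifted initial data) is a valid filling-in of details the paper leaves implicit.
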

\begin{proof}
For $T>0$ and $(\psi_0,\bm{A}_0,\bm{A}_1)\in H^2(\mathbb{R}^{3N})\times H^{\frac{3}{2}}(\mathbb{R}^3;\mathbb{R}^3)\times H^{\frac{1}{2}}(\mathbb{R}^3;\mathbb{R}^3)$ with $\mathrm{div}\bm{A}_0=\mathrm{div}\bm{A}_1=0$ consider two solutions $(\psi^1,\bm{A}^1)$ and $(\psi^2,\bm{A}^2)$ to \eqref{MSPmaerke}+\eqref{initialcondi} that belong to $C(\mathcal{I}_{T};H^2)\times \bigl(C\bigl(\mathcal{I}_T;H^{\frac{3}{2}}\bigr)\cap C^1\bigl(\mathcal{I}_T;H^{\frac{1}{2}}\bigr)\bigr)$ and satisfy $\mathrm{div}\bm{A}^1(t)=\mathrm{div}\bm{A}^2(t)=0$ for all $t\in\mathcal{I}_T$. Then the continuity of the mappings $\psi^1,\psi^2:\mathcal{I}_T\to H^2$, $\bm{A}^1,\bm{A}^2:\mathcal{I}_T\to H^{\frac{3}{2}}$ and $\partial_t\bm{A}^1,\partial_t\bm{A}^2:\mathcal{I}_T\to H^{\frac{1}{2}}$ gives that the number
\begin{align*}
t_0=\sup\bigl\{t\in[0,T]\, \big|\, (\psi^1,\bm{A}^1)=(\psi^2,\bm{A}^2)\textrm{ on }[0,t]\bigr\}
\end{align*}
satisfies
\begin{align*}
(\psi^1(t_0),\bm{A}^1(t_0),\partial_t\bm{A}^1(t_0))=(\psi^2(t_0),\bm{A}^2(t_0),\partial_t\bm{A}^2(t_0)).
\end{align*}
With the intention of reaching a contradiction we assume that $t_0<T$. Then for $\ell\in\{1,2\}$ the pair $\bigl(\widetilde{\psi}^\ell,\widetilde{\bm{A}}^\ell\bigr)\in C(\mathcal{I}_{T-t_0};H^2)\times \bigl(C\bigl(\mathcal{I}_{T-t_0};H^{\frac{3}{2}}\bigr)\cap C^1\bigl(\mathcal{I}_{T-t_0};H^{\frac{1}{2}}\bigr)\bigr)$ given by
\begin{align*}
\bigl(\widetilde{\psi}^\ell(t),\widetilde{\bm{A}}^\ell(t)\bigr)=\bigl(\psi^\ell(t+t_0),\bm{A}^\ell(t+t_0)\bigr)\textrm{ for }t\in\mathcal{I}_{T-t_0}
\end{align*}
takes the initial values
\begin{align*}
\widetilde{\psi}^\ell(0)=\psi^1(t_0), \widetilde{\bm{A}}^\ell(0)=\bm{A}^1(t_0)\textrm{ and }\partial_t\widetilde{\bm{A}}^\ell(0)=\partial_t\bm{A}^1(t_0)
\end{align*}
and satisfies \eqref{MSPmaerke} on $\mathcal{I}_{T-t_0}$. Thus, Lemma \ref{uniq1} gives the existence of some time $T_*\in(0,T-t_0]$ such that $\bigl(\widetilde{\psi}^1,\widetilde{\bm{A}}^1\bigr)$ and $\bigl(\widetilde{\psi}^2,\widetilde{\bm{A}}^2\bigr)$ agree on $[0,T_*]$, whereby the pairs $(\psi^1,\bm{A}^1)$ and $(\psi^2,\bm{A}^2)$ agree on $[t_0,t_0+T_*]$. This contradicts the definition of $t_0$, whereby we can conclude that $(\psi^1,\bm{A}^1)$ and $(\psi^2,\bm{A}^2)$ agree on all of the interval $\mathcal{I}_T$.
\end{proof}

\renewcommand{\bibname}{References}
\bibliographystyle{plain}
\bibliography{Bibliography}
\end{document}